\definecolor{codehighlight}{RGB}{220, 255, 220}
\definecolor{operator}{RGB}{0, 102, 187}
\newcommand{\op}[1]{\textcolor{operator}{#1}}
\pgfplotsset{compat=1.15}
\newcommand{\T}{\mathcal{T}}
\newcommand{\A}{\mathcal{A}}
\newcommand{\sem}[1]{\llbracket#1\rrbracket}
\newcommand{\ignore}[1]{}
\newcommand{\abs}[1]{\lvert#1\rvert}
\newcommand\restr[2]{{\left.\kern-\nulldelimiterspace#1\vphantom{|}\right|_{#2}}}
\newcommand{\init}{\textsf{init}}
\newcommand{\Let}{\textbf{let}}
\newcommand{\Iif}{\textbf{if}}
\newcommand{\Then}{\textbf{then}}
\newcommand{\Eelse}{\textbf{else}}
\newcommand{\In}{\textbf{in}}
\newcommand{\Get}{\eta}
\newcommand{\da}{\!\!\downarrow}
\newcommand{\create}{\textsf{create}}
\newcommand{\new}{\textsf{new}}
\newcommand{\lock}{\textsf{lock}}
\newcommand{\unlock}{\textsf{unlock}}
\newcommand{\self}{\textsf{self}}
\newcommand{\sink}{\textsf{sink}}
\newcommand{\last}{\textsf{last}}
\newcommand{\loc}{\textsf{loc}}
\newcommand{\id}{\textsf{id}}
\newcommand{\V}{{\cal V}}
\newcommand{\N}{{\cal N}}
\newcommand{\E}{{\cal E}}
\newcommand{\X}{{\cal X}}
\newcommand{\XX}{{\textsf X}}
\newcommand{\R}{{\cal R}}
\newcommand{\G}{{\cal G}}
\newcommand{\D}{{\cal D}}
\newcommand{\C}{{\cal C}}
\newcommand{\I}{{\cal I}}
\renewcommand{\C}{{\cal C}}
\newcommand{\M}{\textsf{M}}
\newcommand{\MM}{\mathcal{M}}
\newcommand{\UM}{\mathcal{U_M}}
\newcommand{\To}{{\Rightarrow}}
\newcommand{\LockCentered}{{2}}
\newcommand{\ProtectionBased}{{1}}
\newcommand{\ModPB}{{1}'}
\newcommand{\WriteCentered}{{3}}
\newcommand{\ModWC}{{3}'}
\newcommand{\ModWCL}{{3'}}
\newcommand{\lTlW}{\textsf{last\_tl\_write}}
\newcommand{\lTlL}{\textsf{last\_tl\_lock}}
\newcommand{\lW}{\textsf{last\_write}}
\newcommand{\minLSince}{\textsf{min\_lockset\_since}}
\newcommand{\eval}{\textsf{eval}}
\newcommand{\CWCU}{{'}} 
\newcommand{\GetPrimeP}{\textsf{split}[\eta]} 
\newcommand{\GetPrime}{\textsf{split}[\eta]'}
\spnewtheorem*{lem}{Lemma}{\bfseries}{\itshape}
\spnewtheorem*{cor}{Corollary}{\bfseries}{\itshape}
\begin{document}

\title{Improving Thread-Modular\texorpdfstring{\\}{ }Abstract Interpretation}
%
%

\author{
    Michael Schwarz\inst{1}\and
    Simmo Saan\inst{2}\and
    Helmut Seidl\inst{1}\and
    Kalmer Apinis\inst{2}\and\texorpdfstring{\\}{}
    Julian Erhard\inst{1}\and
    Vesal Vojdani\inst{2}
}
\authorrunning{M. Schwarz et al.\ }
%
\institute{
    Technische Universit\"at M\"unchen, Garching, Germany\texorpdfstring{\\}{}
    \email{\{m.schwarz, helmut.seidl, julian.erhard\}@tum.de}\and
    University of Tartu, Tartu, Estonia\texorpdfstring{\\}{}
    \email{\{simmo.saan, kalmer.apinis, vesal.vojdani\}@ut.ee}
}

\maketitle              
\begin{abstract}
We give thread-modular non-relational value analyses as abstractions of a local trace semantics.
The semantics as well as the analyses are formulated by means of global invariants and
side-effecting constraint systems.
We show that a generalization of the analysis provided by the static analyzer \textsc{Goblint}
as well as a natural improvement of Antoine Min\'e's approach can be obtained as instances
of this general scheme. We show that these two analyses are incomparable w.r.t.\ precision
and provide a refinement which improves on both precision-wise.
%
We also report on a preliminary experimental comparison of the given analyses
on a meaningful suite of benchmarks.

\keywords{Concurrent systems, thread-modular abstract interpretation, collecting trace semantics, global invariants, side-effects}
\end{abstract}

\section{Introduction}\label{s:intro}

In a series of papers starting around 2012, Antoine Min\'e and his co-authors developed methods
for abstract interpretation of concurrent systems \cite{Mine2012,Mine2014,Mine2016,Mine2017,Mine2018},
which can be considered the \emph{gold standard} for thread-modular static analysis of these
systems.
The core analysis from \cite{Mine2012} consists of a refinement of data flow which takes schedulability into
account by propagating values written before unlock operations to corresponding lock operations ---
provided that appropriate side-conditions are met.
Due to these side-conditions, more flows are generally excluded than in approaches
as, e.g., \cite{De2011,Mukherjee2017}.
An alternative approach, suggested by Vojdani~\cite{Vojdani2010,Vojdani2016},
is realized in the static analyzer \textsc{Goblint}.
This analysis is not based on data flows.
Instead, for each global $g$, a set of mutexes that definitely protect accesses to $g$ is determined.
Then \emph{side-effects} during the analysis of the threads' local states are used
to accumulate an abstraction of the set of all possibly written values.
This base approach then is enhanced by means of \emph{privatization}
to account for exclusive manipulations by individual threads.
This approach is similar to the thread-local shape analysis of \citet{Gotsman07},
which infers lock-invariants~\cite{OHearn07} by privatizing carved-out sections
of the heap owned by a thread.
Despite its conceptual simplicity and perhaps to our surprise, it turns out the Vojdani
style analysis is \emph{not} subsumed by Min\'e's approach but is incomparable.
Since Min\'e's analysis is more precise on many examples, we highlight only non-subsumption
in the other direction here.

\begin{example}\label{e:incomparable}
We use sets of integers for abstracting int values.
Consider the following concurrent program with global variable \texttt{g} and local variables \texttt{x} and \texttt{y}, and
assume here that $g$ is intialized to $0$:
\begin{center}
\begin{minipage}[t]{5cm}
\begin{minted}{c}
main:
   y = |\op{create}|(t1);
   |\op{lock}|(a);
   |\op{lock}|(b);
   x = g;
   ...

\end{minted}
\end{minipage}
\begin{minipage}[t]{4cm}
\begin{minted}{c}
t1:
   |\op{lock}|(a);
   |\op{lock}|(b);
   g = 42;
   |\op{unlock}|(a);
   g = 17;
   |\op{unlock}|(b);
\end{minted}
\end{minipage}
\end{center}
%
Program execution starts at program point \texttt{main} where, after creation
of another thread \texttt{t1} and locking of the mutexes \texttt{a} and \texttt{b}, the value of
the global \texttt{g} is read.
The created thread, on the other hand, also locks the mutexes \texttt{a} and \texttt{b}.
Then, it writes to \texttt{g} the two values 42 and 17 where mutex \texttt{a} is unlocked in-between the two writes,
and mutex \texttt{b} is unlocked only in the very end.

According to Min\'e's analysis, the value $\{42\}$ is merged into the local state at the operation
\textsf{lock}(\texttt{a}), while the value $\{17\}$ is merged at the operation \texttt{lock(b)}.
Thus, the local \texttt{x} receives the value $\{0,17,42\}$.

Vojdani's analysis, on the other hand, finds out that all accesses to \texttt{g} are protected by
the mutex \texttt{b}. Unlocking of \texttt{a}, therefore, does not publish the intermediately written
value $\{42\}$, but only the final value $\{17\}$ at \texttt{unlock(b)} is published.
Therefore, the local \texttt{x} only receives the value $\{0,17\}$.
\qed
\end{example}
The goal of this paper is to better understand
this intriguing incomparability and develop precision improvements to refine
these analyses. We concentrate only on the basic setting of \emph{non-relational}
analysis and a concurrent setting without precise thread \emph{id}s. We also ignore add-ons such as
thread priorities or
effects of weak memory, which are of major concern in \cite{Ferrara08,Alglave2011,Mine2016,Mine2018}.
As a common
framework for the comparison, we use \emph{side-effecting} constraint systems~\cite{apinis2012side}.
Constraint systems with side-effects extend ordinary constraint systems in that during
the evaluation of the right-hand side of one unknown, contributions to other unknowns
may also be triggered. This kind of formalism allows combining flow- and context-sensitive analysis
of the local state with flow- and context-insensitive analysis of globals.
Within the analyzer \textsc{Goblint}, this has been applied to the analysis
of multi-threaded systems~\cite{Vojdani2010,Vojdani2016}.
While in \textsc{Goblint} a single unknown is introduced per global,
we show how to express Min\'e's analysis in this formalism using
multiple unknowns per global.

To prove the given thread-modular analyses correct, we rely on a
trace semantics of the concurrent system.
Here, we insist on maintaining the \emph{local views}
of executing threads (\emph{ego threads}) only. The idea of tracking the events possibly affecting a particular
local thread configuration goes back to \cite{Lamport1978} (see also \cite{DistrAlgos}),
and is also used extensively for the verification of concurrent systems via separation logic
~\cite{Brookes2007,Nanevski2014,Nanevski2015, Nanevski2019}.
%
%
Accordingly, we collect all attained local configurations of threads affecting a
thread-local configuration $\bar u$ of an ego thread into the \emph{local trace}
reaching $\bar u$.
A \emph{thread-local} concrete semantics was also used in \citet{Mukherjee2017} for proving
the correctness of thread-modular analyses. The semantics there, however, is based on
\emph{interleaving} and permits \emph{stale} values for unread globals.
In contrast,
we consider a \emph{partial order} of past events and explicitly exclude the \emph{values}
of globals from local traces. These are instead
recovered from the local trace by searching for the \emph{last preceding write} at the point
when the value of the global is accessed.
%
%
We show that the set of all local traces
can conveniently be described by the least solution of a side-effecting constraint system
which is of a form quite similar to the ones used by the analyses and thus well-suited
for proving their correctness.

Having formulated both the analyses of Min\'e \cite{Mine2012} and
	Vojdani \cite{Vojdani2010,Vojdani2016}  by means of side-effecting constraint systems,
our contributions thus can be summarized as follows:
\begin{itemize}
\item 	we provide enhancements of each of these analyses which significantly
	increase their precision --- but still are incomparable;
\item 	since both analyses are expressed within the same framework, these improved
	versions can be integrated into one combined analysis;
\item	we prove the new analyses correct relative to a concrete local trace semantics
	of concurrent programs;
\item	we provide implementations of the new analyses to
	compare their precision and efficiency.
\end{itemize}

\noindent
The paper is organized as follows.
After a brief introduction into side-effecting constraint systems in \cref{s:side},
we introduce our toy language for which the concrete local trace semantics as well as the
analyses are formalized and indicate its operational semantics (\cref{s:lang}).
Our analyses then are provided in \cref{s:analysis}, while their correctness proofs
are deferred to \cref{s:correct}.
The experimental evaluation is provided in \cref{s:experimental}.
\cref{s:conclusion} finally concludes.

\section{Side-effecting Systems of Constraints}\label{s:side}

In \cite{apinis2012side}, side-effecting systems of constraints are advocated as a convenient
framework for formalizing the accumulation of flow- (and possibly also context-) sensitive information
together with flow- (as well as context-) insensitive information.
Assume that
$\XX$ is a set of unknowns where for each $x\in\XX$,
$\D_x$ is a complete lattice of possible (abstract or concrete) values of $x$.
Let $\D$ denote the disjoint union of all sets $\D_x$.
Let $\XX\To\D$ denote the set of all mappings
$\eta:\XX\to\D$ where $\eta\,x\in\D_x$.
Technically, a (side-effecting) constraint takes the form
$x\sqsupseteq f_x$
where $x\in\XX$ is the left-hand side and the right-hand side
$f_x:(\XX\To\D)\to((\XX\To\D)\times\D_x)$ takes
a mapping $\eta:\XX\To\D$, while returning
a collection of side-effects to other unknowns in $\XX$ together with the
contribution to the left-hand side.

Let $\C$ denote a set of such constraints.
A mapping $\eta:\XX\To\D$ is called \emph{solution} of $\C$ if for all constraints
$x\sqsupseteq f_x$ of $\C$, it holds for $(\eta',d) = f_x\,\eta$  that
$\eta\sqsupseteq\eta'$ and
$\eta\,x\sqsupseteq d$; that is,
all side-effects of the right-hand side
and its contribution to the left-hand side are accounted for by $\eta$.
Assuming that all right-hand sides
are monotonic,
the system $\C$ is known to have a \emph{least} solution.

\section{A Local Trace Semantics}\label{s:local}\label{s:traces}\label{s:lang}

Let us assume that there are disjoint sets $\X,\G$ of local and global variables which
take values from some set $\V$ of values.
Values may be of built-in types to compute with, e.g., of type $\textbf{int}$,
or a thread \emph{id} from a subset $\I\subseteq\V$.
The latter values are assumed to be \emph{abstract}, i.e.,
can only be compared with other thread \emph{id}s for equality.
We implicitly assume that all programs are well-typed; i.e., a variable
either always holds thread \emph{id}s or \textbf{int} values.
Moreover, there is one particular local variable  $\self\in\X$ holding
the thread \emph{id} of the current thread which is only implicitly assigned
at program start or when creating the thread.
Before program execution, global variables
are assumed to be uninitialized and will receive initial values via assignments
from the main thread, while local variables (except for $\self$)
may initially have any value.
%
%
Finally, we assume that there is a set $\M$ of mutexes.
A \emph{local program state} thus is a mapping $\sigma:\X\to\V$
where $\sigma\,\self\in\I$.
Let $\Sigma$ denote the set of all local program states.

Let $\A$ denote the set of actions.
\ignore{
This set should provide
guards $\textsf{exp}?$ and assignments $x=\textsf{exp}$ for expressions $\textsf{exp}$ which
refer to local variables only.
Later, we will add further primitives, for thread creation, for
locking/unlocking as well as for writing to and reading from globals.
}
Each thread is assumed to be represented by some control-flow graph
where each edge $e$ is of the form $(u,A,u')$ for program points $u,u'$ and action $A$.
Let $\N$ and $\E$ denote the sets of all program points and control-flow edges.
%
%
Let $\cal T$ denote a set of \emph{local traces}.
A local trace should be understood as the \emph{view} of a particular thread, the \emph{ego} thread,
on the global execution of the system.
Each $t\in\T$ ends at some program point $u$
with local state $\sigma$ where the ego thread \emph{id} is given by $\sigma\,\self$.
This pair $(u,\sigma)$ can be extracted from $t$ via the mapping
$
\sink:\T\to\N\times\Sigma
$.
For a local trace $t$ and local variable $x$, we also write $t(x)$ for the
result of $\sigma(x)$ if $\sink\,t = (u,\sigma)$.
Likewise, the functions $\id:\T\to\I$ and $\loc:\T\to\N$ return the thread \emph{id} and the
program point of the unique sink node, respectively.

We assume that there is a set $\init$ of initial local traces ${\bf 0}_\sigma$
with $\sink\,{\bf 0}_\sigma =(u_0,\sigma)$
where $u_0$ and $\sigma$ are the
start point and initial assignment to the local variables of the initial thread, respectively.
In particular, $\sigma\,\self = \mathit{0}$ for the initial thread \emph{id} $\mathit{0}$.
For every local trace that is not in $\init$ and where the ego thread has not just been started,
there should be a last action in $\A$ executed by the ego thread.
It can be extracted by means
of the function $\last:\T\to\A\cup\{\bot\}$.
For local traces in $\init$ or local traces where the ego thread has just been started,
$\last$ returns $\bot$.
For realizing thread creation,
we make the assumption that starting from $(u,\sigma)$, there is at most one outgoing edge
at which a thread is created.
%
For convenience, we also assume that each thread execution provides a unique program point $u_1$
at which the new thread is meant to start where
the local state of the created thread  agrees with the local state before thread creation --
only that the variable \textsf{self} receives a fresh value.
Accordingly, we require a function $\new:\N\to\T\to 2^\T$
so that $\new\,u_1\,t$ either returns the empty set, namely, when creation of a thread starting at point $u_1$ 
is not possible for $t$, or a set $\{t_1\}$
for a single trace $t_1$ if such thread creation is possible. In the latter case,
\[
\last(t_1) = \bot, \sink(t_1) = (u_1,\sigma_1)
\]
where for $\sink(t) = (u,\sigma)$,
$\sigma_1 = \sigma\oplus\{\self\mapsto\nu(t)\}$ for some function $\nu:\T\to\I$
providing us with a fresh thread \emph{id}.
If thread \emph{id}s are unique for a given creation history in $\T$, we
may identify the set $\I$ with $\T$ and let $\nu$ be the identity function.

For each edge $e=(u,A,u')$,
we also require an operation $\sem{e}:\T^k\to 2^\T$
where the arity $k$ for different actions may vary between 1 and 2
according to the arity of operations $A$ at the edges and where the
returned set either is empty (i.e., the operation is undefined), singleton (the operation is defined
and deterministic), or a larger set (the operation is non-deterministic, e.g., when reading
unknown input).
This function takes a local trace and extends it by executing
the action corresponding to edge $e$, thereby incorporating the matching trace from
the second argument set (if necessary and possible). In particular for
$t\in\sem{e}(t_0,\ldots,t_{k-1})$, necessarily,
\begin{equation}
\loc(t_0) = u,\quad
\last(t) = A\quad\text{and}\quad
\loc(t) = u'
\label{e:last}
\end{equation}
The set $\T$ of all local traces is the least
solution of the constraints
\begin{equation}
\begin{array}{llll}
\T	&\supseteq	&\textbf{fun}\,\_\to(\emptyset,\init) \\
\T	&\supseteq	&\textbf{fun}\,\T\to(\emptyset,\new\,u_1\,\T),&\qquad (u_1\in\N)	\\
\T	&\supseteq	&\textbf{fun}\,\T\to(\emptyset,\sem{e}(\T,\ldots,\T)),&\qquad(e\in\E)
\end{array}\label{e:global}
\end{equation}
where sets of side-effects are empty.
Here (and subsequently), we abbreviate for functions $f:\T^k\to 2^\T$ and subsets $T_0,\ldots, T_{k-1}\subseteq\T$,
the longish formula $\bigcup\{f(t_0,\ldots,t_{k-1})\mid t_0\in T_0,\ldots,t_{k-1}\in T_{k-1}\}$ to
$f(T_0,\ldots,T_{k-1})$.

The constraint system \eqref{e:global} globally collects all local traces into
one set $\T$. It serves as the definition of all (valid) local traces (relative
to the definitions of the functions $\sem{e}$ and $\new$)
and thus, as our reference trace semantics.
Subsequently, we provide a \emph{local} constraint system for these traces.
Instead of collecting one big set,
the local constraint system introduces unknowns $[u],u\in\N$, together with
individual constraints for each control-flow edge $e\in\E$.
The value for unknown $[u]$ is meant to collect the set of
those local traces $t$ that reach program point $u$ (i.e., $\loc(t) = u$),
while the constraints for edges
describe the possible relationships between these sets --- quite as for
the trace semantics of a sequential programming language.
In order to deal with concurrency appropriately, we additionally introduce
unknowns
$[m],m\in\M$ for mutexes.
%
These unknowns will not have right-hand sides on their own but receive their values via side-effects.
%
In general, we will have the following constraints
\begin{equation}
\begin{array}{llll}
\relax[u_0]	&\supseteq	&\textbf{fun}\,\_\to(\emptyset, \init) \\
\relax[u']	&\supseteq	&\sem{u,A},&\qquad((u,A,u')\in\E)
\end{array}\label{e:local}
\end{equation}
where the concrete form of the right-hand side $\sem{u,A}$ depends on the action $A$
of the corresponding edge $e=(u,A,u')$.
%
%
%
In the following, we detail how the constraints corresponding to the various
actions are constructed.

\subsection{Thread Creation}\label{ss:threads}
%
Recall that we assume that within the set $\X$ of local variables,
we have one dedicated variable \self\ holding the thread \emph{id} of the ego thread.
In order to deal with thread creation
the set $\A$ of actions provides the $x = \create(u_1);$ operation
where $u_1$ is a program point where thread execution should start, and $x$ is a local variable
which is meant to receive the thread \emph{id} of the created thread.
The effect of \create\ is modeled as a side-effect to the program point $u_1$.
%
%
This means for a program point $u$:
\[
\begin{array}{lll}
\sem{u,x = \textsf{create}(u_1)}\,\Get	&=&
		\Let\;T = \sem{e}(\Get\,[u])\;\In	\\
&&		(\{[u_1]\mapsto\new\,u_1\,(\Get\,[u])\},T)	\\
\end{array}
\]
%

\subsection{Locking and Unlocking}\label{ss:locks}

For simplicity, we only consider a fixed finite set $\M$ of mutexes.
If instead a semantics with
dynamically created mutexes were to be formalized, we could identify mutexes, e.g.,
via the local trace of the creating thread (as we did for threads).
For a fixed set $\M$ of mutexes, the set
$\cal A$ of actions provides operations $\lock(a)$ and $\unlock(a)$,
$a\in \M$, where these operations are assumed to return no value, i.e.,
do always succeed.
Additionally, we assume that $\unlock(a)$ for $a\in \M$ is only called by a thread
currently holding the lock of $a$, and that mutexes are not re-entrant;
i.e., trying to lock a mutex already held is undefined.
%
For convenience, we initialize the unknowns $[a]$ for $a\in \M$ to $\init$.
Then we set
\[
\begin{array}{lll}
\sem{u,\lock(a)}\,\Get &=& (\emptyset,\sem{e}(\Get\,[u],\Get\,[a]))
\\[2ex]
\sem{u,\unlock(a)}\,\Get
	&=&	\Let\;T = \sem{e}(\Get\,[u])\;\In	\\
& &		(\{[a]\mapsto T\}, T)
\end{array}
\]

\subsection{Local and Global Variables}\label{ss:globals}

Expressions $r$ occurring as guards as well as non-variable right-hand sides of assignments
may refer to local variables only. For these, we assume an evaluation function
$\sem{\,.\,}$ so that for each $\sigma:\X\to\V$, $\sem{r}\,\sigma$ returns a value in $\V$.
For convenience, we here encode boolean values as integers where 0 denotes \emph{false} and
every non-zero value \emph{true}.
This evaluation function $\sem{\,.\,}$ allows defining the semantics $\sem{e}$
of a control-flow edge $e$ whose action $A$ is either a guard or an assignment to a local variable.
Since no side-effect is triggered, we have
\[
\sem{u,A}\,\eta = (\emptyset,\sem{e}(\eta\,[u]))
\]
For reading from and writing to globals, we consider the actions
$g=x;$ (copy value of the local $x$ into the global $g$) and
$x=g;$ (copy value of the global $g$ into the local $x$) only.
Thus, $g = g+1;$ for global $g$ is not directly supported by our language
but must be simulated by reading from $g$ into a local, followed by incrementing the local whose
result is eventually written back into $g$.

We assume for the concrete semantics that program execution is always
\emph{sequentially consistent}, and that both reads and writes to globals are \emph{atomic}.
The latter is enforced by introducing a dedicated mutex $m_g\in\M$ for each global $g$
which is acquired before $g$ is accessed
and subsequently released.
This means that each access $A$ to $g$ occurs as
\mintinline{c}{|\op{lock}|(m|$_{\,\texttt{g}}$|); A; |\op{unlock}|(m|$_{\,\texttt{g}}$|);}.

Under this proviso, the current value of each global $g$ read by some thread
can be determined just by inspection of the current local trace.
We have
\[
\begin{array}{lll}
\sem{u,x=g}\,\Get	&=& (\emptyset,\sem{e}(\Get\,[u]))	\\
\sem{u,g=x}\,\Get	&=& (\emptyset,\sem{e}(\Get\,[u]))	\\
\end{array}
\]
i.e., both reading from and writing to global $g$ is a transformation of individual local traces only.

\subsection{Completeness of the Local Constraint System}\label{ss:completeness}

\noindent With the following assumption on $\sem{e}$ in addition to \cref{e:last},
\begin{itemize}
	\item The binary operation $\sem{(u,\lock(a),u')}(t_0,t_1)$ only returns a non-empty set
		if $t_1 \in \init \lor \last(t_1) = (\unlock(a))$, i.e.,
		locking only incorporates local traces from the set $\init$ or local traces ending
		in a corresponding $\unlock(a)$
\end{itemize}

\noindent we obtain:

\begin{theorem}\label{t:main}
Let $\T$ denote the least solution of the global constraint system \eqref{e:global},
and $\Get$ denote the least solution of the local constraint system \eqref{e:local}.
Then
\begin{enumerate}
\item	$\Get\,[u] = \{t\in\T\mid\loc(t) = u\} \quad$ for all $u\in\N$;
\item	$\Get\,[a] = \init \cup \{t\in\T\mid\last(t)=(\unlock(a))\} \quad$
					for all $a\in\M$.
\end{enumerate}
\end{theorem}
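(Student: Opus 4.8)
The plan is to prove both equalities as a pair of inclusions, each obtained from the relevant leastness property. Write $\zeta$ for the candidate assignment given by the claimed right-hand sides, i.e.\ $\zeta\,[u]=\{t\in\T\mid\loc(t)=u\}$ for $u\in\N$ and $\zeta\,[a]=\init\cup\{t\in\T\mid\last(t)=\unlock(a)\}$ for $a\in\M$. I would establish $\Get\sqsubseteq\zeta$ by verifying that $\zeta$ is a \emph{solution} of the local system \eqref{e:local}, and $\zeta\sqsubseteq\Get$ by verifying that $T':=\bigcup_{u\in\N}\Get\,[u]$ is a solution of the global system \eqref{e:global}; leastness of $\Get$ and of $\T$ then yields the two directions.

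For $\Get\sqsubseteq\zeta$ I would check each constraint of \eqref{e:local} at $\zeta$. The initial constraint holds because $\sink\,{\bf 0}_\sigma=(u_0,\sigma)$ puts $\init$ into $\zeta\,[u_0]$, and the initialization of $[a]$ holds since $\init\subseteq\zeta\,[a]$ by definition. For an edge $e=(u,A,u')$ the contribution to $[u']$ is $\sem{e}$ applied to the relevant components of $\zeta$; as these components are subsets of $\T$, the global constraints $\T\supseteq\sem{e}(\T,\ldots,\T)$ together with monotonicity of $\sem{e}$ force the contribution into $\T$, and \eqref{e:last} pins its sink point to $u'$, so it lands in $\zeta\,[u']$. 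The $\lock$ case is the only binary one and is handled by monotonicity in its second argument ($\zeta\,[a]\subseteq\T$); the $\create$ and $\unlock$ cases additionally emit side-effects, to $[u_1]$ and to $[a]$ respectively, which are absorbed using the $\new$/$\sink$ property ($\loc=u_1$ on created traces) and \eqref{e:last} ($\last=\unlock(a)$ on the side-effected traces).

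For $\zeta\sqsubseteq\Get$ the key device is that, since $\Get$ is not merely a solution but the \emph{least} one, it is a fixpoint: each $\Get\,[w]$ equals the union of the contributions of the constraints with left-hand side $[w]$ and of the side-effects targeting $[w]$, all evaluated at $\Get$. From \eqref{e:last}, and the fact that initial and freshly-created traces have $\last=\bot$, it then follows that any $t\in\Get\,[w]$ with $\last(t)=\unlock(a)$ can only arise as a contribution of an $\unlock(a)$-edge, whose right-hand side side-effects exactly that contribution to $[a]$; hence $t\in\Get\,[a]$. Using this, I would verify that $T'$ solves \eqref{e:global}: the $\init$- and $\new$-constraints follow from the corresponding local constraints and, via the inclusion $\Get\sqsubseteq\zeta$ already shown together with the single-valuedness of $\loc$, from the fact that any $t\in T'$ with sink point $u$ already lies in $\Get\,[u]$; the $\sem{e}$-constraint for unary $e$ is likewise immediate, and for the binary $\lock(a)$-edge the completeness assumption restricts the second argument to $\init\cup\{t\mid\last(t)=\unlock(a)\}$, which the observation above places inside $\Get\,[a]$. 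Leastness of $\T$ then gives $\T\subseteq T'$, and combining this with $\Get\,[u]\subseteq\{t\in\T\mid\loc(t)=u\}$ (again using that $\loc$ is single-valued) yields equality in item~1; equality in item~2 follows from $\init\subseteq\Get\,[a]$ together with the same observation applied to any $t\in\T$ with $\last(t)=\unlock(a)$, which lies in $T'$.

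I expect the main obstacle to be exactly this last, two-argument bridging step: the global system feeds the full set $\T$ into both arguments of the $\lock$ operation, whereas the local system feeds only $\Get\,[a]$, so the proof must show that $\Get\,[a]$ already contains every $\unlock(a)$-terminated trace the lock could legitimately absorb. This is where the completeness assumption on $\sem{(u,\lock(a),u')}$ and the last-action invariant \eqref{e:last} are indispensable, and where one must reason about the \emph{provenance} of elements in the least solution rather than merely about the solution property; the fixpoint characterization above is what makes this rigorous and, conveniently, discharges item~2 at the same time.
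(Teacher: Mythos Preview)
Your proposal is correct. The paper itself does not supply a proof of this theorem; it merely remarks that the result holds for any trace formalism meeting the stated assumptions and moves on. Your two-inclusion scheme---show that $\zeta$ solves \eqref{e:local} to get $\Get\sqsubseteq\zeta$, then show that $T'=\bigcup_u\Get\,[u]$ solves \eqref{e:global} to get $\T\subseteq T'$---is the natural way to discharge it, and the places where you invoke the fixpoint property of the \emph{least} solution (to trace the provenance of an $\unlock(a)$-terminated element of $\Get\,[w]$ back to an $\unlock(a)$-edge that simultaneously side-effects it to $[a]$) are exactly where the argument needs more than the mere post-solution inequality. Two small points worth making explicit when you write it up: for the $\new$-constraint of \eqref{e:global} you are silently using that $\new\,u_1\,t=\emptyset$ whenever $\loc(t)$ has no outgoing $\create(u_1)$-edge (so only the $\Get\,[u]$ with such an edge contribute, and those are covered by the local side-effect to $[u_1]$); and for the fixpoint characterization you rely on monotonicity of all right-hand sides, which the paper assumes but which you should state.
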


\noindent
In fact, \cref{t:main} holds for any formalism for traces matching these assumptions.
Before detailing an example trace formalism in \cref{s:details}, we proceed with an intuitive example.

\begin{example}\label{e:intuitive-traces}
Consider the following program and assume that execution starts at program point $u_0$.
\begin{center}
    \begin{minipage}[t]{5cm}
    \begin{minted}{c}
    u0:
       x = |\op{create}|(u6);
       |\op{lock}|(m|$_{\,\texttt{g}}$|);
       g = 1;
       |\op{unlock}|(m|$_{\,\texttt{g}}$|);
    \end{minted}
    \end{minipage}
    \begin{minipage}[t]{4cm}
    \begin{minted}{c}
    u6:
       y = 1;
       |\op{lock}|(m|$_{\,\texttt{g}}$|);
       g = 2;
       |\op{unlock}|(m|$_{\,\texttt{g}}$|);

    \end{minted}
    \end{minipage}
\end{center}
\noindent In this example, one of the traces in the set $\init$ of initial local traces is the trace $0_\sigma$
with $\sink\,0_\sigma = (u_0, \sigma_{u_0}) = (u_0, \{x \mapsto \mathit{0}, \self \mapsto \mathit{0} \})$;
i.e., local variable $x$ has value $\textit{0}$, $y$ has value $0$, and the initial thread has thread \emph{id} $\textit{0}$.
One of the traces reaching program point $u_1$ is $t$ which is obtained by prolonging
$0_\sigma$ where $\sink\,t = (u_1,\sigma_{u_1}) = (u_1, \{x \mapsto \mathit{1}, \self \mapsto \mathit{0} \})$.
We abbreviate $\bar{u_k}$ for $(u_k,\sigma_{u_k})$ and show traces as graphs (\cref{f:tracesmg}).
\ignore{For example, for $t$ we write:
\tikzset{
every node/.style={node distance=55pt},
programpoint/.style={circle,draw,font=\small},
edgelabel/.style={midway,font=\small,above=0.5mm}}

\begin{tikzpicture}
    \node[programpoint](pp0){$\bar{u_0}$};
    \node[programpoint,right=of pp0](pp1){};

    \draw[->](pp0)--(pp1)node[edgelabel]{$x=\create(u_6)$};
\end{tikzpicture}}
\noindent Since $\sem{\lock(m_g)}$ is a binary operation, to compute the set of local traces reaching $u_2$,
not only the local traces reaching its predecessor $u_1$
but also those traces stored at the constraint system unknown $[m_g]$
need to be considered.

\cref{f:tracesmg} shows all local traces starting with $\bar{u_0}$ stored at $[m_g]$, i.e.,
all local traces in which the last action of the ego thread is $\unlock(m_g)$ (that start with $\bar{u_0}$).
Out of these, traces (a) and (c) are compatible with $t$.
Prolonging the resulting traces for the following assignment and unlock operations leads to traces
(b) and (d) reaching the program point
after the $\unlock(m_g)$ in this thread. Therefore, (b) and (d) are among those traces that are side-effected to $[m_g]$.

\tikzset{
every node/.style={node distance=35pt and 55pt},
programpoint/.style={circle,draw,font=\small},
edgelabel/.style={midway,font=\small,above=0.5mm}}
\begin{figure*}[ht!]
    \begin{subfigure}{0.14\textwidth}
        \center
        \begin{tikzpicture}
            \node[programpoint](pp0){$\bar{u_0}$};
        \end{tikzpicture}
        \caption{}
    \end{subfigure}
    \begin{subfigure}{0.85\textwidth}
        \center
    \begin{tikzpicture}
        \node[programpoint](pp0){$\bar{u_0}$};
        \node[programpoint,right=of pp0](pp1){};
        \node[programpoint,right=of pp1](pp2){};
        \node[programpoint,right=of pp2](pp3){};
        \node[programpoint,right=of pp3](pp4){};

        \draw[->](pp0)--(pp1)node[edgelabel]{$x=\create(u_6)$};
        \draw[->](pp1)--(pp2)node[edgelabel]{$\lock(m_g)$};
        \draw[->](pp2)--(pp3)node[edgelabel]{$g=1$};
        \draw[->](pp3)--(pp4)node[edgelabel]{$\unlock(m_g)$};

        \draw[->,red, out=340,in=220](pp0) to node[edgelabel, below=0.5mm]{$\to_{m_g}$} (pp2);
    \end{tikzpicture}
    \caption{}
    \end{subfigure}
    \begin{subfigure}{1.0\textwidth}
    \center
    \begin{tikzpicture}
        \node[programpoint](pp0){$\bar{u_0}$};
        \node[programpoint,below right=of pp0](ppa){$\bar{u_6}$};
        \node[programpoint,right=of ppa](ppb){};
        \node[programpoint,right=of ppb](ppc){};
        \node[programpoint,right=of ppc](ppd){};
        \node[programpoint,right=of ppd](ppe){};

        \draw[->,blue](pp0)--(ppa)node[edgelabel,left=0.5mm]{$\to_c$};

        \draw[->](ppa)--(ppb)node[edgelabel]{$y=1$};
        \draw[->](ppb)--(ppc)node[edgelabel]{$\lock(m_g)$};
        \draw[->](ppc)--(ppd)node[edgelabel]{$g=2$};
        \draw[->](ppd)--(ppe)node[edgelabel]{$\unlock(m_g)$};

        \draw[->,red, out=330,in=125](pp0) to node[edgelabel, below=0.5mm]{$\to_{m_g}$} (ppc);
    \end{tikzpicture}
    \caption{}
    \label{sf:ax}
    \end{subfigure}
    \begin{subfigure}{1.0\textwidth}
        \center
    \begin{tikzpicture}
        \node[programpoint](pp0){$\bar{u_0}$};

        \node[programpoint,right=of pp0](pp1){};
        \node[programpoint,right=of pp1](pp2){};
        \node[programpoint,right=of pp2](pp3){};
        \node[programpoint,right=of pp3](pp4){};

        \node[programpoint,below right=of pp0](ppa){$\bar{u_6}$};
        \node[programpoint,right=of ppa](ppb){};
        \node[programpoint,right=of ppb](ppc){};
        \node[programpoint,right=of ppc](ppd){};
        \node[programpoint,right=of ppd](ppe){};

        \draw[->](pp0)--(pp1)node[edgelabel]{$x=\create(u_6)$};
        \draw[->](pp1)--(pp2)node[edgelabel]{$\lock(m_g)$};
        \draw[->](pp2)--(pp3)node[edgelabel]{$g=1$};
        \draw[->](pp3)--(pp4)node[edgelabel]{$\unlock(m_g)$};

        \draw[->,blue](pp0)--(ppa)node[edgelabel, left=0.5mm]{$\to_c$};

        \draw[->](ppa)--(ppb)node[edgelabel]{$y=1$};
        \draw[->](ppb)--(ppc)node[edgelabel]{$\lock(m_g)$};
        \draw[->](ppc)--(ppd)node[edgelabel]{$g=2$};
        \draw[->](ppd)--(ppe)node[edgelabel]{$\unlock(m_g)$};

        \draw[->,red, out=330,in=135](pp0) to node[edgelabel, below=0.5mm]{$\to_{m_g}$} (ppc);
        \draw[->,red, out=100,in=210](ppe) to node[edgelabel, below=0.5mm]{$\to_{m_g}$} (pp2);
    \end{tikzpicture}
    \caption{}
\end{subfigure}
\begin{subfigure}{1.0\textwidth}
    \center
    \begin{tikzpicture}
        \node[programpoint](pp0){$\bar{u_0}$};

        \node[programpoint,right=of pp0](pp1){};
        \node[programpoint,right=of pp1](pp2){};
        \node[programpoint,right=of pp2](pp3){};
        \node[programpoint,right=of pp3](pp4){};

        \node[programpoint,below right=of pp0](ppa){$\bar{u_6}$};
        \node[programpoint,right=of ppa](ppb){};
        \node[programpoint,right=of ppb](ppc){};
        \node[programpoint,right=of ppc](ppd){};
        \node[programpoint,right=of ppd](ppe){};

        \draw[->](pp0)--(pp1)node[edgelabel]{$x=\create(u_6)$};
        \draw[->](pp1)--(pp2)node[edgelabel]{$\lock(m_g)$};
        \draw[->](pp2)--(pp3)node[edgelabel]{$g=1$};
        \draw[->](pp3)--(pp4)node[edgelabel]{$\unlock(m_g)$};

        \draw[->,blue](pp0)--(ppa)node[edgelabel,left=0.5mm]{$\to_c$};

        \draw[->](ppa)--(ppb)node[edgelabel]{$y=1$};
        \draw[->](ppb)--(ppc)node[edgelabel]{$\lock(m_g)$};
        \draw[->](ppc)--(ppd)node[edgelabel]{$g=2$};
        \draw[->](ppd)--(ppe)node[edgelabel]{$\unlock(m_g)$};

        \draw[->,red, out=340,in=220](pp0) to node[edgelabel, below=0.5mm]{$\to_{m_g}$} (pp2);
        \draw[->,red, out=230,in=130](pp4) to node[edgelabel, below=0.5mm]{$\to_{m_g}$} (ppc);
    \end{tikzpicture}
    \caption{}
\end{subfigure}
    \caption{Local traces of \cref{e:intuitive-traces} starting with $\bar{u_0}$ stored at $[m_g]$.}
    \label{f:tracesmg}
\end{figure*}
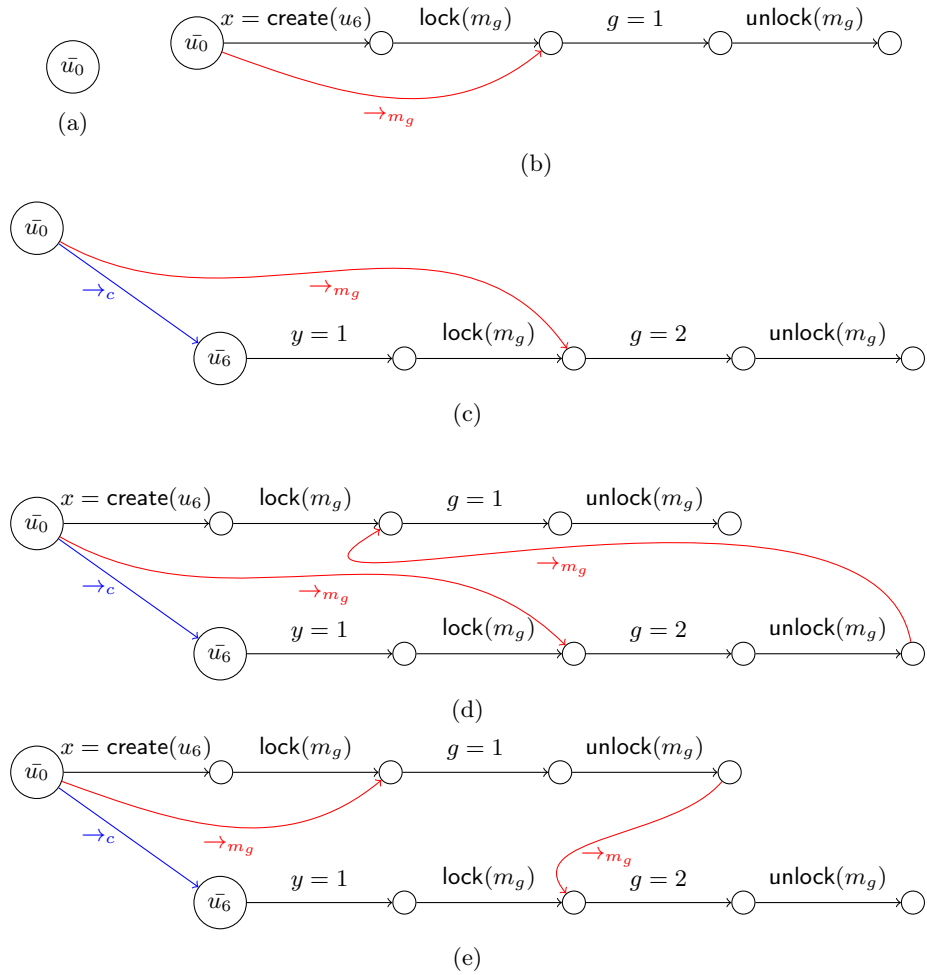
\end{example}

\subsection{Example Formalism for Local Traces}\label{s:details}
The concrete concurrency semantics imposes \emph{restrictions} onto when binary actions are defined.
In particular, binary operations $\sem{e}$ may only be defined for a pair $(t_0,t_1)$ if certain parts of $t_0$
and $t_1$ represent the same computation.
In order to make such restrictions explicit, we introduce a concrete representation of local traces.

A \emph{raw} (finite) trace of single thread $i\in\I$ is a sequence
$\lambda =\bar u_0a_1\ldots\bar u_{n-1}a_n\bar u_n$ for states
$\bar u_j = (u_j,\sigma_j)$ with
$\sigma_j\,\self = i$, and actions $a_j\in\A$ corresponding to
the local state transitions of the thread $i$ starting in configuration
$\bar u_0$ and executing actions $a_j$.
In that sequence, every action $\lock(m)$ is assumed to succeed,
and when accessing a global $g$, any value may be read.
We may view $\lambda$ as an acyclic graph whose nodes are the 3-tuples
$(j,u_j,\sigma_j), j= 0,\ldots,n,$ and
whose edges are $((j-1,u_{j-1},\sigma_{j-1}),a_j,(j,u_j,\sigma_j)), j= 1,\ldots,n$.
Let $V(\lambda)$ and $E(\lambda)$ denote the
set of nodes and edges of this graph, respectively.

Let $\Lambda(i)$ denote the set of all individual traces for thread $i$, and
$\Lambda$ the union of all these sets.

A \emph{raw global trace} of threads is an acyclic graph $\tau=(\V,\E)$ where
$\V = \bigcup\{V(\lambda_i)\mid i\in I\}$ and $\E = \bigcup\{E(\lambda_i)\mid i\in I\}$
for a set $I$ of thread \emph{id}s and raw local traces $\lambda_i\in\Lambda(i)$.
On the set $\V$, we define the \emph{(immediate) program order}
as the set of all pairs $\bar u\to_{p}\bar u'$ for which there is an edge $(\bar u,a,\bar u')$ in $\E$.
In order to formalize our notion of local traces, we extend the program order to a
\emph{causality order} which additionally takes the order into account in which threads are
created as well as the order in which mutex locks are acquired and released.

For $a\in\M$, let $a^+ \subseteq \V$ denote the set of nodes $\bar u$
where an incoming edge is labeled $\lock(a)$, i.e., $ \exists x\,(x,\lock(a),\bar u) \in \E$,
and $a^-$ analogously for $\unlock(a)$.
On the other hand, let $C$ denote the set of nodes with an \emph{outgoing} edge labeled $x'=\create(u_1)$ (for any local variable $x'$ and program point $u_1$).
Let $S$ denote the set of minimal nodes w.r.t.\ to $\to_p$, i.e., the points at which threads start and let $\bf 0$ the node $(0,u_0,\sigma_0)$ where $\sigma_0\,\self = 0$.

A \emph{global trace} $t$ then is represented by a tuple $(\tau,\to_c,(\to_a)_{a\in\M})$
where $\tau$ is a raw global trace and the relations $\to_c$ and $\to_a$ ($a\in\M$) are
the create and locking orders for the respective mutexes.
The \emph{causality order} $\leq$ of $t$ then is obtained as the reflexive and transitive closure
of the union $\to_p\cup\to_c\cup\bigcup_{a\in\M}\to_a$.
These orders should satisfy the following properties.
\begin{itemize}
\item	\textbf{Causality order} $\leq$ should be a partial order with unique least element
	$(0,u_0,\sigma_0)$ where $\sigma_0\,\self=0$;
\item \textbf{Create order}:
$\to_c \subseteq C \times (S\setminus \{ \textbf{0}\})$: 
$\forall s \in (S \setminus \{ \textbf{0}\}): \abs{\{ z \mid z \to_c s \}} = 1$, i.e.,
every thread except the initial thread is created by exactly one
$\create(...)$ action and $\forall x: \abs{\{ z \mid x \to_c z \}} \leq 1$, i.e.,
each $\create(...)$ action creates at most one thread.
Additionally, for $((j-1,u_{j-1},\sigma_{j-1}), x = \create(v),(j,u_{j},\sigma_{j})) \in \E$ and
    $(j-1,u_{j-1},\sigma_{j-1}) \to_{c} (0,v,\sigma'_0)$:
    $\sigma'_0 = \sigma_{j-1} \oplus \{ \self \mapsto i' \}$ for some thread \emph{id} $i'$ where
    $\sigma_j\,x = i'$, i.e., the creating and the created thread agree on the thread \emph{id}
    of the created thread and the values of locals.
    \item \textbf{Locking order}:
$\forall a \in \M: \to_a \subseteq (a^- \cup \textbf{0}) \times a^+$:
$\forall x: \abs{\{ z \mid x \to_a z \}} \leq 1$ and
$\forall y: \abs{\{ z \mid z \to_a y \}} = 1$, that is, for a mutex $a$ every lock is preceded
by exactly one unlock (or it is the first lock) of $a$, and each unlock is directly followed
by at most one lock.
\item	\textbf{Globals}:
Additionally, the following consistency condition on values read from globals needs to hold:
For $((j-1,u_{j-1},\sigma_{j-1}),x = g,(j,u_{j},\sigma_{j})) \in \E$,
there is a maximal node $(j',u_{j'},\sigma_{j'})$ w.r.t.\ to the causality order $\leq$ such that
$((j'-1,u_{j'-1},\sigma_{j'-1}),g = y,(j',u_{j'},\sigma_{j'})) \in \E$
and $(j',u_{j'},\sigma_{j'}) \leq (j-1,u_{j-1},\sigma_{j-1})$.
Then $\sigma_j\,x = \sigma_{j'-1}\,y$, i.e., the value read for a global is the last value written to it.
\end{itemize}

\noindent
A global trace $t$ is \emph{local} if it has a unique maximal element $\bar u = (j,u,\sigma)$
(w.r.t $\leq$). Then in particular, $\sink(t) = (u,\sigma)$.
The function $\last$ extracts the last action $A$ of the ego thread (if there is any) and returns $\bot$
otherwise. The partial functions $\new\,u$ for program points $u$ and
$\sem{e}$ for control-flow edges $e$ then are defined by extending
a given local trace appropriately.

\section{Static Analysis of Concurrent Programs}\label{s:analysis}

In the following, we present four analyses
which we will compare for precision 
and efficiency.
In the present paper, we are only interested in non-relational analyses.
An extension to relational analyses is deferred to a companion paper.
Also, we concentrate on mutexes only and hence do not track thread \emph{id}s.
In the same way as in Min\'e's paper, the precision of all presented analyses could be improved by tracking (abstract or concrete) thread \emph{id}s.
Also, weak memory effects at asynchronous memory accesses are ignored.

The first analysis (\emph{Protection-Based Reading}) is an improved version of Vojdani's analysis
\cite{Vojdani2010,Vojdani2016}, while
the second analysis (\emph{Lock-Centered Reading}) is an improved version of the analysis proposed by Min\'e \cite{Mine2012}.
%
The first analysis assumes that for each global $g$, some set $\MM[g]$ of mutexes exists which is held at each write operation to $g$
and maintains a private copy of the global as long as one of the mutexes from $\MM[g]$ is known to be held.
Since the assumption of non-empty program-wide protecting locksets is rather restrictive,
we present a third analysis (\emph{Write-Centered Reading}) which lifts this extra assumption
and thus strictly subsumes \emph{Protection-Based Reading}.
Interestingly,
\emph{Write-Centered Reading} and \emph{Lock-Centered Reading} are still incomparable.
We therefore sketch a fourth analysis which is more precise
than either of them.

Throughout this section, we assume that $\D$ is a complete lattice abstracting sets of values of
program variables.

\subsection{Protection-Based Reading}\label{s:protection-based}
The original analysis proposed by Vojdani \cite{Vojdani2016,Vojdani2010} and implemented in the
\textsc{Goblint} system assumes that for each global $g$,
there is a set of mutexes definitely held whenever $g$ is accessed.
The best information about the values of $g$ visible after acquiring a protecting lock is maintained
in a separate unknown $[g]$.
The value of the unknown $[g]$ for the global $g$ is eagerly privatized: It is incorporated into the local state for a program point
and currently held lockset whenever $g$ first becomes protected, i.e., a mutex protecting $g$ is acquired while none was held before.
As long as one of these protecting mutexes is held, all reads and writes refer to this local copy of the global and this copy can be
destructively updated. It is only when no mutex protecting $g$ is held anymore that the value of the local copy is published to the
unknown $[g]$.
This base analysis setting is enhanced in three ways:
\begin{itemize}
\item	Instead of assuming a set of mutexes protecting both reading and writing of $g$,
	we now just assume a set of mutexes definitely held at each write. While this does not necessarily lead to an improvement
	in precision, it allows for analyzing interesting patterns where, e.g., only a subset of
mutexes is acquired for reading from a global, while a
	superset is held when writing to it.
\item	Besides the unknown $[g]$ for describing the possible values of the global $g$ for protected accesses,
	another unknown $[g]'$ is introduced for the results of unprotected read accesses to $g$.
\item	Instead of incorporating the value of the global $g$ stored at $[g]$ into the local state at each
	$\lock$ operation of a mutex
	from the protecting set, the local state for a program point and currently held lockset only keeps track of the values written by the ego thread.
	At a read operation $x=g$, the value of global $g$ is assigned to the local variable $x$.
	For that the analysis relies on the value stored at unknown $[g]$ together with the value
	of $g$ stored in the local state,
	unless the ego thread has definitely written
	to $g$ since acquiring a protecting mutex and not yet released all protecting mutexes
	since then.
\end{itemize}
Recall that $\MM:\G\to 2^\M$ maps each global $g$ to the set of mutexes definitely held
when $g$ is \emph{written to}.
Due to our atomicity assumption, the set $\MM[g]$ is non-empty, since $m_g\in\MM[g]$ always holds.
For the moment, we assume this mapping to be given.
%
The unknown $[g]'$ stores an abstraction of all values ever written to $g$, while
the unknown $[g]$ stores an abstraction of all values that were written \emph{last} before releasing a protecting mutex of $g$ other than $m_g$.
For each pair $(u,S)$ of program point $u$ and currently held lockset $S$, on the other hand, the analysis maintains
(1) a set $P$
of definitely written globals $g$ since a protecting mutex of $g$
has been acquired and not all protecting mutexes have been released, together with
(2) a variable assignment $\sigma:\X\cup\G\to\D$ of potential descriptions of values for
local or global variables.

In case one of the mutexes in $\MM[g]$ is definitely held, after a write to variable $g$, all processing on $g$ is performed destructively
on the local copy.
Immediately after the write to $g$ (at the $\unlock(m_g)$) the value of the updated local copy is merged
into $[g]'$ via a side-effect.
On the other hand, the value of that copy must be merged into the value of $[g]$
only when it no longer can be guaranteed that all other protecting mutexes $(\MM[g] \setminus m_g)$ are held.

We start by giving the right-hand-side function for the start state at program point $u_0 \in \N$ with the empty lockset $\emptyset$, i.e.,
$[u_0,\emptyset] \sqsupseteq \init^\sharp$ where
\[
	\begin{array}{lll}
		\init^\sharp\,\_ &=& \Let\; \sigma = \{ x \mapsto \top \mid x \in \X \} \cup \{ g \mapsto \bot \mid g \in \G \} \;\In\\
		&& (\emptyset,(\emptyset,\sigma))
	\end{array}
\]
Now, consider the right-hand side $[v,S']\sqsupseteq\sem{[u,S],A}^\sharp$ for the edge $e=(u,A,v)$ of the control-flow graph
and appropriate locksets $S,S'$.
Consider the right-hand side for a thread creation edge.
For this, we require a function $\nu^\sharp\,u\,(P,\sigma)\,u_1$ that returns the (abstract) thread \emph{id} of a thread started
at an edge originating from $u$ in local state $(P,\sigma)$, where the new thread starts execution at program point $u_1$.
Since we do not track thread \emph{id}s, $\nu^\sharp$ may return $\top$ whereby all variables
holding thread \emph{id}s are also set to $\top$.
\[
\begin{array}{lll}
\sem{[u,S],x = \create(u_1)}^\sharp\Get	&=&	\Let\;(P,\sigma) = \Get\,[u,S]\;\In	\\
				& & \Let\; i = \nu^\sharp\,u\,(P,\sigma)\,u_1\;\In\\
				& & \Let\; \sigma' = \sigma \oplus (\{ \self \mapsto i\} \cup \{ g \mapsto \bot \mid g \in \G\})\;\In\\
				& & \Let\; \rho = \{ [u_1,\emptyset] \mapsto (\emptyset,\sigma') \}\;\In\\
				& &	(\rho,(P,\sigma \oplus \{ x \mapsto i \}))	\\
\end{array}
\]
This function has no effect on the local state apart from setting $x$ to the abstract thread \emph{id} of the newly created thread,
while providing an appropriate initial state to the startpoint of the newly created thread.
For guards and computations on locals, the right-hand-side functions are defined in the intuitive manner --- they operate on $\sigma$ only,
leaving $P$ unchanged.

Concerning locking and unlocking of mutexes $a$, the lock operation does not affect the local state,
while at each unlock, all local copies of globals $g$
for which not all protecting mutexes are held anymore,
are published via a side-effect to the respective unknowns $[g]$ or $[g]'$.
Moreover, globals for which none of the protecting mutexes are held anymore, are removed from $P$:
\[
\begin{array}{lll}
\sem{[u,S],\lock(a)}^\sharp\Get	&=&	(\emptyset,\Get\,[u,S])	\\[1ex]
\sem{[u,S],\lock(m_g)}^\sharp\Get	&=&	(\emptyset,\Get\,[u,S])	\\[1ex]
\sem{[u,S],\unlock(m_g)}^\sharp\Get	&=&	\Let\;(P,\sigma) = \Get\,[u,S]\;\In	\\
	& &	\Let\;P' = \{h\in P\mid ((S\setminus\{m_g\})\cap\MM[h])\neq\emptyset\}\;\In	\\
	& &	\Let\;\rho = \{[g]'\mapsto\sigma\,g \} \cup \{[g] \mapsto\sigma\,g \mid \MM[g] = \{m_g\}\}\;\In \\
	& &	(\rho,(P',\sigma))	\\[1ex]
\sem{[u,S],\unlock(a)}^\sharp\Get	&=&	\Let\;(P,\sigma) = \Get\,[u,S]\;\In	\\
	& &	\Let\;P' = \{g\in P\mid ((S\setminus\{a\})\cap\MM[g])\neq\emptyset\}\;\In	\\
	& &	\Let\;\rho = \{[g]\mapsto\sigma\,g\mid a\in \MM[g]\}\;\In \\
	& &	(\rho,(P',\sigma))
\end{array}
\]
for $a \not\in \{m_g \mid g \in \G \}$.
We remark that the locksets $S'$ of the corresponding left-hand unknowns now take the forms of
$S'=S\cup\{a\}$, $S'=S\cup\{m_g\}$, $S'= S\setminus\{m_g\}$ and $S' = S\setminus\{a\}$, respectively.
Recall that the dedicated mutex $m_g$ for each global $g$ has been
introduced for guaranteeing atomicity. It is always acquired immediately before and always released immediately after
each access to $g$. The special treatment of this dedicated mutex implies that all values written to $g$
are side-effected to the unknown $[g]'$, while
values written to $g$ are side-effected to the unknown $[g]$ only
when \unlock\ is called for a mutex \emph{different} from $m_g$.

For global $g$ and local $x$, we define for writing to and reading from $g$,
\[
\begin{array}{lll}
\sem{[u,S],g=x}^\sharp\Get	&=&	\Let\;(P,\sigma) = \Get\,[u,S]\;\In	\\
			& &	(\emptyset, (P\cup\{g\},
					\sigma\oplus\{g\mapsto(\sigma\,x)\}))	\\[1ex]
\sem{[u,S],x=g}^\sharp\Get	&=&	\Let\;(P,\sigma) = \Get\,[u,S]\;\In	\\
			& &	\Iif\;(g\in P)\;\Then	\\
			& &	\qquad(\emptyset,
				(P,\sigma\oplus\{x\mapsto(\sigma\,g)\}))	\\
			& &	\Eelse\;\Iif\;(S\cap\MM[g]=\{m_g\})\;\Then	\\
			& &	\qquad(\emptyset,
				(P,\sigma\oplus\{x\mapsto\sigma\,g\sqcup\Get\,[g]'\}) \\
			& &	\Eelse\;(\emptyset,
				(P,\sigma\oplus\{x\mapsto \sigma\,g\sqcup\Get\,[g]\}))
\end{array}
\]
%
Altogether, the resulting system of constraints $\C_\ProtectionBased$ is monotonic
(given that the right-hand-side functions for local computations as well as for guards
are monotonic) --- implying that the system has a unique least solution, which we denote by $\Get_\ProtectionBased$.
We remark that for this unique least solution $\Get_\ProtectionBased$, $\Get_\ProtectionBased\,[g] \sqsubseteq \Get_\ProtectionBased\,[g]'$ holds.

\begin{example}
Consider, e.g., the following program fragment and assume that $\MM[g] = \{a, m_g\}$ and that
that we use value sets for abstracting int values.
\begin{center}
	\begin{minipage}[t]{7cm}
	\begin{minted}{c}
	|\op{lock}|(a);
	|\op{lock}|(m|$_{\,\texttt{g}}$|);  g = 5;    |\op{unlock}|(m|$_{\,\texttt{g}}$|);
	|\op{lock}|(b);
	|\op{unlock}|(b);
	|\op{lock}|(m|$_{\,\texttt{g}}$|);  x = g;    |\op{unlock}|(m|$_{\,\texttt{g}}$|);
	|\op{lock}|(m|$_{\,\texttt{g}}$|);  g = x+1;  |\op{unlock}|(m|$_{\,\texttt{g}}$|);
	|\op{unlock}|(a);
	\end{minted}
	\end{minipage}
\end{center}
Then after $\unlock(b)$, the state attained by the program (where variable $\self$ is omitted for clarity of presentation) is
\[
s_1=(\{g\},\{g\mapsto \{5\}, x \mapsto \top \})
\]
where $[g]'$ has received the contribution $\{5\}$ but no side-effect to $[g]$ has been triggered.
The read of $g$ in the subsequent assignment refers to the local copy.
Accordingly, the second write to $g$ and the succeeding $unlock(m_g)$ result in the local state
\[
s_2=(\{g\},\{g\mapsto \{6\}, x \mapsto \{5\} \})
\]
with side-effect $\{6\}$ to $[g]'$ and no side-effect to $[g]$. 
Accordingly, after $\unlock(a)$, the attained state is
\[
s_3=(\emptyset,\{g\mapsto \{6\}, x \mapsto \{5\} \})
\]
and the value of $[g]$ is just $\{6\}$ --
even though $g$ has been written to twice.
We remark that without separate treatment of $m_g$, the value of $\{5\}$ would immediately be
side-effected to $[g]$.
\qed
\end{example}

\begin{theorem}
\emph{Protection-Based Reading} is sound w.r.t.\ the trace semantics.
\end{theorem}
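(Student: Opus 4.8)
The plan is to establish a soundness theorem relating the least solution $\Get_{\ProtectionBased}$ of the abstract constraint system $\C_{\ProtectionBased}$ to the least solution $\Get$ of the concrete local constraint system~\eqref{e:local}, whose connection to the reference trace semantics $\T$ is already guaranteed by \cref{t:main}. Concretely, I would first fix a concretization relation (or a family of concretization functions $\gamma$) tying each abstract unknown to its concrete counterpart: for each $(u,S)$, the abstract local state $(P,\sigma)$ should over-approximate the set of concrete local traces $t$ with $\loc(t)=u$ whose currently-held lockset is exactly $S$; for each global-value unknown $[g]$ and $[g]'$, the abstract value should over-approximate the appropriate set of last-written-before-unlock values, respectively all-ever-written values. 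The soundness statement then becomes: $\Get \sqsubseteq \gamma(\Get_{\ProtectionBased})$, i.e.\ every concrete local trace is subsumed by the abstract state at its matching unknown, and the global unknowns likewise subsume the concrete values they abstract.

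\medskip
\noindent\textbf{Proof strategy.}
Since $\Get$ is defined as a \emph{least} solution and all right-hand sides are monotonic, the standard route is fixpoint induction: it suffices to show that $\gamma(\Get_{\ProtectionBased})$ is itself a (post-)solution of the concrete system~\eqref{e:local}, for then leastness of $\Get$ forces $\Get \sqsubseteq \gamma(\Get_{\ProtectionBased})$. Equivalently, and more robustly in the presence of side-effects, I would verify a \emph{local soundness} condition edge-by-edge: for each control-flow edge $e=(u,A,v)$ and each lockset $S$, the abstract transformer $\sem{[u,S],A}^\sharp$ commutes (up to $\sqsubseteq$ under $\gamma$) with the concrete transformer $\sem{u,A}$ — both for the contribution to the left-hand unknown and for every side-effect produced. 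This reduces the global theorem to a finite case analysis over the action kinds: thread creation, $\lock/\unlock$ of ordinary mutexes, the dedicated mutexes $m_g$, guards and local assignments, and the two global-access actions $x=g$ and $g=x$. For the guard/local-assignment and creation cases the argument is essentially the sequential one, since these transformers touch only $\sigma$. The substantive cases are the locking/unlocking transformers, where I must show that the side-effect targets $[g]$ and $[g]'$ receive at least the concrete last-written/ever-written values, and the read $x=g$, where the case split on $g\in P$ versus $S\cap\MM[g]=\{m_g\}$ versus the default branch must be matched against the concrete "last preceding write" mechanism of the \textbf{Globals} consistency condition in \cref{s:details}.

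\medskip
\noindent\textbf{Main obstacle.}
The hard part will be the read case $x=g$ and the invariant justifying it, namely that the privatization bookkeeping in $P$ faithfully tracks when the ego thread's own local copy $\sigma\,g$ already holds the most recent value. I must prove an auxiliary invariant: whenever $g\in P$ at $(u,S)$ in $\Get_{\ProtectionBased}$, every concrete trace $t$ abstracted there has the property that the ego thread's last write to $g$ causally dominates any other write to $g$ visible at $u$, so that reading the local copy $\sigma\,g$ is sound and publishing to $[g]$ may legitimately be deferred. This requires exploiting the atomicity mutex $m_g$ (so that every concrete write is bracketed by $\lock(m_g)\ldots\unlock(m_g)$ and thus side-effected to $[g]'$), together with the locking-order and causality-order properties, to show that the value recovered by the concrete last-preceding-write search is always contained in $\sigma\,g \sqcup \Get_{\ProtectionBased}\,[g]$ (or $\sqcup\,[g]'$ in the $\{m_g\}$ branch). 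Establishing that this invariant is preserved by every transformer — in particular that $\unlock$ removes $g$ from $P$ exactly when a protecting mutex is released and simultaneously publishes the copy — is where the bulk of the work lies; once it is in place, the remaining cases follow by routine monotonicity and the coincidence already furnished by \cref{t:main}.
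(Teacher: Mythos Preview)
Your proposal is plausible but takes a genuinely different route from the paper. The paper does \emph{not} prove Protection-Based Reading sound directly against the concrete trace system~\eqref{e:local}. Instead it factors the proof through the more refined \emph{Write-Centered Reading} analysis of \cref{s:write-centered}: first, Write-Centered Reading is proven sound against the trace semantics by the machinery of \cref{s:soundness-write-centered} (an intermediate concrete system $\C'''$ with unknowns $[g,a,S,w]$, the key read-characterization \cref{prop:read}, and a fixpoint induction); second, in \cref{s:soundness-protection-based} a concretization $\gamma$ is defined that maps a Protection-Based state $(P,\sigma)$ to a Write-Centered state $(W',P',\sigma')$ with $W'\,g=\{\MM[g]\}$ and $P'\,g=\{\{a\}\mid a\in\MM[g]\}$ or $\{\emptyset\}$ depending on $g\in P$, together with a description relation $\R$ between the unknowns $[g,a,S,w]$ and $[g],[g]'$. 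One then shows that the mapping induced from the least solution of Protection-Based is a post-solution of the Write-Centered system, and soundness is inherited.

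What each approach buys: the paper's indirection reuses the heavy lifting already done for Write-Centered (in particular \cref{prop:read}, which pins down exactly which $[g,a,S',w']$ can supply a non-thread-local read value), so the Protection-Based proof reduces to an abstract-to-abstract simulation that is essentially algebraic and avoids reasoning about traces at all. Your direct route is conceptually self-contained and matches the standard abstract-interpretation template, but it forces you to re-derive, specifically for Protection-Based, a trace-level invariant equivalent to \cref{prop:read} restricted to the case $w\supseteq\MM[g]$; your ``auxiliary invariant'' about $g\in P$ is exactly the $P'\,g=\{\{a\}\mid a\in\MM[g]\}$ instance of that proposition. You would also have to make precise the concretization of $[g]$ versus $[g]'$ directly in terms of traces (the paper sidesteps this by mapping them to disjoint families of $[g,a,S,w]$), and handle the mismatch between the unknowns $[u]$, $[a]$ of~\eqref{e:local} and your $[u,S]$, $[g]$, $[g]'$ --- the paper does this once, generically, via the intermediate systems $\C''$ and $\C'''$. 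Your plan is workable, but the paper's factorization is shorter and yields the strict-refinement statement (Protection-Based $\sqsupseteq$ Write-Centered) as a by-product.
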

\begin{proof}
In \cref{s:soundness-protection-based} we show that this analysis computes an abstraction of the result of the analysis presented in
\cref{s:write-centered}, which we then prove to be sound with respect to the trace semantics in \cref{s:soundness-write-centered}.\qed
\end{proof}
Thus, we never remove any values from the variable assignment for a local state.
An implementation may, however, in order to keep the representation of local states small, additionally track for each program point
and currently held lockset, a set $W$ of all globals
which possibly have been written (and not yet published) while holding protecting mutexes.
A local copy of a global $g$ may then safely be removed from $\sigma$ if $g \notin P \cup W$.
This is possible because for each $g \notin P \cup W$,
$\sigma\,g$ has already been side-effected and hence already is included in
$\Get\,[g]$ and $\Get\,[g]'$, and thus $\sigma\,g$ need not be
consulted on the next read of $g$. 

\medskip
As presented thus far, this analysis requires the map $\MM:\G\to 2^\M$
to be given beforehand.
This map can, e.g., be provided by some pre-analysis onto which the given analysis builds.
Alternatively, our analysis can be modified to infer $\MM$ on the fly.
For that, we consider the $\MM[g]$ to be separate unknowns of the
constraint system. They take values in the complete lattice $2^\M$ (ordered
by superset) and are initialized to the full set of all mutexes $\M$.
The right-hand-side function for writes to global $g$ then is extended
to provide the current lockset as a contribution to $\MM[g]$. This means that we now have:
\[
\begin{array}{lll}
\sem{[u,S],g=x}^\sharp\Get	&=&	\Let\;(P,\sigma) = \Get\,[u,S]\;\In	\\
			& &	(\{\MM[g]\mapsto S\},
				(P\cup\{g\},
					\sigma\oplus\{g\mapsto(\sigma\,x)\}))	\\[1ex]
\end{array}
\]
There is one (minor) obstacle, though: the right-hand-side function for control-flow edges with
$\unlock(a)$ is no longer monotonic in the unknowns $\MM[g],g\in\G$: If $\MM[g]$ shrinks to no
longer contain $a$, $\unlock(a)$ will no longer produce a side-effect to the unknown $[g]$,
whereas it previously did.

Another practical consideration is that, in order to further improve efficiency,
it is also possible to abandon
state-splitting according to held locksets --- at the cost of losing some precision.
To this end, it suffices to additionally track for each program point a set $\bar S$ of must-held mutexes
as part of the local state from the lattice $2^\M$ (ordered by superset), and replace $S$ with $\bar S$
in all right-hand sides.

\subsection{Lock-Centered Reading}\label{s:lock-centered}
The analysis by Min\'e from \cite{Mine2012}, when stripped of thread \emph{id}s and other
features specific to real-time systems such as \textsc{Arinc653} and reformulated by means
of side-effecting constraint systems, works as follows:
It maintains for each pair $(u,S)$ of program point $u$
and currently held lockset $S$, copies of globals $g$ whose values are weakly updated whenever the lock for some mutex $a$
is acquired. In order to restrict the set of possibly read values, the global $g$ is
split into unknowns $[g,a,S']$ where $S'$ is a \emph{background} lockset held
by another thread immediately after executing $\unlock(a)$.
Then only the values of those unknowns $[g,a,S']$ are taken into account where
$S\cap S' = \emptyset$.

For a detailed account of Miné's analysis see \cref{s:mine}.
%
%
We identify two sources of imprecision in this analysis.
One source is \emph{eager reading}, i.e., reading in values of $g$ at
every $\lock(a)$ operation. This may import the values of \emph{too many} unknowns
$[g,a,S']$ into the local state.
Instead,
it suffices for each mutex $a$, to read values at the \emph{last} $\lock(a)$ before actually accessing the global.

Let $\UM$ denote the set of all upward-closed subsets of $\M$, ordered by subset inclusion.
For convenience, we represent each non-empty value in $\UM$ by the set of its minimal elements.
Thus, the \emph{least} element of $\UM$ is $\emptyset$, while the \emph{greatest} element is given by
the \emph{full} power set of mutexes (represented by $\{\emptyset\}$).

We now maintain a map $L: \M \to \UM$ in the local state that tracks for each mutex $a$ all
minimal
background locksets that were held when $a$ was acquired last.
This abstraction of acquisition histories~\cite{Kahlon2005,Kahlon2007} 
allows us to delay the reading of globals until the point where the program
actually accesses their values. We call this behavior \emph{lazy reading}.

The other source of imprecision is that each thread may publish values it has not written itself.
In order to address this issue, we let
 $\sigma\,g$ only maintain values the ego thread itself has written. 

A consequence of \emph{lazy reading} is that values for globals are now read from the
global invariant at each read.
In case the ego thread has definitely written to a variable and no additional locks have occurred since,
only the local copy needs to be read. To achieve that, we introduce an additional map $V:\M\to 2^\G$.
For mutex $a$, $V\,a$ is the set of global variables that were definitely written since $a$ was
last acquired. In case that $a$
has never been acquired by the ego thread,
we set $V\,a$ to the set of all global variables that have definitely been written
since the start of the thread.

We start by giving the right-hand-side function for the start state at program point $u_0 \in \N$ with the empty lockset $\emptyset$, i.e.,
$[u_0,\emptyset] \sqsupseteq \init^\sharp$ where
\[
	\begin{array}{lll}
		\init^\sharp\,\_ &=& \Let\; V = \{ a \mapsto \emptyset \mid a \in \M \} \;\In\\
		&& \Let\; L = \{ a \mapsto \emptyset \mid a \in \M \} \;\In\\
		&& \Let\; \sigma = \{ x \mapsto \top \mid x \in \X \} \cup \{ g \mapsto \bot \mid g \in \G \} \;\In\\
		&& (\emptyset,(V,L,\sigma))
	\end{array}
\]
Next, we sketch the right-hand-side function for a thread creation edge.
\[
\begin{array}{lll}
\sem{[u,S],x = \create(u_1)}^\sharp\Get	&=&	\Let\;(V,L,\sigma) = \Get\,[u,S]\;\In	\\
				& & \Let\; V' = \{ a \mapsto \emptyset \mid a \in \M \}\;\In\\
				& & \Let\; L' = \{ a \mapsto \emptyset \mid a \in \M \} \;\In\\
				& & \Let\; i = \nu^\sharp\,u\,(V,L,\sigma)\,u_1\;\In\\
				& & \Let\; \sigma' = \sigma \oplus (\{ \self \mapsto i\} \cup \{ g \mapsto \bot \mid g \in \G\})\;\In\\
				& & \Let\; \rho = \{ [u_1,\emptyset] \mapsto (V',L',\sigma') \}\;\In\\
				& &	(\rho,(V,L,\sigma \oplus \{ x \mapsto i \}))
\end{array}
\]
This function has no effect on the local state apart from setting $x$ to the abstract thread \emph{id} of the newly created thread,
while providing an appropriate initial state to the startpoint of the newly created thread.
For guards and computations on locals, the right-hand-side functions are once more defined in the obvious way. 

Locking a mutex $a$ resets $V\,a$ to $\emptyset$ and updates $L$, whereas unlock side-effects the value of globals to the appropriate unknowns.
\[
\begin{array}{lll}
\sem{[u,S],\lock(a)}^\sharp\Get
	&=&	\Let\;(V,L,\sigma) = \Get\,[u,S]\;\In	\\
	& & \Let\;V' = V \oplus \{ a \mapsto \emptyset\}\;\In \\
	& & \Let\;L' = L \oplus \{ a \mapsto \{S\} \}\;\In\\
	& &	(\emptyset, (V',L', \sigma))	\\[1ex]
\sem{[u,S],\unlock(a)}^\sharp\Get
	&=&	\Let\;(V,L,\sigma) = \Get\,[u,S]\;\In	\\
	& & \Let\;\rho = \{ [g,a, S \setminus \{a\}] \mapsto \sigma\,g \mid g \in \G \}\;\In\\
	& &	(\rho, (V,L, \sigma))
\end{array}
\]
The right-hand-side function for an edge writing to a global $g$ then consists of a strong update to the local copy and addition of $g$ to $V\,a$ for all
mutexes $a$.
For reading from a global $g$, those values $[g,a,S']$ need to be taken into account where $a$ is one of the mutexes acquired in the past and
the intersection of some set in $L\,a$ with the set of mutexes $S'$ held while publishing is empty.
\[
\begin{array}{lll}

\sem{[u,S],g=x}^\sharp\Get	&=&	\Let\;(V,L,\sigma) = \Get\,[u,S]\;\In	\\
			& & \Let\;V' = \{a \mapsto (V\,a \cup \{ g \}) \mid a \in \M \}\;\In\\
			& &	(\emptyset, (V',L,
					\sigma\oplus\{g\mapsto(\sigma\,x)\}))	\\[1ex]
\sem{[u,S],x=g}^\sharp\Get	&=&	\Let\;(V,L,\sigma) = \Get\,[u,S]\;\In	\\
			& &\Let\;d =\bigsqcup\{\eta[g,a,S']\mid a \in \M, g \not\in V\,a, B\in L\,a, B\cap S' =\emptyset\}\;\In\\
			& &	(\emptyset,
				(V,L,\sigma\oplus\{x\mapsto \sigma\,g \sqcup d\}))
\end{array}
\]
In case that
$L\,a=\emptyset$, i.e., if according to the analysis
no thread reaching $u$ with lockset $S$ has ever locked mutex $a$,
then no values from $[g,a,S']$ will be read.
\begin{theorem}
	\emph{Lock-Centered Reading} is sound w.r.t.\ to the trace semantics.
\end{theorem}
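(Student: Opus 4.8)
The plan is to prove soundness by relating the least solution of the \emph{Lock-Centered Reading} constraint system to the least solution $\Get$ of the concrete local constraint system \eqref{e:local}; by \cref{t:main} the latter faithfully captures the reference trace semantics, so it suffices to exhibit a concretization under which the abstract least solution over-approximates $\Get$ and to verify that this over-approximation is preserved by every right-hand side (local soundness), whence it transfers to least solutions by fixpoint induction. For a global unknown $[g,a,S']$ I would require its abstract value to over-approximate every value $\sigma_{j'-1}\,y$ carried by a write $(\ldots,g=y,\ldots)$ that a thread performs last before it executes $\unlock(a)$ with remaining (background) lockset $S'$. For a program-point unknown $[u,S]$ with abstract value $(V,L,\sigma)$, the concretization should guarantee, for every concrete $t$ with $\loc(t)=u$ holding exactly $S$: that $\sigma$ over-approximates the ego thread's locals and the globals it has itself written; that $V\,a$ \emph{under}-approximates the globals definitely written by the ego thread since it last acquired $a$ (so that $g\in V\,a$ legitimately diverts a read to the local copy); and that $L\,a$ over-approximates, as an element of $\UM$, the background locksets held at the ego thread's last acquisition of $a$.

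With this relation fixed, I would discharge local soundness edge by edge. Thread creation, guards, and local assignments are routine, since they only transform $\sigma$ (creation additionally seeds a fresh startpoint whose reset $(V',L',\sigma')$ trivially satisfies the relation). For $\lock(a)$, resetting $V\,a$ to $\emptyset$ is sound because nothing is yet guaranteed written after this new acquisition, and recording $L\,a\mapsto\{S\}$ logs the actual background lockset; for $\unlock(a)$, the side-effect $[g,a,S\setminus\{a\}]\mapsto\sigma\,g$ deposits exactly the ego thread's last written value of $g$ under the correct background tag, matching the global-unknown invariant. The write $g=x$ performs a strong update of $\sigma\,g$ and adds $g$ to every $V\,a$, which is sound precisely because after it the ego thread has definitely written $g$ since each of its past acquisitions.

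The genuinely hard step is the read $x=g$, where I must show that $\sigma\,g\sqcup d$, with $d=\bigsqcup\{\eta[g,a,S']\mid a\in\M,\,g\notin V\,a,\,B\in L\,a,\,B\cap S'=\emptyset\}$, over-approximates the value read in every concrete trace. By the \textbf{Globals} consistency condition there is a causally maximal write to $g$ below the read; if it is the ego thread's own (with no interleaved re-acquisition), the value sits in $\sigma\,g$ and we are done. Otherwise it was published by another thread, and the task is to produce \emph{some} witnessing mutex $a$ that the ego thread has itself acquired, with $g\notin V\,a$, and a recorded $B\in L\,a$ satisfying the disjointness $B\cap S'=\emptyset$, such that the published value lies in $[g,a,S']$. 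Finding this witness amounts to following the chain of locking-order edges $\to_\bullet$ that causally connects the writing thread to the ego thread and selecting the mutex realizing the relevant link; the published value is preserved into $[g,a,S']$ because the writer's $\sigma\,g$ is not overwritten before that unlock.

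I expect the disjointness lemma --- that this acquisition-history filter never discards a genuinely schedulable write --- to be the main obstacle, and the place where the \textbf{Causality order} and \textbf{Locking order} properties of \cref{s:details} are indispensable. Its proof is a mutual-exclusion argument carried along the causal chain: a mutex common to the ego thread's recorded background $B$ and the writer's background $S'$ would force the two threads' critical sections for it to be ordered in a way incompatible with the witnessing link being the \emph{last} one through which the un-overwritten value reaches the read, and I would formalize this by induction on the length of that chain, using finiteness of $\M$. Once local soundness is established for all edges together with all side-effects, the fixpoint-induction step --- that the concrete least solution lies in the concretization of the abstract one --- closes the argument.
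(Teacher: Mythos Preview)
Your proposal is essentially correct and follows the same architecture as the paper: define a concretization relating abstract unknowns $[u,S]$ and $[g,a,S']$ to sets of local traces, then establish soundness by fixpoint induction, discharging each edge's right-hand side separately and isolating the read $x=g$ as the one nontrivial case. Your description of the concretization (over-approximation of $\sigma$ and $L$, under-approximation of $V$) matches the paper's abstraction function $\beta$ and the induced $\gamma_{u,S}$, $\gamma_{g,a,S}$.

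Two differences are worth noting. First, the paper interposes an intermediate concrete constraint system $\C''$ whose unknowns already have the shape $[u,S]$ and $[g,a,S']$; this lets the fixpoint induction compare like with like and cleanly separates the splitting of $\C$ by lockset (\cref{p:GetGetCMU}) from the actual abstraction argument. You work directly against $\C$, which is fine but means you must simultaneously handle the lockset-splitting.

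Second, and more substantively, your witness-finding argument for the read case (``follow the chain of $\to_\bullet$ edges, induct on its length'') is vaguer than what the paper does and would be harder to make precise, since many causal chains may connect writer to reader. The paper's device is sharper: among all mutexes $a$ that the writing thread $i_1$ unlocks \emph{after} the relevant write, that the ego thread $i_0$ has locked, and whose last lock by $i_0$ causally succeeds that unlock, pick the one whose unlock is \emph{last} in $i_1$'s program order (such an $a$ exists because $m_g$ qualifies). Maximality then gives (L1) directly by contradiction: a common mutex $c\in B\cap S'$ would either be unlocked later by $i_1$ (contradicting ``last'') or never unlocked (contradicting that $i_0$ holds $c$ in $B$). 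Condition (L2) follows because any ego-thread write after its last $\lock(a)$ would causally dominate the writer's write. This avoids any induction on chain length and does not use finiteness of $\M$.
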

\begin{proof}
The proof is deferred to \cref{s:soundness-lock-centered}.
The central issue is to prove that when reading a global $g$,
the restriction to the values of unknowns $[g,a,S']$ as indicated by the
right-hand-side function is sound (see \cref{prop:read'}).\qed
\end{proof}

\subsection{Write-Centered Reading}\label{s:write-centered}
In this section, we provide a refinement of \emph{Protection-Based Reading} which
abandons the assumption that each global $g$ is write-protected by some fixed set of mutexes $\MM[g]$.
%
In order to lift the assumption, we introduce the additional data-structures $W,P:\G\to\UM$ to be maintained by the
analysis for each unknown $[u,S]$ for program point $u$ and currently held lockset $S$.
The map $W$ tracks for each global $g$ the set of minimal locksets held when $g$ was last written by the ego thread.
At the start of a thread, no global has been written by it yet; hence, we set $W\,g=\emptyset$ for all globals $g$.
The map $P$ on the other hand, tracks for each global $g$ all minimal locksets the ego thread has held since its last write to $g$.
A global $g$ not yet written to by the ego thread is mapped to the \emph{full} power set of mutexes (represented by $\{\emptyset\}$).
The unknowns for a global $g$ now are of the form $[g,a,S,w]$ for mutexes $a$,
background locksets $S$ at $\unlock(a)$ and minimal lockset $w$ when $g$ was last written.

We start by giving the right-hand-side function for the start state at program point $u_0 \in \N$ with the empty lockset $\emptyset$, i.e.,
$[u_0,\emptyset] \sqsupseteq \init^\sharp$ where
\[
	\begin{array}{lll}
		\init^\sharp\,\_ &=& \Let\; W = \{ g \mapsto \emptyset \mid g \in \G \} \;\In\\
		&& \Let\; P = \{ g \mapsto \{\emptyset\} \mid g \in \G \} \;\In\\
		&& \Let\; \sigma = \{ x \mapsto \top \mid x \in \X \} \cup \{ g \mapsto \bot \mid g \in \G \} \;\In\\
		&& (\emptyset,(W,P,\sigma))
	\end{array}
\]
Next comes the right-hand-side function for a thread creating edge.
\[
\begin{array}{lll}
\sem{[u,S],x = \create(u_1)}^\sharp\Get	&=&	\Let\;(W,P,\sigma) = \Get\,[u,S]\;\In	\\
				& & \Let\; W' = \{ g \mapsto \emptyset \mid g \in \G \}\;\In\\
				& & \Let\; P' = \{ g \mapsto \{ \emptyset \} \mid g \in \G \} \;\In\\
				& & \Let\; i = \nu^\sharp\,u\,(W,P,\sigma)\,u_1\;\In\\
				& & \Let\; \sigma' = \sigma \oplus (\{ \self \mapsto i\} \cup \{ g \mapsto \bot \mid g \in \G\})\;\In\\
				& & \Let\; \rho = \{ [u_1,\emptyset] \mapsto (W',P',\sigma') \}\;\In\\
				& &	(\rho,(W,P,\sigma \oplus \{ x \mapsto i \}))
\end{array}
\]
This function has no effect on the local state apart from setting $x$ to the abstract thread \emph{id} of the newly created thread
while providing an appropriate initial state to the startpoint of the newly created thread.
For guards and computations on locals, the right-hand-side functions are once more defined intuitively --- they operate on $\sigma$ only,
leaving $W$ and $P$ unchanged.
While nothing happens at locking, unlocking now updates the data-structure $P$ and additionally
side-effects the current local values for each global $g$ to the corresponding unknowns.
\[
	\begin{array}{lll}
		\sem{[u,S],\lock(a)}^\sharp\Get
		&=&	(\emptyset, \Get\,[u,S]) \\[1ex] 
	\sem{[u,S],\unlock(a)}^\sharp\Get
		&=&	\Let\;(W,P,\sigma) = \Get\,[u,S]\;\In	\\
		& & \Let\;P' = \{ g \mapsto P\,g \sqcup \{S \setminus \{a\}\} \mid g \in \G \}\;\In\\
		& & \Let\;\rho = \{ [g,a,S \setminus \{a\},w] \mapsto \sigma\,g \mid g \in \G, w\in W\,g \}\;\In\\
		& &	(\rho, (W,P',\sigma))
	\end{array}
\]
%
When writing to a global $g$, on top of recording the written value in $\sigma$, $W\,g$ and $P\,g$ are set to the set $\{S\}$
for the current lockset $S$.
When reading from a global $g$,
now only values stored at $\Get\,[g,a,S',w]$ are taken into account, provided
\begin{itemize}
\item	$a\in S$, i.e., $a$ is one of the currently held locks;
\item	$S\cap S'=\emptyset$; i.e., the intersection of the current lockset $S$ with the background lockset at the corresponding operation
	$\unlock(a)$ after the write producing the value stored at this unknown is empty;
\item	$w\cap S''=\emptyset$ for some $S''\in P\,g$, i.e., the background lockset at the write producing the value stored at this unknown
	is disjoint with one of the locksets in $P\,g$. This excludes writes where the ego thread has since its last thread-local write always
	held at least one of the locks in $w$. In this case,
	that write can not have happened between the
	last thread-local write of the reading ego thread and its read;
\item	$a \notin S'''$ for some $S''' \in P\,g$, i.e., $a$ has not been continuously held by the thread since its last write to $g$.
\end{itemize}
Accordingly, we define
\[
\begin{array}{lll}

\sem{[u,S],g=x}^\sharp\Get	&=&	\Let\;(W,P,\sigma) = \Get\,[u,S]\;\In	\\
			& & \Let\;W' = W \oplus \{g \mapsto \{S\} \}\;\In\\
			& & \Let\;P' = P \oplus \{g \mapsto \{S\} \}\;\In\\
			& &	(\emptyset, (W',P',
					\sigma\oplus\{g\mapsto(\sigma\,x)\}))	\\[1ex]
\sem{[u,S],x=g}^\sharp\Get	&=&	\Let\;(W,P,\sigma) = \Get\,[u,S]\;\In	\\
			& &\Let\;d = \sigma\, g \sqcup\bigsqcup\{\eta[g,a,S',w] \mid  a\in S, S\cap S' =\emptyset, \\
			& & \qquad \exists S'' \in P\,g: S'' \cap w = \emptyset, \\
			& & \qquad \exists S''' \in P\,g: a \notin S''' \}\;\In\\
			& &	(\emptyset,
				(W,P,\sigma\oplus\{x\mapsto d\}))\\[1ex]
\end{array}
\]

\begin{example}\label{e:write-centered}
We use integer sets for abstracting int values.
Consider the following concurrent program with global variable \texttt{g} and local variables \texttt{x}, \texttt{y}, and \texttt{z}:
	\begin{center}
	\begin{minipage}[t]{6cm}
	\begin{minted}{c}
main:
  y = |\op{create}|(t1);
  z = |\op{create}|(t2);
  |\op{lock}|(c);
  |\op{lock}|(m|$_{\,\texttt{g}}$|); g = 31; |\op{unlock}|(m|$_{\,\texttt{g}}$|);
  |\op{lock}|(a);
  |\op{lock}|(b);
  |\op{lock}|(m|$_{\,\texttt{g}}$|); x = g; |\op{unlock}|(m|$_{\,\texttt{g}}$|);
  ...

	\end{minted}
	\end{minipage}
	\begin{minipage}[t]{6cm}
	\begin{minted}{c}
t1:
  |\op{lock}|(a);
  |\op{lock}|(b);
  |\op{lock}|(m|$_{\,\texttt{g}}$|); g = 42; |\op{unlock}|(m|$_{\,\texttt{g}}$|);
  |\op{unlock}|(a);
  |\op{lock}|(m|$_{\,\texttt{g}}$|); g = 17; |\op{unlock}|(m|$_{\,\texttt{g}}$|);
  |\op{unlock}|(b);

t2:
  |\op{lock}|(c);
  |\op{lock}|(m|$_{\,\texttt{g}}$|); g = 59; |\op{unlock}|(m|$_{\,\texttt{g}}$|);
  |\op{unlock}|(c);
	\end{minted}
	\end{minipage}
\end{center}
At the read $x=g$, the current lockset is $\{a,b,c,m_g\}$ and in the local state $P\,g = \{ \{ c\}\}$.
The only unknown where all conditions above are fulfilled is the unknown $[g,b,\emptyset,\{b,m_g\}]$ which has value $\{17\}$.
Hence this is the only value read from the unknowns for g and together with the value $\{31\}$ from $\sigma\,g$ the final value for
$x$ is $\{17,31\}$.
This is more precise than either of the analyses presented thus far:
\emph{Protection-Based Reading} cannot exclude any values of $x$ as $\MM[g] = \{m_g\}$, and thus has
$\{17,31,42,59\}$ for $x$.
\emph{Lock-Centered Reading} has $V\,c = \{g\}$ at the read. This excludes the write by $t2$ and
thus results in $\{17,31,42\}$ for $x$.
\qed
\end{example}
%
\begin{theorem}
	\emph{Write-Centered Reading} is sound w.r.t.\ the local trace semantics.
\end{theorem}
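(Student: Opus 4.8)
The plan is to prove soundness by establishing a correspondence between the abstract local states computed by \emph{Write-Centered Reading} and the concrete local traces from the reference semantics. First I would set up a concretization relation $\gamma$ that maps each abstract tuple $(W,P,\sigma)$ attached to an unknown $[u,S]$ to the set of local traces $t$ with $\loc(t)=u$ in which the ego thread currently holds exactly the mutexes in $S$, and whose local variable values are over-approximated by $\sigma$ restricted to $\X$. The key subtlety is giving the right concrete meaning to the auxiliary maps $W$ and $P$: for each global $g$, $W\,g$ must soundly record (the minimal representatives of) the locksets held at the ego thread's last thread-local write to $g$ along $t$, and $P\,g$ must record the minimal locksets continuously held by the ego thread since that last write. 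For the global unknowns, I would define the concretization of $[g,a,S',w]$ as the set of values written to $g$ by some trace ending in $\unlock(a)$ with background lockset $S'$ and with $w$ being the write-lockset, mirroring exactly the side-effect emitted by $\sem{[u,S],\unlock(a)}^\sharp$.

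The proof itself would proceed by fixpoint induction on the least solution of the concrete constraint system \eqref{e:local}, invoking \cref{t:main} so that it suffices to reason edge-by-edge. For each edge type I would check that the abstract right-hand side over-approximates the concrete one under $\gamma$: thread creation and the purely local operations (guards, assignments to locals) are routine since they touch only $\sigma$; locking merely records the background lockset and is also direct; and the $\unlock(a)$ case must be shown to emit side-effects that cover every value a concurrent reader could legitimately observe, while correctly updating $P$ by joining in $\{S\setminus\{a\}\}$. The genuinely delicate case is the read $x=g$. Here I must argue that the four side conditions in the definition of $d$ — namely $a\in S$, $S\cap S'=\emptyset$, disjointness of $w$ with some $S''\in P\,g$, and $a\notin S'''$ for some $S'''\in P\,g$ — together with $\sigma\,g$, capture every value of $g$ that the \emph{last preceding write} in the causality order of any compatible local trace could have deposited, yet soundly so.

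The main obstacle will be proving completeness of the read condition: that \emph{no} relevant write is discarded. Concretely, I would fix a local trace $t$ reaching the read with lockset $S$, let $(j',u_{j'},\sigma_{j'})$ be the maximal write node below the read in the causality order guaranteed by the \textbf{Globals} consistency condition of \cref{s:details}, and do a case split on whether this last write was performed by the ego thread or by another thread. In the ego-thread case the value resides in $\sigma\,g$ (by the invariant on $W$ and $P$), so it is accounted for by the leading $\sigma\,g$ summand. In the foreign-thread case the value was side-effected to some $[g,a,S',w]$ at the writer's intervening $\unlock(a)$; I would then use the locking-order and create-order axioms, plus the acquisition-history information encoded in $P\,g$, to show that the mutex $a$ through which the value was published satisfies all four side conditions — in particular that the background lockset $S'$ is disjoint from $S$ because the writer released $a$ before the ego thread could have acquired the locks in $S$, and that the disjointness conditions on $w$ and $a$ rule out exactly those stale writes the ego thread has shadowed with its own later write. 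I expect this argument to parallel, but strictly strengthen, the corresponding soundness lemma for \emph{Lock-Centered Reading} (the analogue of \cref{prop:read'}), and I would factor it out as a separate proposition before assembling the edgewise checks into the final fixpoint-induction conclusion.
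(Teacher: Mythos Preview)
Your plan is essentially the paper's: extract a state from each trace (the paper's $\beta$), build concretizations $\gamma_{u,S}$ and $\gamma_{g,a,S,w}$, isolate the read case as a standalone proposition (the paper's \cref{prop:read}, with conditions (W0)--(W4)), and finish by a fixpoint induction that checks each edge type. The paper additionally routes the argument through an intermediate concrete constraint system whose unknowns $[u,S]$ and $[g,a,S,w]$ already match those of $\C_\WriteCentered$; you leave that split implicit, which is harmless since it is routine (\cref{p:GetGet'}).

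There is one technical point your description glosses over and which the paper handles explicitly. You propose to concretize $[g,a,S',w]$ via traces ``with $w$ being the write-lockset, mirroring exactly the side-effect emitted by $\sem{[u,S],\unlock(a)}^\sharp$.'' But the abstract $\unlock$ side-effects only to the \emph{minimal} representatives $w\in W\,g$, and after joins in $\UM$ the set $W\,g$ may record a strict subset of the actual write-lockset of some contributing trace; hence the concrete and abstract side-effects do not target the same unknowns on the nose. The paper fixes this by interposing a second abstract system $\C_\ModWC$ that side-effects to \emph{every} superset $w\supseteq w'$ of each $w'\in W\,g$, and then observes (\cref{prop:modBTPG}) that this leaves the program-point unknowns unchanged because the read condition $\exists S''\in P\,g:\,S''\cap w=\emptyset$ is downward-closed in $w$: whenever it admits some $w$, it admits every $w'\subseteq w$ as well. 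Without this detour---or the equivalent direct argument---your edgewise check for $\unlock$ will not close.
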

\begin{proof}
  The proof is deferred to \cref{s:soundness-write-centered}.
  The central issue is to prove that when reading a global $g$,
  the restriction to the values of unknowns $[g,a,S',w]$ as indicated by the
  right-hand-side function is sound (see \cref{prop:read}).\qed
\end{proof}

\noindent \emph{Protection-Based Reading} from \cref{s:protection-based} is shown
to be an abstraction of this analysis in \cref{s:soundness-protection-based}.

\subsection{Combining Write-Centered with Lock-Centered Reading}\label{s:combined}
The analyses described in \cref{s:lock-centered,s:write-centered} are sound, yet
incomparable. This is evidenced by \cref{e:write-centered}, in which \emph{Write-Centered} is more precise
than \emph{Lock-Centered Reading}, and the following example, where the opposite is the case.
\begin{example}\label{e:lock-centered-beats-write-centered}
Assume that we use value sets for abstracting int values.
Consider the following concurrent program with global variable \texttt{g} and local variables \texttt{x} and \texttt{y}:
	\begin{center}
	\begin{minipage}[t]{6cm}
	\begin{minted}{c}
main:
  y = |\op{create}|(t1);
  |\op{lock}|(d);
  |\op{lock}|(a);
  |\op{unlock}|(d);
  |\op{lock}|(m|$_{\,\texttt{g}}$|); x = g; |\op{unlock}|(m|$_{\,\texttt{g}}$|);
  ...

	\end{minted}
	\end{minipage}
	\begin{minipage}[t]{6cm}
	\begin{minted}{c}
t1:
  |\op{lock}|(d);
  |\op{lock}|(a);
  |\op{lock}|(m|$_{\,\texttt{g}}$|); g = 42; |\op{unlock}|(m|$_{\,\texttt{g}}$|);
  |\op{unlock}|(a);
  |\op{lock}|(m|$_{\,\texttt{g}}$|); g = 17; |\op{unlock}|(m|$_{\,\texttt{g}}$|);
  |\op{unlock}|(d);
	\end{minted}
	\end{minipage}
	\end{center}
For \emph{Write-Centered Reading}, both the value at unknowns $[g,a,\{d\},\{d,a,m_g\}]$ with value $\{42\}$ and $[g,d,\emptyset,\{d,m_g\}]$
with value $\{17\}$ are read, resulting in a value of $\{17,42\}$ for $x$.
For \emph{Lock-Centered Reading}, at the read in \texttt{main}, $L\,a = \{\{d\}\}$, and hence $[g,a,\{d\}]$ with value $\{42\}$ does not
fulfill the conditions under which its value is taken into account, resulting in a value of $\{17\}$ for $x$.\qed
\end{example}
To obtain an analysis that is sound and more precise than \emph{Write-Centered} and \emph{Lock-Centered Reading},
both can be combined.
For the combination, we do not rely on a reduced product construction, and instead
exploit the information of all simultaneously tracked data-structures $V,W,P,L$
together for improving the set of writes read at a particular read operation.
For completeness, we list all right-hand-side functions, starting with the one for the initial program point $u_0$ and the empty lockset:
\[
	\begin{array}{lll}
		\init^\sharp\,\_ &=& \Let\; W = \{ g \mapsto \emptyset \mid g \in \G \} \;\In\\
		&& \Let\; P = \{ g \mapsto \{\emptyset\} \mid g \in \G \} \;\In\\
		&& \Let\; V = \{ a \mapsto \emptyset \mid a \in \M \} \;\In\\
		&& \Let\; L = \{ a \mapsto \emptyset \mid a \in \M \} \;\In\\
		&& \Let\; \sigma = \{ x \mapsto \top \mid x \in \X \} \cup \{ g \mapsto \bot \mid g \in \G \} \;\In\\
		&& (\emptyset,(W,P,V,L,\sigma))
	\end{array}
\]
Next, for thread creation, locking and unlocking, and writing to a global:
\[
	\begin{array}{lll}

	\sem{[u,S],x = \create(u_1)}^\sharp\Get	&=&	\Let\;(W,P,V,L,\sigma) = \Get\,[u,S]\;\In	\\
		& & \Let\; W' = \{ g \mapsto \emptyset \mid g \in \G \}\;\In\\
		& & \Let\; P' = \{ g \mapsto \{ \emptyset \} \mid g \in \G \} \;\In\\
		& & \Let\; V' = \{ a \mapsto \emptyset \mid a \in \M \}\;\In\\
		& & \Let\; L' = \{ a \mapsto \emptyset \mid a \in \M \} \;\In\\
		& & \Let\; i = \nu^\sharp\,u\,(P,\sigma)\,u_1\;\In\\
		& & \Let\; \sigma' = \sigma \oplus (\{ \self \mapsto i\} \cup \{ g \mapsto \bot \mid g \in \G\})\;\In\\
		& & \Let\; \rho = \{ [u_1,\emptyset] \mapsto (W',P',V',L',\sigma') \}\;\In\\
		& &	(\rho,(W,P,V,L,\sigma \oplus \{ x \mapsto i \}))\\[1ex]

	\sem{[u,S],\lock(a)}^\sharp\Get
		&=&	\Let\;(W,P,V,L,\sigma) = \Get\,[u,S]\;\In	\\
		& & \Let\;V' = V \oplus \{ a \mapsto \emptyset \} \;\In\\
		& & \Let\;L' = L \oplus \{ a \mapsto \{S\} \}\;\In\\
		& &	(\emptyset, (W,P,V',L',\sigma))	\\[1ex] 

	\sem{[u,S],\unlock(a)}^\sharp\Get
		&=&	\Let\;(W,P,V,L,\sigma) = \Get\,[u,S]\;\In	\\
		& & \Let\;P' = \{ g \mapsto P\,g \sqcup \{S \setminus \{m\}\} \mid g \in \G \}\;\In\\
		& & \Let\;\rho = \{ [g,a, S \setminus \{a\},w] \mapsto \sigma\,g \mid g \in \G,w\in W\,g \}\;\In\\
		& &	(\rho, (W,P',V,L,\sigma))\\[1ex]

	\sem{[u,S],g=x}^\sharp\Get	&=&	\Let\;(W,P,V,L,\sigma) = \Get\,[u,S]\;\In	\\
		& & \Let\;W' = W \oplus \{g \mapsto \{S\} \}\;\In\\
		& & \Let\;P' = P \oplus \{g \mapsto \{S\} \}\;\In\\
		& & \Let\;V' = \{a \mapsto (V\,a \cup \{ g\})  \mid a \in \M\}\;\In\\
		& &	(\emptyset, (W',P',V',L,
			\sigma\oplus\{g\mapsto(\sigma\,x)\}))
	\end{array}
\]
The key point for reading is that the data-structure $P$ is not only used to restrict the set of reads for \emph{Write-Centered Reading},
but can also be used for restricting the set for \emph{Lock-Centered Reading}.
\[
\begin{array}{lll}
\sem{[u,S],x=g}^\sharp\Get	&=&	\Let\;(W,P,V,L,\sigma) = \Get\,[u,S]\;\In	\\
			& &\Let\;d_m = \bigsqcup\{\eta[g,a,S',w]\mid g\not\in V\,a, B\in L\,a, B\cap S' =\emptyset,\\
			& & \qquad \exists S'' \in P\,g: S'' \cap w = \emptyset \}\;\In\\
			& &\Let\;d_g = \bigsqcup\{\eta[g,a,S',w] \mid a\in S, S\cap S' =\emptyset, \\
			& & \qquad \exists S'' \in P\,g: S'' \cap w = \emptyset, \\
			& & \qquad \exists S''' \in P\,g: a \notin S''' \}\;\In\\
			& &\Let\;d = \sigma\, g \sqcup (d_m \sqcap d_g)\;\In\\
			& &	(\emptyset,(W,P,V,L,\sigma\oplus\{x\mapsto d\}))
\end{array}
\]


\section{Proving the Analyses Sound}\label{s:correct}
We begin by making some definitions for local traces that are shared among the subsequent proofs.

For every node $\bar u = (j,u,\sigma)\in\V$ of a local trace $t$, we define the lockset $L_t[\bar u]$
as the set of mutexes whose locks have been acquired by the thread $\sigma\,\self$ and not yet
released. This set can be defined by induction on $j$ by keeping track of the lock and unlock
operations of the thread $\sigma\,\self$.
Furthermore, for any node $\bar u\in\V$, we define $\bar u\da_{t}$ as the local sub-trace
$(\tau',\to'_c,(\to'_a)_{a\in\M})$ of $t$
where $\tau'=(V',E')$ is the subgraph of $\tau$ on the set $V'$ of all nodes $\bar u'$ with
$\bar u'\leq\bar u$, and the relations $\to'_c,(\to'_a)_{a\in\M}$ are the restrictions
of the corresponding relations $\to_c,(\to_a)_{a\in\M}$ of $t$ to $V'$.
In particular $t = (\bar u_t)\da_{t}$ for some node $\bar u_t$ of $t$.
Let $\T_S$ denote the set of local traces $t$ so that $L_t[\bar u_t] = S$ for this node
$\bar u_t$ of $t$.

Moreover, it is convenient for a local trace $t$, to consider the raw trace of the
ego thread, i.e., the thread with thread \emph{id} $\id(t)$.
Let us call this the subgraph of $t$ \emph{raw ego trace}.

Recall that in each local trace $t$, there is for each global $g$ that is ever written in $t$,
a unique \emph{last} write operation $((j-1,u_{j-1},\sigma_{j-1}),g=x,(j,u_j,\sigma_j))$
(at timepoint $j-1$ and program point $u_{j-1}$ with local state $\sigma_{j-1}$).
This means that the endpoint $(j',u',\sigma')$ of any other write operation to $g$ in $t$ precedes
$(j-1,u_{j-1},\sigma_{j-1})$ w.r.t.\ the causality ordering.

Let $\lW_g: \T \to ((\mathbb{N}_0\times\N\times\Sigma) \times \A \times (\mathbb{N}_0\times\N\times\Sigma)) \cup \{\bot \}$
be a function to extract such a last write to $g$ from a local trace,
where $\bot$ indicates no write to $g$ has happened so far in the given local trace.
We call a write occurring at an edge $(\bar u,g=x,\bar u')$
in the raw ego trace \emph{thread-local}.
If there is a thread-local write to a global $g$, there also is a \emph{last}
thread-local write to $g$.
Let $\lTlW_g: \T \to ((\mathbb{N}_0\times\N\times\Sigma) \times \A \times (\mathbb{N}_0\times\N\times\Sigma)) \cup \{\bot \}$ be a function
to extract the last thread-local write to $g$ if it exists, and return $\bot$ otherwise.

Similarly, we call a lock at an edge $(\bar u,\lock(a),\bar u')$
in the raw ego trace \emph{thread-local}.
If there is a thread-local lock of a global $a$, there also is a \emph{last}
thread-local lock of $a$.
Let $\lTlL_a: \T \to ((\mathbb{N}_0\times\N\times\Sigma) \times \A \times (\mathbb{N}_0\times\N\times\Sigma)) \cup \{\bot \}$ be a function
to extract the last thread-local lock of $a$ if it exists, and return $\bot$ otherwise.

Last, we define a function
$\minLSince: \T \to (\mathbb{N}_0\times\N\times\Sigma) \to \UM$ that extracts the
upwards-closed set of minimal locksets the ego thread has held since a given node
of the raw ego trace.
Again, $\minLSince(t,\bar u)$ can computed inductively by
considering the raw ego trace only.

For a set $T$ of local traces, we define the set of values that are written at last thread-local writes to $g$ by
\[
\eval_g(T)= \{ \sigma\,x \mid t \in T,
\lTlW_g\,t = ((j-1,u,\sigma), g = x,\bar u')\}
\]

\subsection{Lock-Centered Reading}\label{s:soundness-lock-centered}
\newcommand{\CMU}{{''}} 
Let the constraint system for \emph{Lock-Centered Reading} from \cref{s:lock-centered} be called
$\C_\LockCentered$.
We construct from the constraint system $\C$ for the concrete collecting semantics a system $\C\CMU$
so that the set of unknowns of $\C\CMU$ matches the set of unknowns of $\C_\LockCentered$.
This means that
each unknown $[u]$ for program point $u$ is replaced with the set of unknowns $[u,S]$, $S\subseteq\M$,
while the unknown $[a]$ for a mutex $a$ is replaced with the set of unknowns
$[g,a,S]$, $g\in\G,S\subseteq\M$.
Accordingly, the constraint system $\C\CMU$ consists of these constraints:
\[
  \begin{array}{llll}
    \relax [u_0,\emptyset] & \supseteq &\textbf{fun}\,\_\to (\emptyset,\init) \\
    \relax [u',S\cup\{a\}] & \supseteq & \sem{[u,S],\lock(a)}\CMU &\quad (u,\lock(a),u')\in\E, a\in\M \\
    \relax [u',S\setminus\{a\}] & \supseteq & \sem{[u,S],\unlock(a)}\CMU & \quad (u,\unlock(a),u') \in \E, a \in \M\\
    \relax [u',S] & \supseteq & \sem{[u,S],A}\CMU & \quad (u,A,u') \in \E,\forall a \in\M:\\
    & & & \qquad A\neq \lock(a), A \neq \unlock(a)  \\
  \end{array}
\]
where new right-hand-side functions (relative to the semantics $\sem{e}$ of control-flow edges $e$)
are given by
\[
\begin{array}{lll}
\sem{[u,S],x = \create(u_1)}\CMU\,\Get\CMU	&=&
		\Let\;T = \sem{e}(\Get\CMU\,[u,S])\;\In	\\
&&		(\{[u_1,\emptyset]\mapsto\new\,u_1\,(\Get\CMU\,[u,S])\},T)	\\[1ex]

\sem{[u,S],\lock(a)}\CMU\,\Get\CMU &=& \Let\;T' = \bigcup \{ \Get\CMU\,[g,a,S'] \mid g\in\G, S' \subseteq \M \}\;\In \\
  & & (\emptyset,\sem{e}(\Get\CMU\,[u,S],T'))\\[1ex]

\sem{[u,S],\unlock(a)}\CMU\,\Get\CMU
    &=&	\Let\;T = \sem{e}(\Get\CMU\,[u,S])\;\In	\\
    & & \Let\;\rho = \{[g,a,S\setminus \{ a \}]\mapsto T \mid g \in \G\}\;\In \\
    & &	(\rho, T)\\[1ex]
  \end{array}
\] and
\[
\begin{array}{lll}
\sem{[u,S],x=g}\CMU\,\Get\CMU	&=& (\emptyset,\sem{e}(\Get\CMU\,[u,S]))	\\[1ex]

\sem{[u,S],g=x}\CMU\,\Get\CMU	&=& (\emptyset,\sem{e}(\Get\CMU\,[u,S]))
\end{array}
\]
In contrast to the right-hand-side functions of $\C$, the new right-hand sides now also re-direct
side-effects not to unknowns $[a], a\in\M$, but to appropriate more specific unknowns
$[g,a,S'], g\in\G, a\in\M,S'\subseteq\M$.
For a mapping $\Get$ from the unknowns of $\C$ to $2^\T$, we construct a mapping $\GetPrime$
from the unknowns of $\C\CMU$ to $2^\T$ by
\[
\begin{array}{llll}
  \GetPrime\,[u,S] &=& \Get[u] \cap \T_S \quad & \text{ for } u \in \N, S \subseteq \M \\
  \GetPrime\,[g,a,S] &=& \Get[a] \cap \T_S  & \text { for } g \in \G, a\in\M, S \subseteq \M \\
\end{array}
\]
Thus,
\[
\begin{array}{lll}
\Get[u]	&=& \bigcup\{\GetPrime\,[u,S]\mid S\subseteq\M\}	\\
\Get[a] &=& \bigcup\{\GetPrime\,[g,a,S]\mid g\in\G, S\subseteq\M \}
\end{array}
\]
for all program points $u$ and mutexes $a$.
Moreover, we have,
\begin{proposition}\label{p:GetGetCMU}
The following two statements are equivalent:
\begin{itemize}
  \item $\Get$ is the least solution of $\C$;
  \item $\GetPrime$ is the least solution of $\C\CMU$.
\end{itemize}
\end{proposition}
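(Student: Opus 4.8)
The plan is to recast the claim as an instance of fixpoint transfer, with the lockset-indexed partition $(\T_S)_{S\subseteq\M}$ serving as the bridge between the two systems. I would write $L_1$ for the complete lattice of mappings from the unknowns of $\C$ into $2^\T$ and $L_2$ for the analogous lattice for $\C\CMU$, both ordered by pointwise inclusion, and let $\Phi\colon L_1\to L_1$ and $\Phi\CMU\colon L_2\to L_2$ be the associated collecting operators that send a candidate to the pointwise union of all its contributions and side-effects. Since all right-hand sides are monotone, the least solution of each system is the least fixpoint of the corresponding operator; and since, by construction, each $\sem{e}$ and $\new$ distributes over unions in every argument, both operators are continuous, so each least fixpoint is the supremum of the Kleene chain starting at $\bot$.

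Next I would introduce the two translation maps. Let $\alpha\colon L_1\to L_2$ be the splitting map of the proposition, $\alpha(\Get)[u,S]=\Get[u]\cap\T_S$ and $\alpha(\Get)[g,a,S]=\Get[a]\cap\T_S$, and let $\gamma\colon L_2\to L_1$ be $\gamma(\zeta)[u]=\bigcup_S\zeta[u,S]$ and $\gamma(\zeta)[a]=\bigcup_{g,S}\zeta[g,a,S]$. Because intersection with a fixed set distributes over arbitrary unions, $\alpha$ preserves all joins and satisfies $\alpha(\bot)=\bot$. Since the family $(\T_S)_{S\subseteq\M}$ partitions $\T$---each local trace $t$ has a unique sink $\bar u_t$ and hence a unique held lockset $L_t[\bar u_t]$, so $\bigcup_S\T_S=\T$---and since $\alpha(\Get)[g,a,S]$ does not depend on $g$, a direct computation gives $\gamma\circ\alpha=\mathrm{id}_{L_1}$.

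The heart of the argument is the commutation identity $\alpha\circ\Phi=\Phi\CMU\circ\alpha$, and this is the step I expect to be the main obstacle. Its enabling observation is that every action shifts the held lockset in a fixed way: for $t\in\T_S$, locking $a$ produces traces in $\T_{S\cup\{a\}}$, unlocking $a$ produces traces in $\T_{S\setminus\{a\}}$, and all remaining actions (guards, local assignments, thread creation, and reads from or writes to globals) leave the lockset unchanged---equivalently, $\sem{e}$ applied to a $\T_S$-restricted argument lands in a single class $\T_{S'}$ whose index $S'$ is exactly the one appearing on the left of the matching $\C\CMU$ constraint. Combining this with the distributivity of $\sem{e}$ over unions and with $\bigcup_S\T_S=\T$, I would verify the identity unknown by unknown and edge by edge. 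The two cases carrying real content are $\lock(a)$, where the second argument of $\sem{e}$ in $\C\CMU$ collapses via $\bigcup_{g,S'}\alpha(\Get)[g,a,S']=\Get[a]$, so that the refined contribution to $[u',S\cup\{a\}]$ is precisely the $\T_{S\cup\{a\}}$-part of $\sem{e}(\Get[u],\Get[a])$; and $\unlock(a)$, where the side-effect to $[g,a,S\setminus\{a\}]$ equals $\sem{e}(\Get[u]\cap\T_S)$, i.e.\ the $\T_{S\setminus\{a\}}$-part of the side-effect to $[a]$, uniformly in $g$. I would flag that the initialisation of the mutex unknowns has to be read as the $\alpha$-image of $[a]\supseteq\init$, which places $\init\subseteq\T_\emptyset$ into $[g,a,\emptyset]$ only; this implicit initial side-effect is the one place demanding care, as the displayed presentation of $\C\CMU$ lists only the edge constraints.

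Finally, since $\alpha$ is join-preserving with $\alpha(\bot)=\bot$, both operators are continuous, and $\alpha\circ\Phi=\Phi\CMU\circ\alpha$, the standard Kleene transfer gives $\alpha(\mathrm{lfp}\,\Phi)=\mathrm{lfp}\,\Phi\CMU$; that is, $\alpha$ carries the least solution of $\C$ to the least solution of $\C\CMU$. Both directions of the equivalence then follow: if $\Get$ is the least solution of $\C$, then $\GetPrime=\alpha(\Get)$ is the least solution of $\C\CMU$; and if $\GetPrime=\alpha(\Get)$ is the least solution of $\C\CMU$, then $\alpha(\Get)=\alpha(\mathrm{lfp}\,\Phi)$, and applying $\gamma$ together with $\gamma\circ\alpha=\mathrm{id}_{L_1}$ yields $\Get=\mathrm{lfp}\,\Phi$, the least solution of $\C$.
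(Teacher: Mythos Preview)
Your proposal is correct and amounts to a careful elaboration of what the paper's one-line proof (``by fixpoint induction'') entails: you make the splitting map $\alpha$ and the merging map $\gamma$ explicit, verify the commutation $\alpha\circ\Phi=\Phi\CMU\circ\alpha$ edge by edge using that each action shifts the held lockset deterministically, and conclude via Kleene transfer together with $\gamma\circ\alpha=\mathrm{id}$. Your flag about the initialisation of the mutex unknowns---that the $\alpha$-image of $[a]\supseteq\init$ must appear as $[g,a,\emptyset]\supseteq\init$ in $\C\CMU$ for the correspondence to go through---is well taken; the displayed constraints for $\C\CMU$ omit it, but it is indeed needed.
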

\begin{proof}
 The proof of \cref{p:GetGetCMU} is by fixpoint induction.  \qed
\end{proof}
The next proposition indicates that the new unknown $[g,a,S]$ collects a superset of local
traces whose last write to the global $g$ can be read by a thread satisfying the specific assumptions
(L0) through (L2) below.
\begin{proposition}\label{prop:read'}
Consider the $i$-th approximation $\Get^i$ to the least solution $\GetPrime$ of constraint system $\C\CMU$,
a control-flow edge $(u,x=g,u')$ of the program, and a local trace $t \in\Get^i\,[u',S]$
in which the last action is $x=g$,
that ends in $\bar u' = (j,u',\sigma)$, i.e., $t = (\bar u')\downarrow_t$.

\noindent For every mutex $a \in M$,
let $L\,a$ denote the singleton set containing the background lockset of the ego thread at the last
thread-local lock of $a$, given that the ego thread has ever acquired $a$ in $t$,
and set $L\,a= \emptyset$ otherwise. 
Also, for every mutex $a$, let $V\,a$ the set of globals written by the ego thread since $a$ was last acquired by it,
or all globals written since the start of the ego thread in case it has never acquired $a$.

\noindent Then, the value $d = \sigma\,x$ that is read for $g$ is produced by a write to $g$
which
\begin{itemize}
  \item either is the last thread-local write to $g$ in $t$; or
  \item is the last \emph{thread-local} write to $g$ in some local trace stored at $\Get^{i'}\,[g,a,S']$ for some $i' < i$, i.e.,
	\[
	d\in\eval_g (\Get^{i'}\,[g,a,S'])
	\]
  where
	\begin{enumerate}
	\item[(L0)]	$a$ has been acquired by the ego thread, i.e., $L\, a \neq \emptyset$,
	\item[(L1)]	$L\,a = \{B\}$ such that $B\cap S' = \emptyset$,
	\item[(L2)]	$g \not\in V\,a$,
	\end{enumerate}
\end{itemize}
\end{proposition}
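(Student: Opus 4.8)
The plan is to reduce the claim to a statement about the unique causally-last write to $g$ and then locate the mutex through which its value is transmitted to the ego thread. First I would invoke the Globals consistency condition of the trace formalism (\cref{s:details}): at the final edge $(u,x=g,u')$ of $t$ the value $d=\sigma\,x$ read for $g$ is exactly the value written by the unique write $e_w=((j'-1,u'',\sigma''),g=y,(j',\ldots))$ that is maximal w.r.t.\ the causality order $\leq$ among all writes to $g$ strictly below the read node $R=\bar u'$; thus $d=\sigma''\,y$ and $e_w=\lW_g\,t$. I would then split on whether $e_w$ lies on the raw ego trace. If it is thread-local it is automatically the \emph{last} thread-local write to $g$: every thread-local write precedes $R$ in program order, while $e_w$ dominates all writes to $g$ below $R$ in $\leq$, so no later thread-local write to $g$ can exist. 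This yields the first alternative.

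For the non-thread-local case the key observation is that the writing thread publishes $e_w$ at each of its subsequent unlocks. Concretely, for every mutex $a$ held by the writer at $e_w$ and released afterwards, the sub-trace $t_a=\bar u_a\downarrow_t$ ending at that $\unlock(a)$ node $\bar u_a$ satisfies $\lTlW_g\,t_a=e_w$ --- a later thread-local write of the writer to $g$ would contradict causal maximality of $e_w$ --- and by the way $t$ is assembled through $\C\CMU$ this sub-trace was side-effected into the unknown $[g,a,L_t[\bar u_a]\setminus\{a\}]$ at some strictly earlier approximation, so $d\in\eval_g(\Get^{i'}[g,a,L_t[\bar u_a]\setminus\{a\}])$ for some $i'<i$. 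In particular, because of the atomicity discipline the candidate $a=m_g$ is always available, with background lockset $S'=L_t[e_w]\setminus\{m_g\}$. It therefore remains only to exhibit one such mutex $a$ for which the ego-thread conditions (L0)--(L2) hold.

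To pick $a$ I would run the acquisition-history argument. Condition (L2), $g\notin V\,a$, holds for the relevant $a$ because any thread-local write of the ego thread to $g$ is causally below $e_w$, hence before the pertinent lock of $a$, so no ego write to $g$ occurs afterwards; and (L0) holds as soon as the ego thread has acquired $a$. The crux is (L1). Taking first $a=m_g$: the ego thread acquires $m_g$ immediately before the read, so $L\,m_g=\{S\setminus\{m_g\}\}$, and disjointness $(S\setminus\{m_g\})\cap S'=\emptyset$ holds unless the ego thread holds at the read some mutex $c\neq m_g$ that the writer also held at $e_w$. In that case I switch to $a=c$: since both threads hold $c$, their $c$-critical sections are totally ordered by $\to_c$, and $e_w\leq R$ forces the writer's $c$-section (which contains $e_w$) to precede the ego thread's; hence $e_w$ precedes the writer's $\unlock(c)$, the corresponding $t_c$ again satisfies $\lTlW_g\,t_c=e_w$, and $c\in S$ shows the ego thread has indeed acquired $c$ on the way to the read.

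The main obstacle, and where I expect the real work to lie, is reconciling \emph{lazy} reading with the particular handoff that transmits $e_w$: the analysis reads $[g,a,S']$ through the singleton $L\,a=\{B\}$ given by the background lockset at the \emph{last} thread-local lock of $a$, whereas the handoff carrying $e_w$ may occur at an earlier lock of $a$. To close this gap I would exploit the total order of the per-mutex chain $\to_a$ (the acquisition-history abstraction of \cite{Kahlon2005,Kahlon2007}) to show that $e_w$ still precedes the unlock matching the ego thread's last lock of $a$ and that the required disjointness $B\cap S'=\emptyset$ is inherited by that last lock; the same monotonicity of the chains ensures the switch above terminates at a mutex whose last-lock background is disjoint from the matching writer background. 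Two boundary situations then need separate but routine treatment: the case where causality enters the ego thread via $\to_c$ rather than via a lock (the value of $e_w$ is still recovered through the $m_g$ handoff, and the ``never acquired'' default of $V\,a$ applies), and the fixpoint-iteration fact --- read off as in \cref{p:GetGetCMU} --- that any sub-trace of $t$ ending in $\unlock(a)$ is already present at a strictly earlier approximation $i'<i$.
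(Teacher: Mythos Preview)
Your plan is sound and lands on the same argument as the paper, but the paper organises the non-thread-local case more directly. Rather than starting from $m_g$ and switching whenever (L1) fails, the paper picks $a$ once as the \emph{last} mutex (in the writer's program order) that is (i) unlocked by the writer after $e_w$, (ii) also locked by the ego in $t$, and (iii) such that the ego's \emph{last} lock of $a$ causally succeeds that unlock. Condition~(iii) is precisely what dissolves your ``main obstacle'': it pins the ego's last lock of $a$ causally after the writer's unlock, so (L2) is immediate (an ego-write to $g$ after that lock would be causally after $e_w$, contradicting maximality of $e_w$), and no separate reasoning about earlier locks of $a$ is needed. Condition~(L1) then falls out by extremality: any $c\in B\cap S'$ is still held by the writer at the chosen unlock, hence the writer unlocks $c$ strictly later; and since the ego holds $c$ at its last lock of $a$ (so also at the read), the ego's last lock of $c$ must succeed that later unlock. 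Thus $c$ itself satisfies (i)--(iii), contradicting the choice of $a$ as last.

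Your switching argument is the iterative unwinding of this extremality step and is correct in outline, but two points need to be made explicit. First, termination: each switch moves to a strictly later unlock in the writer's finite program order (because $c\in S'$ means the writer still holds $c$ at the current unlock), so the process halts. Second, your (L2) argument and your ``handoff at an earlier lock'' worry both reduce to one fact you should isolate rather than leave to chain-chasing: for any mutex $a$ held by the ego at the read, the ego's $a$-section containing the read has no unlock before the sink $R$, so by the total $\to_a$ order every other $a$-section in $t$---in particular the writer's section containing $e_w$---precedes it; hence the writer's relevant $\unlock(a)$ causally precedes the ego's \emph{last} $\lock(a)$. Stating this once makes (L2) immediate for every candidate $a$ in your switching and removes the obstacle you flag. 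As written, your (L2) justification (``any ego-write is causally below $e_w$, hence before the pertinent lock'') silently assumes exactly this fact, so it is premature before the choice of $a$ is tied down.
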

\begin{proof}
The proof is by fixpoint induction.
We prove that the values read non-thread-locally for a global $g$ at some $(u,x=g,u')$ when
constructing the local traces of $\Get^i$, are the last \emph{thread-local} writes of a local trace
$t'$ ending in an unlock operation that is added to
$\Get^{i'}\,[g,a,S']$ in some prior iteration $i' < i$ for certain $a$ and $S'$ satisfying (L1) and (L2).

This property holds for $i=0$, as in $\Get^0$, all unknowns for program points and currently held locksets
(except for the initial program point and the empty lockset) are $\emptyset$, and therefore no reads from globals or unlocks can happen.

For the induction step $i > 0$, there are two proof obligations: First that the property holds
for all reads from a global, and additionally that all traces ending in an unlock operation are
once more side-effected to appropriate unknowns in this iteration.

For the first obligation, consider a local trace $t \in\Get^i\,[u',S]$ where the last action is
$x=g$.
There is a last write to $g$ in $t$:
\[
\lW_g\,t = ((j'-1,u_{j'-1},\sigma_{j'-1}), g = x', \bar u'') = l
\]

Let $i_0 = \id\,t$ and $i_1 = \sigma_{j'-1}\,\self$ the thread \emph{id}s of the reading ego thread
and the thread performing the last write, respectively.
We distinguish two cases:\\
\emph{Case 1: $i_0 = i_1$.} The last write is thread-local to $t$ (and $l$ is therefore also the last thread-local write to $g$ in $t$).\\
\emph{Case 2: $i_0 \neq i_1$.} The last write is not thread-local.
Consider the maximal sub-trace $t''$ of $t$ with $\id(t'') = i_1$.
%
Let $a$ denote the last (w.r.t.\ to the program order) mutex unlocked by $i_1$ in $t''$
for which the following additional conditions hold:
\begin{itemize}
  \item $a$ is unlocked in $t''$ by $i_1$ after the last write to $g$ ($l$)
  \item $a$ has also been locked by $i_0$ in $t$
  \item the last lock of $a$ by $i_0$ succeeds the unlock of $a$ by $i_1$ w.r.t.\ the causality order
$\leq$ of $t$.
\end{itemize}
We observe that there is at least one mutex, namely $m_g$, which is unlocked by $i_1$ after its last write to $g$
and subsequently locked by $i_0$ before $g$ is read.
Let $S'$ denote the background lockset held at the last action $\unlock(a)$ by $i_1$.
Let $B$ denote the background lockset at the last $\lock(a)$ of the ego thread $i_0$, i.e., $L\,a=\{B\}$.

First, assume that property (L1) is violated for $a$.
Then there is some $c\in B\cap S'$, implying that $t''$ cannot be a local subtrace of $t$.
To see this, assume for a contradiction that $c$ later is unlocked by $i_1$ so that $i_0$ can
acquire $c$.
Then, however, the conditions are also fulfilled for $c$, meaning that $a$ is not the \emph{last} such mutex.
If on the other hand, $c$ is never unlocked by $i_1$ in $t''$, thread $i_0$ will not be able to acquire
$c$ before its last operation $\lock(a)$, yielding a contradiction.

Accordingly, now assume that $B\cap S'=\emptyset$.
We claim that then also $g\not\in V\,a$ must hold.
If this were not the case, some thread-local write to $g$  by $i_0$ has happened after
the last operation $\lock(a)$. Then, however, the write in $t''$ happens before this write to $g$ by $i_0$,
and is thus not the last write, yielding once again a contradiction.

The local trace $t'$ which is the sub-trace of $t''$ ending in this $\unlock(a)$ of $i_1$ thus contains the last write to $g$ in $t$.
It was constructed during some earlier iteration $i' < i$ and, by induction hypothesis, added to $\Get^{i'}[g,a,S']$
during the $i'$-th iteration.
We conclude that the value $d$ read from $g$ by $i_0$ is given by
$d = \sigma_{j'-1}\,x' \in \eval_g (\Get^{i'}\,[g,a,S']) \subseteq  \eval_g (\Get^{i}\,[g,a,S'])$.

It remains to show that any trace $t$ with $\last(t)=\unlock(a)$, $a\in\M$ ending in
$\bar u''$, i.e., $t=(\bar u'')\downarrow_t$,
produced in this iteration $i$ is side-effected to $\Get^i\,[g,a,S]$
where $S = L_t[\bar u'']$. This, however, follows from the construction of $\C\CMU$.
\qed
\end{proof}
Let us now relate the post-solutions of $\C\CMU$ and $C_\LockCentered$ to each other.
As a first step, we define a function
$\beta$ that extracts from a local trace $t$ for each mutex $a$
\begin{itemize}
  \item the set $V\,a$ of global variables that were written by the ego thread since $a$ was last acquired by it,
  or all global variables written since the start of the ego thread in case it has never acquired $a$; and
  \item the set $L\,a$ containing the background lockset when $a$ was acquired by the ego thread last.
\end{itemize}
Additionally, $\beta$ extracts a map $\sigma$ that contains
the values of the locals at the sink of $t$ as well as the last-written thread-local values of globals. Thus, we define
\[
  \begin{array}{lll}
  \beta\,t &=& (V,L,\sigma) \qquad \text{where} \\[1ex]

  V &=& \{ a \mapsto \{ g \mid g\in\G, (\_,g=x, \bar u ') = \lTlW_g\,t, \bar u \leq \bar u'  \} \mid  a \in \M,\\
  && \qquad (\_,\lock(a), \bar u) = \lTlL_a\,t\}\\
  && \cup\; \{ a \mapsto \{ g \mid g\in\G, (\_,g=x,\_) = \lTlW_g\,t \} \mid a \in \M,\\
  && \qquad \bot = \lTlL_a\,t \} \\[1ex]

  L &=& \{ a \mapsto \{ L_t[\bar u] \} \mid a \in \M, (\bar u,\lock(a),\_) = \lTlL_a\,t \} \\
  && \cup \; \{ a \mapsto \emptyset \mid a \in \M, \bot = \lTlL_a\,t   \}\\[1ex]

  \sigma &=&  \{ x \mapsto \{t(x)\} \mid x \in \X \} \cup \{ g \mapsto \emptyset \mid g \in \G, \bot = \lTlW_{g}\,t\}\\
  && \cup \; \{ g \mapsto \{\sigma_{j-1}\,x\} \mid g\in\G, ((j-1,u_{j-1},\sigma_{j-1}),g=x,\_) = \lTlW_{g}\,t\}
  \end{array}
\]
\noindent
This abstraction function $\beta$ is used to specify concretization functions for the values of unknowns $[u,S]$ for program points
and currently held locksets as well as for unknowns $[g,a,S]$.
\[
\begin{array}{lll}
  \gamma_{u,S}(V^\sharp,L^\sharp,\sigma^\sharp) &=& \{ t \in \T_S \mid \loc\,t=u, \beta\,t = (V,L,\sigma),\\
  && \quad  \sigma \subseteq \gamma_\D\circ\sigma^\sharp, V \sqsubseteq V^\sharp, L \sqsubseteq L^\sharp \}\\[1ex]
\end{array}
\]
where $\subseteq$ and $\sqsubseteq$ are extended point-wise from domains to maps into domains.
Moreover,
\[
\begin{array}{lll}
  \gamma_{g,a,S}(v) &=& \{ t \in \T_S \mid \last\,t=\unlock(a),\\
  && \quad ((j-1,u_{j-1},\sigma_{j-1}),g=x, \_) = \lTlW_g\,t, \sigma_{j-1}\,x \in \gamma_\D(v) \}\\
  && \cup\;\{ t \in \T_S \mid \last\,t=\unlock(a), \lTlW_g\,t = \bot \}
\end{array}
\]
where $\gamma_\D:\D\to 2^\V$ is the concretization function for abstract values in $\D$.

Let $\Get_\LockCentered$ be a post-solution of $\C_\LockCentered$. We then construct from it a mapping $\Get'$ by:
\[
  \begin{array}{llll}
    \Get' [u,S] &=& \gamma_{u,S}(\Get_\LockCentered\,[u,S]) \qquad & u\in\N, S\subseteq\M\\
    \Get' [g,a,S] &=& \gamma_{g,a,S}(\Get_\LockCentered\,[g,a,S]) \qquad & g\in\G, a\in\M, S\subseteq\M
  \end{array}
\]
Altogether, the correctness of the constraint system $\C_\LockCentered$ follows from the
following theorem.

\begin{theorem}\label{t:lock-centered}
    Every post-solution of $\C_\LockCentered$ is sound w.r.t.\ the local trace semantics.
\end{theorem}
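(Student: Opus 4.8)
\textbf{The plan} is to establish $\GetPrime \sqsubseteq \Get'$ by fixpoint induction over the Kleene iterates $\Get^0 \sqsubseteq \Get^1 \sqsubseteq \cdots$ of the least solution $\GetPrime$ of $\C\CMU$, which by \cref{p:GetGetCMU} captures the concrete local-trace semantics exactly (split by held lockset, and for mutexes by global and background lockset). I would prove $\Get^i \sqsubseteq \Get'$ for every $i$, so that $\GetPrime = \bigsqcup_i \Get^i \sqsubseteq \Get'$. Unfolding $\Get'$ through $\gamma_{u,S}$ and $\gamma_{g,a,S}$, this inclusion says that for every concrete trace $t$ reaching $(u,S)$ we have $\beta\,t \sqsubseteq \Get_\LockCentered[u,S]$ componentwise, and that every trace ending in $\unlock(a)$ is correctly summarized at $[g,a,S]$ --- which is precisely the asserted soundness. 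The base case is immediate: $\Get^0$ maps every unknown other than $[u_0,\emptyset]$ to $\emptyset$, and at $[u_0,\emptyset]$ it suffices that $\beta(\mathbf{0}_\sigma) \sqsubseteq \init^\sharp \sqsubseteq \Get_\LockCentered[u_0,\emptyset]$, the last step because $\Get_\LockCentered$ is a post-solution.

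For the induction step, every trace in $\Get^{i+1}$ arises by applying one concrete edge operation $\sem{e}$ of $\C\CMU$ (or $\new$) to traces in $\Get^i$. For thread creation, locking, writing, guards and local computations I would compute $\beta$ of the extended trace from $\beta$ of its predecessor --- bounded by $\Get_\LockCentered$ by the induction hypothesis --- and match it against the abstract transformer $\sem{\cdot}^\sharp$; since $\Get_\LockCentered$ is a post-solution, the outcome is below $\Get_\LockCentered$ at the target unknown, placing the trace in $\Get'$ there. The only bookkeeping is checking that $\beta$ evolves as the abstract transformer prescribes: $\lock(a)$ resets $V\,a$ to $\emptyset$ and records the current background lockset in $L\,a$, the write $g=x$ adds $g$ to every $V\,a$, and for $\unlock(a)$ a trace with $\last = \unlock(a)$ reaching $(u,S)$ is side-effected, its membership in $\gamma_{g,a,S\setminus\{a\}}(\Get_\LockCentered[g,a,S\setminus\{a\}])$ following from the abstract unlock's side-effect $\{[g,a,S\setminus\{a\}] \mapsto \sigma\,g\}$ together with $\sigma\,g$ abstracting the last thread-local write value tracked by $\beta$.

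The main obstacle is the read edge $(u,x=g,u')$. For a trace $t \in \Get^{i+1}[u',S]$ ending in $x=g$, I would invoke \cref{prop:read'} to locate the origin of the value $d = \sigma\,x$ read in $t$: either $d$ is the ego thread's last thread-local write to $g$ --- covered by the component $\sigma\,g$, since $\beta$ records exactly this value and $\sigma\,g \subseteq \gamma_\D(\sigma^\sharp\,g)$ --- or $d \in \eval_g(\Get^{i'}[g,a,S'])$ for some $i' < i+1$, mutex $a$ and background lockset $S'$ satisfying (L0)--(L2), phrased via the concrete $V,L$ extracted by $\beta$. By the induction hypothesis $\Get^{i'}[g,a,S'] \subseteq \Get'[g,a,S'] = \gamma_{g,a,S'}(\Get_\LockCentered[g,a,S'])$, so this write value lies in $\gamma_\D(\Get_\LockCentered[g,a,S'])$ --- the source trace living at a strictly earlier iterate is what breaks the apparent circularity. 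It then remains to lift (L0)--(L2) to the abstract read's side-conditions using $V \sqsubseteq V^\sharp$ and $L \sqsubseteq L^\sharp$: from $L_{\text{conc}}\,a = \{B\}$ with $B \cap S' = \emptyset$ and the inclusion $L_{\text{conc}}\,a \subseteq L^\sharp\,a$ in $\UM$ one extracts a minimal $B' \subseteq B$ of $L^\sharp\,a$, whence $B' \cap S' = \emptyset$; and the must-ordering (by superset) on $V$ gives $V^\sharp\,a \subseteq V_{\text{conc}}\,a$, so $g \notin V_{\text{conc}}\,a$ forces $g \notin V^\sharp\,a$. Hence $[g,a,S']$ is admitted into the abstract join $d^\sharp$, and $\gamma_\D(\Get_\LockCentered[g,a,S']) \subseteq \gamma_\D(d^\sharp) \subseteq \gamma_\D(\sigma^\sharp\,x)$ by the post-solution property, so $t \in \Get'[u',S]$. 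With the routine edges this completes the induction and hence the proof.
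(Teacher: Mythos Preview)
Your proposal is correct and follows essentially the same route as the paper: fixpoint induction over the Kleene iterates of $\C\CMU$ to show $\Get^i \subseteq \Get'$, handling each edge type by tracking how $\beta$ evolves and matching it to the abstract transformer, and invoking \cref{prop:read'} for the read edge. Your explicit justification for lifting (L0)--(L2) from the concrete $V,L$ to the abstract $V^\sharp,L^\sharp$ via the respective orderings (superset for $V$, upward-closure in $\UM$ for $L$) is exactly the step the paper compresses into ``By Induction Hypothesis,'' so if anything you are slightly more detailed there.
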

\begin{proof}
Recall from \cref{p:GetGetCMU}, that the least solution of $\C\CMU$ is sound
w.r.t.\ the local trace semantics as specified by the constraint system $\C$.

It thus suffices to prove that the mapping $\Get'$
as constructed above, is a post-solution of the constraint system $\C\CMU$.
For that, we verify by fixpoint induction that for the $i$-th approximation $\Get^i$
to the least solution $\GetPrime$ of $\C\CMU$, $\Get^i \subseteq \Get'$ holds.
To this end, we verify for the start point $u_0$ and the empty lockset, that
\[
  (\emptyset,\init) \subseteq (\Get',\Get'\,[u_0,\emptyset])
\] holds and for each edge $(u,A,v)$ of the control-flow graph
and each possible lockset $S$, that
\[
  \sem{[u,S],A}\CMU\, \Get^{i-1} \subseteq(\Get',\Get'\,[v,S'])
\] holds.

\noindent First, for the start point $u_0$ and the empty lockset:
\[
  (\emptyset,\init) \subseteq (\Get',\Get'\,[u_0,\emptyset])
\]
As there are no side-effects triggered, it suffices to check that $\init \subseteq \Get'\,[u_0,\emptyset]$.
\[
	\begin{array}{lll}
		\init^\sharp_\LockCentered\,\_ &=& \Let\; V^\sharp = \{ a \mapsto \emptyset \mid a \in \M \} \;\In\\
		&& \Let\; L^\sharp = \{ a \mapsto \emptyset \mid a \in \M \} \;\In\\
		&& \Let\; \sigma^\sharp = \{ x \mapsto \top \mid x \in \X \} \cup \{ g \mapsto \bot \mid g \in \G \} \;\In\\
		&& (\emptyset,(V^\sharp,L^\sharp,\sigma^\sharp))
	\end{array}
\]
Let $\Get_\LockCentered\,[u_0,\emptyset] = (V^{\sharp'},L^{\sharp'},\sigma^{\sharp'})$ the
value provided by $\Get_\LockCentered$ for the start point and the empty lockset.
Since $\Get_\LockCentered$ is a post-solution of $\C_\LockCentered$,
$V^\sharp \sqsubseteq V^{\sharp'}$, $L^\sharp \sqsubseteq L^{\sharp'}$, and $\sigma^\sharp \sqsubseteq \sigma^{\sharp'}$ all hold.
Then, by definition:
\[
\begin{array}{lll}
\Get'[u_0,\emptyset] = \gamma_{u_0,\emptyset}(V^{\sharp'},L^{\sharp'},\sigma^{\sharp'}) &=&
    \{ t \in \T_\emptyset \mid \loc\,t=u_0, \beta\,t = (V,L,\sigma),\\
    && \quad  \sigma \subseteq \gamma_\D\circ\sigma^{\sharp'}, V \sqsubseteq V^{\sharp'}, L \sqsubseteq L^{\sharp'} \}
\end{array}
\]
For every trace $t \in \init$, let
\[
\begin{array}{lll}
  \beta\,t &=& (V,L,\sigma) \text{ where: }\\[1ex]

  V &=& \{ a \mapsto \{ g \mid g\in\G, (\_,g=x, \bar u ') = \lTlW_{g}\,t, \bar u \leq \bar u'  \} \mid a\in\M, \\
  && \qquad (\_,\lock(a), \bar u) = \lTlL_a\,t\}\\
  && \cup \; \{ a \mapsto \{ g \mid g\in\G, (\_,g=x, \_) = \lTlW_{g}\,t \} \mid a\in\M,\\
  && \qquad \bot = \lTlL_a\,t \} \\
  &=& \{ a \mapsto \emptyset \mid a \in \M \}\\[1ex]

  L &=& \{ a \mapsto \{ L_{t}[\bar u] \} \mid a \in \M, (\bar u,\lock(a), \_) = \lTlL_a\,t \} \\
  && \cup \; \{ a \mapsto \emptyset \mid a \in \M, \bot = \lTlL_a\,t   \}\\
  &=& \{ a \mapsto \emptyset \mid a \in \M \}\\[1ex]

  \sigma &=&  \{ x \mapsto \{t(x)\} \mid x \in\X \} \cup \{ g \mapsto \emptyset \mid g \in \G, \bot = \lTlW_{g}\,t\} \\
  && \cup \; \{ g \mapsto \{\sigma_{j-1}\,x\}  \mid g \in \G, ((j-1,u_{j-1},\sigma_{j-1}),g=x, \_) = \lTlW_{g}\,t\} \\
  &=& \{ x \mapsto \{t(x)\} \mid x \in\X \} \cup \{ g \mapsto \emptyset \mid g \in \G\}
\end{array}
\]
Thus,
\[
  \begin{array}{lll}
    L &=& L^{\sharp} \sqsubseteq L^{\sharp'}\\
    V &=& V^{\sharp} \sqsubseteq V^{\sharp'}\\
    \sigma &=& \{ x \mapsto \{t(x)\} \mid x \in\X \} \cup \{ g \mapsto \emptyset \mid g \in \G\} \\
    &\subseteq& \gamma_\D\circ(\{ x \mapsto \top \mid x \in\X \} \cup \{ g \mapsto \bot \mid g \in \G\}) = \gamma_\D\circ\sigma^\sharp \\
    &\subseteq& \gamma_\D\circ\sigma^{\sharp'}
  \end{array}
\]
Altogether, $t \in \Get'\,[u_0,\emptyset]$ for all $t \in \init$.

\vspace{1em}
\noindent Next, we verify for each edge $(u,A,v)$ of the control-flow graph
and each possible lockset $S$, that
\[
  \sem{[u,S],A}\CMU\, \Get^{i-1} \subseteq(\Get',\Get'\,[v,S'])
\] holds.

\vspace{1em}
\noindent We first consider a write to a global $g=x$.
\[
\begin{array}{lll}
    \sem{[u,S],g=x}\CMU\,\Get\CMU	&=& (\emptyset,\sem{e}(\Get\CMU\,[u,S]))\\[1ex]

    \sem{[u,S],g=x}^\sharp_\LockCentered\Get_\LockCentered	&=&	\Let\;(V^\sharp,L^\sharp,\sigma^\sharp) = \Get_\LockCentered\,[u,S]\;\In	\\
    & & \Let\;V^{\sharp''} = \{a \mapsto (V^{\sharp}\,a \cup \{ g \}) \mid a \in \M \}\;\In\\
    & & \Let\;\sigma^{\sharp''} = \sigma^\sharp\oplus\{g\mapsto(\sigma^\sharp\,x)\} \;\In\\
    & &	(\emptyset, (V^{\sharp''},L^\sharp,\sigma^{\sharp''}))
\end{array}
\]
Let $\Get_\LockCentered\,[u,S] = (V^\sharp,L^\sharp,\sigma^\sharp)$ and
$\Get_\LockCentered\,[v,S] = (V^{\sharp'},L^{\sharp'},\sigma^{\sharp'})$ the
value provided by $\Get_\LockCentered$ for the end point of the given control-flow edge and lockset.
Since $\Get_\LockCentered$ is a post-solution of $\C_\LockCentered$,
$V^{\sharp''} \sqsubseteq V^{\sharp'}$, $L^{\sharp} \sqsubseteq L^{\sharp'}$, and $\sigma^{\sharp''} \sqsubseteq \sigma^{\sharp'}$
hold.
Then, by definition:
\[
\begin{array}{lll}
\Get'[v,S] = \gamma_{v,S}(V^{\sharp'},L^{\sharp'},\sigma^{\sharp'}) &=&
    \{ t \in \T_S \mid \loc\,t=v, \beta\,t = (V,L,\sigma),\\
    && \quad  \sigma \subseteq \gamma_\D\circ\sigma^{\sharp'}, V \sqsubseteq V^{\sharp'}, L \sqsubseteq L^{\sharp'} \}
\end{array}
\]
For every trace $t \in \Get^{i-1}\,[u,S]$, let $\beta\,t = (V,L,\sigma)$. By induction hypothesis,
$V \sqsubseteq V^\sharp$, $L \sqsubseteq L^\sharp$, and
$\sigma \subseteq \gamma_\D\circ\sigma^\sharp$. Let $t'= \sem{e}\{ t\}$, then $\loc(t') = v$, $t' \in \T_S$, and
\[
\begin{array}{lll}
  \beta\,t' &=& (V',L',\sigma') \text{ where: }\\[1ex]

  V' &=& \{ a \mapsto \{ g' \mid g'\in\G, (\_,g'=x, \bar u ') = \lTlW_{g'}\,t', \bar u \leq \bar u'  \} \mid a\in\M, \\
  && \qquad (\_,\lock(a), \bar u) = \lTlL_a\,t'\}\\
  && \cup \; \{ a \mapsto \{ g' \mid g'\in\G, (\_,g'=x, \_) = \lTlW_{g'}\,t' \} \mid a\in\M,\\
  && \qquad \bot = \lTlL_a\,t' \} \\
  &=& \{ a \mapsto (V\,a \cup \{ g\}) \mid a \in \M \}\\[1ex]

  L' &=& \{ a \mapsto \{ L_{t'}[\bar u] \} \mid a \in \M, (\bar u,\lock(a), \_ ) = \lTlL_a\,t' \} \\
  && \cup \; \{ a \mapsto \emptyset \mid a \in \M, \bot = \lTlL_a\,t'   \}\\
  &=& L\\[1ex]

  \sigma' &=&  \{ x \mapsto \{t'(x)\} \mid x \in\X \} \cup \{ g' \mapsto \emptyset \mid g' \in \G, \bot = \lTlW_{g'}\,t'\} \\
  && \cup \; \{ g' \mapsto \{\sigma_{j-1}\,x\}  \mid g' \in \G, ((j-1,u_{j-1},\sigma_{j-1}),g'=x, \_ ) = \lTlW_{g'}\,t'\} \\
  &=& \sigma \oplus \{ g \mapsto \sigma\,x \}
\end{array}
\]
Thus,
\[
  \begin{array}{lll}
    L' &=& L \sqsubseteq L^{\sharp} \sqsubseteq L^{\sharp'}\\
    V' &=& \{ V\,a \cup \{g\} \mid a \in \M \} \sqsubseteq \{a \mapsto V^\sharp\,a \cup \{ g \} \mid a \in \M \} = V^{\sharp''} \sqsubseteq V^{\sharp'}\\
    \sigma' &=& \sigma \oplus \{ g \mapsto \sigma\,x \} \subseteq \gamma_\D\circ(\sigma^\sharp \oplus \{ g \mapsto \sigma^\sharp\,x\}) =
  \gamma_\D\circ\sigma^{\sharp''}\subseteq \gamma_\D\circ\sigma^{\sharp'}
  \end{array}
\]
Altogether, $t' \in \Get'\,[v,S]$ for all $t \in \Get^{i-1}\,[u,S]$.
We conclude that the return value of
$\sem{[u,S],g=x}\CMU\,\Get^{i-1}$ is subsumed by the value $\Get'\,[v,S]$
and since the constraint causes no side-effects, the claim holds.

\vspace{1em}
\noindent Next, for a read from a global $x = g$:
\[
\begin{array}{lll}
  \sem{[u,S],x=g}\CMU\,\Get\CMU	&=& (\emptyset,\sem{e}(\Get\CMU\,[u,S]))	\\[1ex]

  \sem{[u,S],x=g}^\sharp_\LockCentered\,\Get_\LockCentered	&=&
            \Let\;(V^\sharp,L^\sharp,\sigma^\sharp) = \Get_\LockCentered\,[u,S]\;\In	\\
			& &\Let\;d = \sigma^\sharp\, g \sqcup\bigsqcup\{\Get_\LockCentered\,[g,a,S'] \mid a \in \M,\\
            && \quad g \not\in V^\sharp\,a, B\in L^\sharp\,a, B\cap S' =\emptyset\}\;\In\\
      & &\Let\;\sigma^{\sharp''} = \sigma^\sharp\oplus\{x\mapsto d\}\;\In\\
			& &	(\emptyset,
				(V^\sharp,L^\sharp,\sigma^{\sharp''}))\\
\end{array}
\]
Let $\Get_\LockCentered\,[u,S] = (V^\sharp,L^\sharp,\sigma^\sharp)$ and
$\Get_\LockCentered\,[v,S] = (V^{\sharp'},L^{\sharp'},\sigma^{\sharp'})$ the
value provided by $\Get_\LockCentered$ for the end point of the given control-flow edge and lockset.
Since $\Get_\LockCentered$ is a post-solution of $\C_\LockCentered$,
$V^{\sharp} \sqsubseteq V^{\sharp'}$, $L^{\sharp} \sqsubseteq L^{\sharp'}$, and $\sigma^{\sharp''} \sqsubseteq \sigma^{\sharp'}$
hold.
Then, by definition:
\[
\begin{array}{lll}
\Get'[v,S] = \gamma_{v,S}(V^{\sharp'},L^{\sharp'},\sigma^{\sharp'}) &=&
    \{ t \in \T_S \mid \loc\,t=v, \beta\,t = (V,L,\sigma),\\
    && \quad  \sigma \subseteq \gamma_\D\circ\sigma^{\sharp'}, V \sqsubseteq V^{\sharp'}, L \sqsubseteq L^{\sharp'} \}
\end{array}
\]
For every trace $t \in \Get^{i-1}\,[u,S]$, let $\beta\,t = (V,L,\sigma)$. By induction hypothesis,
$V \sqsubseteq V^\sharp$, $L \sqsubseteq L^\sharp$, and
$\sigma \subseteq \gamma_\D\circ\sigma^\sharp$. Let $t'= \sem{e}\{ t\}$, then $\loc(t') = v$, $t' \in \T_S$, and
\[
\begin{array}{lll}
  \beta\,t' &=& (V',L',\sigma') \text{ where: }\\[1ex]

  V' &=& \{ a \mapsto \{ g' \mid g'\in\G, (\_,g'=x, \bar u ') = \lTlW_{g'}\,t', \bar u \leq \bar u'  \} \mid a \in \M,\\
  && \qquad (\_,\lock(a), \bar u) = \lTlL_a\,t'\}\\
  && \cup \; \{ a \mapsto \{ g' \mid g'\in\G, (\_,g'=x, \_) = \lTlW_{g'}\,t' \} \mid a\in\M,\\
  && \qquad \bot = \lTlL_a\,t' \}\\
  &=& V \\[1ex]

  L' &=& \{ a \mapsto \{ L_{t'}[\bar u] \} \mid a \in \M, (\bar u,\lock(a), \bar u ') = \lTlL_a\,t' \} \\
  && \cup \; \{ a \mapsto \emptyset \mid a \in \M, \bot = \lTlL_a\,t'   \}\\
  &=& L\\[1ex]

  \sigma' &=&  \{ x \mapsto \{t'(x)\} \mid x \in\X \} \cup \{ g' \mapsto \emptyset \mid g' \in \G, \bot = \lTlW_{g'}\,t'\} \\
  && \quad \cup \; \{ g' \mapsto \{\sigma_{j-1}\,x\}  \mid g' \in \G, ((j-1,u_{j-1},\sigma_{j-1}),g'=x, \_) = \lTlW_{g'}\,t'\} \\
  &=& \sigma \oplus \{ x \mapsto \{t'(x)\} \} \\
  &=& \sigma \oplus \{ x \mapsto \{\sigma_{j'-1}\,x'\} \mid \lW_g\,t' = ((j'-1,u_{j'-1},\sigma_{j'-1}), g = x', \_)  \}\\
  &=& \sigma \oplus \{ x \mapsto \{\sigma_{j'-1}\,x'\} \mid \lW_g\,t = ((j'-1,u_{j'-1},\sigma_{j'-1}), g = x', \_)  \}
\end{array}
\]
Thus, $V = V' \sqsubseteq V^{\sharp} \sqsubseteq V^{\sharp'}$ and  $L = L' \sqsubseteq L^{\sharp} \sqsubseteq L^{\sharp'}$.
Also $\sigma\,y = \sigma'\,y$ and therefore, $\sigma'\,y\subseteq(\gamma_\D\circ\sigma^{\sharp'})\,y$
for $y \not\equiv x$.
For $y \equiv x$, we consider two cases:
\begin{itemize}
  \item Last write to $g$ is thread-local ($\lTlW_g\,t = ((j'-1,u_{j'-1},\sigma_{j'-1}), g = x', \bar u'')$):
  Then $\sigma\,g= \{\sigma_{j'-1}\,x'\} \subseteq (\gamma\circ\sigma^\sharp)\,g$, thus
 $\sigma'\,x \subseteq (\gamma\circ\sigma^{\sharp''})\,x$
  and accordingly, $\sigma' \subseteq \gamma\circ\sigma^{\sharp'}$.
  \item Last write to $g$ is non-thread-local. Then
  \[
    \begin{array}{lll}
    \sigma'\,x &\subseteq& \bigcup \{ \eval_g(\Get^{i-1}\,[g,a,S']) \mid a \in \M, g \not\in V,a, B\in L\,a,\\
    && \qquad B\cap S' =\emptyset\ \} \qquad \text{(By \cref{prop:read'})} \\
    &\subseteq&   \bigcup \{ \eval_g(\Get'\,[g,a,S']) \mid a \in \M, g \not\in V^\sharp\,a, B\in L^\sharp\,a,\\
    && \qquad B\cap S' =\emptyset\ \}  \qquad \text{(By Induction Hypothesis)} \\
    &\subseteq&  \bigcup \{ \gamma_\D(\Get_\LockCentered\,[g,a,S']) \mid a \in \M, g \not\in V^\sharp\,a, B\in L^\sharp\,a,\\
    && \qquad B\cap S' =\emptyset\ \} \\
    &\subseteq& \gamma_\D (\bigsqcup \{ (\Get_\LockCentered\,[g,a,S']) \mid a \in \M, g \not\in V^\sharp\,a, B\in L^\sharp\,a,\\
    && \qquad B\cap S' =\emptyset\  \} \sqcup \sigma^\sharp\,g) \\
    &=& (\gamma_\D\circ\sigma^{\sharp''})\,x
       \subseteq (\gamma_\D\circ\sigma^{\sharp'})\,x  \\
    \end{array}
  \]
  and thus $\sigma' \subseteq \gamma_\D\circ\sigma^{\sharp'}$.
\end{itemize}
Altogether, $t' \in \Get'\,[v,S]$ for all $t \in \Get^{i-1}\,[u,S]$.
We conclude that the return value of
$\sem{[u,S],x=g}\CMU\,\Get^{i-1}$ is subsumed by the value $\Get'\,[v,S]$
and since the constraint causes no side-effects, the claim holds.

\vspace{1em}
\noindent Next, for $\lock(a)$, $a \in \M$:
\[
\begin{array}{lll}
  \sem{[u,S],\lock(a)}\CMU\,\Get\CMU &=& \Let\;T' = \bigcup \{ \Get\CMU\,[g,a,S] \mid g\in\G, S \subseteq \M \}\;\In \\
  & & (\emptyset,\sem{e}(\Get\CMU\,[u,S],T'))\\[1ex]

  \sem{[u,S],\lock(a)}^\sharp_\LockCentered\Get_\LockCentered
	&=&	\Let\;(V^\sharp,L^\sharp,\sigma^\sharp) = \Get_\LockCentered\,[u,S]\;\In	\\
	& & \Let\;V^{\sharp''} = V^\sharp \oplus \{ a \mapsto \emptyset\}\;\In \\
	& & \Let\;L^{\sharp''} = L^\sharp \oplus \{ a \mapsto \{S\} \}\;\In\\
	& &	(\emptyset, (V^{\sharp''},L^{\sharp''}, \sigma^\sharp))	\\[1ex]
\end{array}
\]
Let $\Get_\LockCentered\,[u,S] = (V^\sharp,L^\sharp,\sigma^\sharp)$ and
$\Get_\LockCentered\,[v,S\cup \{a\}] = (V^{\sharp'},L^{\sharp'},\sigma^{\sharp'})$ the
value provided by $\Get_\LockCentered$ for the end point of the given control-flow edge and lockset.
Since $\Get_\LockCentered$ is a post-solution of $\C_\LockCentered$,
$V^{\sharp''} \sqsubseteq V^{\sharp'}$, $L^{\sharp''} \sqsubseteq L^{\sharp'}$, and $\sigma^{\sharp} \sqsubseteq \sigma^{\sharp'}$
hold.
Then, by definition:
\[
\begin{array}{lll}
\Get'[v,S\cup \{a\}] = \gamma_{v,S \cup \{ a\}}(V^{\sharp'},L^{\sharp'},\sigma^{\sharp'}) &=&
    \{ t \in \T_{S \cup \{ a\}} \mid \loc\,t=v, \beta\,t = (V,L,\sigma),\\
    && \quad  \sigma \subseteq \gamma_\D\circ\sigma^{\sharp'}, V \sqsubseteq V^{\sharp'}, L \sqsubseteq L^{\sharp'} \}
\end{array}
\]
For every trace $t \in \Get^{i-1}\,[u,S]$, let $\beta\,t = (V,L,\sigma)$. By induction hypothesis,
$V \sqsubseteq V^\sharp$, $L \sqsubseteq L^\sharp$, and
$\sigma \subseteq \gamma_\D\circ\sigma^\sharp$. For any
$t' \in \sem{e}(\{t\}, \bigcup \{ \Get\CMU\,[g,a,S] \mid g\in\G, S \subseteq \M, w \subseteq \M \})$,
$\loc(t') = v$, $t' \in \T_{S\cup\{a\}}$, and
\[
\begin{array}{lll}
  \beta\,t' &=& (V',L',\sigma') \text{ where: }\\[1ex]

  V' &=& \{ a' \mapsto \{ g \mid g\in\G, (\_,g=x, \bar u ') = \lTlW_g\,t', \bar u \leq \bar u'  \} \mid a' \in \M, \\
  && \qquad(\_,\lock(a'), \bar u) = \lTlL_{a'}\,t'\}\\
  && \cup\; \{ a' \mapsto \{ g \mid g\in\G, (\_,g=x, \_) = \lTlW_g\,t' \} \mid a' \in \M,\\
  && \qquad \bot = \lTlL_a\,t' \}\\
  &=& V \oplus \{ a \mapsto \emptyset \} \\[1ex]

  L' &=& \{ a' \mapsto \{ L_{t'}[\bar u] \} \mid a' \in \M, (\bar u,\lock(a'), \_) = \lTlL_{a'}\,t' \} \\
  &&\cup \; \{ a' \mapsto \emptyset \mid a' \in \M, \bot = \lTlL_{a'}\,t' \}\\
  &=& L \oplus \{ a \mapsto \{S\} \}\\[1ex]

  \sigma' &=&  \{ x \mapsto \{t'(x)\} \mid x \in\X \} \cup \{ g \mapsto \emptyset \mid g \in \G, \bot = \lTlW_{g}\,t'\} \\
  &&\cup \; \{ g \mapsto \{\sigma_{j-1}\,x\}  \mid g \in \G, ((j-1,u_{j-1},\sigma_{j-1}),g=x, \_) = \lTlW_{g}\,t'\} \\
  &=& \sigma
\end{array}
\]
Therefore,
\[
\begin{array}{lll}
  V' &=& V \oplus \{ a \mapsto \emptyset \} \sqsubseteq V^{\sharp} \oplus \{ a \mapsto \emptyset \} = V^{\sharp''} \sqsubseteq V^{\sharp'}\\
  L' &=& L \oplus \{ a \mapsto \{ S \} \} \sqsubseteq L^{\sharp} \oplus \{ a \mapsto \{S\}\} = L^{\sharp''} \sqsubseteq L^{\sharp'}\\\
  \sigma' &=& \sigma \subseteq \gamma_\D\circ\sigma^{\sharp} = \gamma_\D\circ\sigma^{\sharp''} \subseteq \gamma_\D\circ\sigma^{\sharp'}
\end{array}
\]
Altogether, $t' \in \Get'\,[v,S\cup\{a\}]$ for all $t \in \Get^{i-1}\,[u,S]$.
We conclude that the return value of
$\sem{[u,S],\lock(a)}'\, \Get^{i-1}$ is subsumed by the value $\Get'\,[v,S\cup\{a\}]$
and since the constraint causes no side-effects, the claim holds.

\vspace{1em}
\noindent Next, for $\unlock(a)$ $a \in \M$:
\[
\begin{array}{lll}
  \sem{[u,S],\unlock(a)}\CMU\,\Get\CMU
  &=&	\Let\;T = \sem{e}(\Get\CMU\,[u,S])\;\In	\\
  & & \Let\;\rho = \{[g,a,S\setminus \{ a \}]\mapsto T \mid g \in \G\}\;\In \\
  & &	(\rho, T)\\[1ex]

  \sem{[u,S],\unlock(a)}^\sharp_\LockCentered\Get_\LockCentered
	&=&	\Let\;(V^\sharp,L^\sharp,\sigma^\sharp) = \Get_\LockCentered\,[u,S]\;\In	\\
	& & \Let\;\rho^\sharp = \{ [g,a, S \setminus \{a\}] \mapsto \sigma^\sharp\,g \mid g \in \G \}\;\In\\
	& &	(\rho, (V^\sharp,L^\sharp, \sigma^\sharp))
\end{array}
\]
Let $\Get_\LockCentered\,[u,S] = (V^\sharp,L^\sharp,\sigma^\sharp)$ and
$\Get_\LockCentered\,[v,S\setminus \{a\}] = (V^{\sharp'},L^{\sharp'},\sigma^{\sharp'})$ the
value provided by $\Get_\LockCentered$ for the end point of the given control-flow edge and lockset.
Since $\Get_\LockCentered$ is a post-solution of $\C_\LockCentered$,
$V^{\sharp} \sqsubseteq V^{\sharp'}$, $L^{\sharp} \sqsubseteq L^{\sharp'}$, and $\sigma^{\sharp} \sqsubseteq \sigma^{\sharp'}$
hold.
Then, by definition:
\[
\begin{array}{lll}
\Get'[v,S\setminus \{a\}] = \gamma_{v,S\setminus\{a \}}(V^{\sharp'},L^{\sharp'},\sigma^{\sharp'}) &=&
    \{ t \in \T_{S \setminus \{a\}} \mid \loc\,t=v, \beta\,t = (V,L,\sigma),\\
    && \quad  \sigma \subseteq \gamma_\D\circ\sigma^{\sharp'}, V \sqsubseteq V^{\sharp'}, L \sqsubseteq L^{\sharp'} \}
\end{array}
\]
For every trace $t \in \Get^{i-1}\,[u,S]$, let $\beta\,t = (V,L,\sigma)$. By induction hypothesis,
$V \sqsubseteq V^\sharp$, $L \sqsubseteq L^\sharp$, and
$\sigma \subseteq \gamma_\D\circ\sigma^\sharp$. Let $t'= \sem{e}\{ t\}$, then $\loc(t') = v$, $t' \in \T_{S \setminus \{a\}}$, and
\[
\begin{array}{lll}
  \beta\,t' &=& (V',L',\sigma') \text{ where: }\\[1ex]

  V' &=& \{ a' \mapsto \{ g \mid g\in\G, (\_,g=x, \bar u ') = \lTlW_g\,t', \bar u \leq \bar u'  \} \mid a' \in \M, \\
  && \qquad(\_,\lock(a'), \bar u) = \lTlL_{a'}\,t'\}\\
  && \cup \; \{ a' \mapsto \{ g \mid g\in\G, (\_,g=x, \_) = \lTlW_g\,t' \} \mid a' \in \M,\\
  && \qquad \bot = \lTlL_a\,t' \}\\
  &=& V \\[1ex]

  L' &=& \{ a' \mapsto \{ L_{t'}[\bar u] \} \mid a' \in \M, (\bar u,\lock(a'), \_) = \lTlL_{a'}\,t' \} \\
  && \cup \; \{ a' \mapsto \emptyset \mid a' \in \M, \bot = \lTlL_{a'}\,t'   \}\\
  &=& L\\[1ex]

  \sigma' &=&  \{ x \mapsto \{t'(x)\} \mid x \in\X \} \cup \{ g \mapsto \emptyset \mid g \in \G, \bot = \lTlW_{g}\,t'\} \\
  && \cup \; \{ g \mapsto \{\sigma_{j-1}\,x\}  \mid g \in \G, ((j-1,u_{j-1},\sigma_{j-1}),g=x, \_) = \lTlW_{g}\,t'\} \\
  &=& \sigma
\end{array}
\]
Thus, $V = V' \sqsubseteq V^{\sharp} \sqsubseteq V^{\sharp'}$, and $L = L' \sqsubseteq L^{\sharp} \sqsubseteq L^{\sharp'}$, and
$\sigma' \subseteq \gamma_\D\circ\sigma^{\sharp'}$.
Altogether, $t' \in \Get'\,[v,S\setminus\{a\}]$ for all $t \in \Get^{i-1}\,[u,S]$.
We conclude that the return value of
$\sem{[u,S],\unlock(a)}\CMU\,\Get^{i-1}$ is subsumed by the value $\Get'\,[v,S\setminus\{a\}]$.

Next, we consider the side-effects of the corresponding right-hand-side functions.
For each $g \in \G$, we distinguish two cases for $t'$:
\begin{itemize}
  \item $\lTlW _g\,t' = \bot$: Then side-effects $[g,a,S\setminus \{a\}] \mapsto \{ t'\}$
  are caused. They are accounted for by construction of $\Get'$:
  \[
    t' \in \{ t \in \T_{S\setminus\{a\}} \mid \last\,t=\unlock(a), \lTlW_g\,t = \bot \} \subseteq \Get'[g,a,S\setminus \{a\}]
  \]
  \item $\lTlW _g\,t' = ((j-1,u_{j-1},\sigma_{j-1}),g=x, \bar u ')$: Then the side-effects caused by $\C\CMU$ and $\C_\LockCentered$
  for $g$, respectively, are given by
  \[
  \begin{array}{lll}
    \rho' &=& [g,a,S\setminus \{a\}] \mapsto \{t' \}\\
    \rho^{\sharp'} &=& [g,a,S \setminus \{a\}] \mapsto \sigma^\sharp\,g\\
  \end{array}
  \]
  We remark that $\sigma\,g = \sigma_{j-1}\,x\subseteq
  (\gamma_\D\circ\sigma^{\sharp})\,g$ holds, and so does $\sigma^\sharp\,g \sqsubseteq \Get_\LockCentered\,[g,a,S \setminus \{ a\}]$
  because $\Get_\LockCentered$ is a post-solution of $C_\LockCentered$. Thus,
  \[
  t' \in   \Get'[g,a,S\setminus \{a\}] = \gamma_{g,a,S \setminus \{a\}}(\Get_\LockCentered\,[g,a,S \setminus \{ a\}])
  \]
\end{itemize}
Hence, all side-effects for $\unlock(a)$ of $\C\CMU$ are accounted for in $\Get'$,
and the claim holds.

\vspace{1em}
\noindent Next, for $x = \create(u_1)$:
\[
\begin{array}{lll}
  \sem{[u,S],x = \create(u_1)}\CMU\,\Get\CMU	&=&
		\Let\;T = \sem{e}(\Get\CMU\,[u,S])\;\In	\\
  &&		(\{[u_1,\emptyset]\mapsto\new\,u_1\,(\Get\CMU\,[u,S])\},T)\\[1ex]

  \sem{[u,S],x = \create(u_1)}^\sharp_\LockCentered\Get_\LockCentered	&=&	\Let\;(V^\sharp,L^\sharp,\sigma^\sharp) = \Get\,[u,S]\;\In	\\
  & & \Let\; V^{\sharp'''} = \{ a \mapsto \emptyset \mid a \in \M \}\;\In\\
  & & \Let\; L^{\sharp'''} = \{ a \mapsto \emptyset \mid a \in \M \} \;\In\\
  & & \Let\; i^\sharp = \nu^\sharp\,u\,(V^\sharp,L^\sharp,\sigma^\sharp)\,u_1\;\In\\
  & & \Let\; \sigma^{\sharp'''} = \sigma \oplus (\{ \self \mapsto i^\sharp \} \cup \{ g \mapsto \bot \mid g \in \G\})\;\In\\
  & & \Let\; \sigma^{\sharp''} = \sigma \oplus \{ x \mapsto i^\sharp \}\;\In\\
  & & \Let\; \rho^\sharp = \{ [u_1,\emptyset] \mapsto (V^{\sharp'''},L^{\sharp'''},\sigma^{\sharp'''}\}\;\In\\
  & &	(\rho^\sharp,(V^\sharp,L^\sharp,\sigma^{\sharp''}))
\end{array}
\]
Let $\Get_\LockCentered\,[u,S] = (V^\sharp,L^\sharp,\sigma^\sharp)$ and
$\Get_\LockCentered\,[v,S] = (V^{\sharp'},L^{\sharp'},\sigma^{\sharp'})$ the
value provided by $\Get_\LockCentered$ for the end point of the given control-flow edge and lockset.
Since $\Get_\LockCentered$ is a post-solution of $\C_\LockCentered$,
$V^{\sharp} \sqsubseteq V^{\sharp'}$, $L^{\sharp} \sqsubseteq L^{\sharp'}$, and $\sigma^{\sharp''} \sqsubseteq \sigma^{\sharp'}$
hold.
Then, by definition:
\[
\begin{array}{lll}
\Get'[v,S] = \gamma_{v,S}(V^{\sharp'},L^{\sharp'},\sigma^{\sharp'}) &=&
    \{ t \in \T_{S} \mid \loc\,t=v, \beta\,t = (V,L,\sigma),\\
    && \quad  \sigma \subseteq \gamma_\D\circ\sigma^{\sharp'}, V \sqsubseteq V^{\sharp'}, L \sqsubseteq L^{\sharp'} \}
\end{array}
\]
For every trace $t \in \Get^{i-1}\,[u,S]$, let $\beta\,t = (V,L,\sigma)$. By induction hypothesis,
$V \sqsubseteq V^\sharp$, $L \sqsubseteq L^\sharp$, and
$\sigma \subseteq \gamma_\D\circ\sigma^\sharp$. Let $t'= \sem{e}\{t\}$, then $\loc(t') = v$, $t' \in \T_S$, and
\[
\begin{array}{lll}
  \beta\,t' &=& (V',L',\sigma') \text{ where: }\\[1ex]

  V' &=& \{ a \mapsto \{ g \mid g\in\G, (\_,g=x, \bar u ') = \lTlW_g\,t', \bar u \leq \bar u'  \} \mid a \in \M, \\
  && \qquad(\_,\lock(a), \bar u) = \lTlL_{a}\,t'\}\\
  && \cup\; \{ a \mapsto \{ g \mid g\in\G, (\_,g=x,\_) = \lTlW_g\,t' \} \mid a \in \M,\\
  && \qquad \bot = \lTlL_a\,t' \}\\
  &=& V \\[1ex]

  L' &=& \{ a \mapsto \{ L_{t'}[\bar u] \} \mid a \in \M, (\bar u,\lock(a), \_) = \lTlL_{a}\,t' \} \\
  && \cup \; \{ a \mapsto \emptyset \mid a \in \M, \bot = \lTlL_{a}\,t'   \}\\
  &=& L\\[1ex]

  \sigma' &=&  \{ x \mapsto \{t'(x)\} \mid x \in\X \} \cup \{ g \mapsto \emptyset \mid g \in \G, \bot = \lTlW_{g}\,t'\} \\
  && \cup \; \{ g \mapsto \{\sigma_{j-1}\,x\}  \mid g \in \G, ((j-1,u_{j-1},\sigma_{j-1}),g=x, \_) = \lTlW_{g}\,t'\} \\
  &=& \sigma \oplus \{ x \mapsto \{\nu\,t\} \}
\end{array}
\]
Since by definition $\nu\,t \in\gamma_D (\nu^\sharp\,u\,(V^\sharp,L^\sharp,\sigma^\sharp)\,u_1) $, thus
\[
\begin{array}{lll}
  V' &=& V \sqsubseteq V^{\sharp} \sqsubseteq V^{\sharp'}\\
  L' &=& L \sqsubseteq L^{\sharp} \sqsubseteq L^{\sharp'}\\
  \sigma' &=& \sigma \oplus \{ x \mapsto \{\nu\,t\} \} \subseteq
    \gamma_\D\circ(\sigma^{\sharp} \oplus \{ x \mapsto \nu^\sharp\,u\,(V^\sharp,L^\sharp,\sigma^\sharp)\,u_1 \})
  = \gamma_\D\circ\sigma^{\sharp''} \subseteq \gamma_\D\circ\sigma^{\sharp'}
\end{array}
\]
Altogether, $t' \in \Get'\,[v,S]$ for all $t \in \Get^{i-1}\,[u,S]$.
We conclude that the return value of
$\sem{[u,S],x=\create(u_1)}\CMU\,\Get^{i-1}$ is subsumed by the value $\Get'\,[v,S]$.

Next, we consider the side-effects of the corresponding right-hand-side functions for $t \in \Get^{i-1}\,[u,S]$:
\[
  \begin{array}{lll}
    \rho' &=& [u_1,\emptyset] \mapsto\new\,u_1\,\{ t \} \\
    \rho^{\sharp'} &=& [u_1,\emptyset] \mapsto (V^{\sharp'''},L^{\sharp'''},\sigma^{\sharp'''})\\
  \end{array}
\]
Let $t'' = \new\,u_1\,\{ t \}$. Then,
\[
\begin{array}{lll}
  \beta\,t'' &=& (V''',L''',\sigma''') \text{ where: }\\[1ex]

  V''' &=& \{ a \mapsto \{ g \mid g\in\G, (\_,g=x, \bar u ') = \lTlW_g\,t'', \bar u \leq \bar u'  \} \mid a \in \M, \\
  && \qquad(\_,\lock(a), \bar u) = \lTlL_{a}\,t''\}\\
  && \cup\; \{ a \mapsto \{ g \mid g\in\G, (\_,g=x, \_) = \lTlW_g\,t'' \} \mid a \in \M,\\
  && \qquad \bot = \lTlL_a\,t'' \}\\
  &=& \{ a \mapsto \emptyset \mid a \in \M \}\\[1ex]

  L''' &=& \{ a \mapsto \{ L_{t''}[\bar u] \} \mid a \in \M, (\bar u,\lock(a), \_) = \lTlL_{a}\,t'' \} \\
  && \cup \; \{ a \mapsto \emptyset \mid a \in \M, \bot = \lTlL_{a}\,t''   \}\\
  &=& \{ a \mapsto \emptyset \mid a \in \M \}\\[1ex]

  \sigma''' &=&  \{ x \mapsto \{t''(x)\} \mid x \in\X \} \cup \{ g \mapsto \emptyset \mid g \in \G, \bot = \lTlW_{g}\,t''\} \\
  && \cup \; \{ g \mapsto \{\sigma_{j-1}\,x\}  \mid g \in \G, ((j-1,u_{j-1},\sigma_{j-1}),g=x, \_) = \lTlW_{g}\,t''\} \\
  &=& \sigma \oplus (\{ \self \mapsto \{\nu\,t\} \} \cup \{ g \mapsto \emptyset \mid g \in \G \} )
\end{array}
\]
Since by definition $\nu\,t \in\gamma_D (\nu^\sharp\,u\,(V^\sharp,L^\sharp,\sigma^\sharp)\,u_1) $, thus
\[
\begin{array}{lll}
  V''' &=& V^{\sharp'''}\\
  L''' &=& L^{\sharp'''}\\
  \sigma''' &=& \sigma \oplus (\{ \self \mapsto \{\nu\,t\} \} \cup \{ g \mapsto \emptyset \} ) \\
  &\subseteq& \gamma_\D\circ(\sigma^{\sharp} \oplus
    (\{ \self \mapsto \nu^\sharp\,u\,(V^\sharp,L^\sharp,\sigma^\sharp)\,u_1 \}  \cup \{ g \mapsto \bot \}))\\
  &=& \gamma_\D\circ\sigma^{\sharp'''}
\end{array}
\]
We remark that $(V^{\sharp'''},L^{\sharp'''},\sigma^{\sharp'''}) \sqsubseteq  \Get_\LockCentered\,[u_1,\emptyset]$ holds as
$\Get_\LockCentered$ is a post-solution of $C_\LockCentered$.
Thus,
\[
t'' \in \gamma_{u_1,\emptyset}(\Get_\LockCentered\,[u_1,\emptyset]) = \Get'[u_1,\emptyset]
\]
Hence, all side-effects for $x = \create(u_1)$ of $\C\CMU$ are accounted for in $\Get'$.
This concludes the proof.
\qed
\end{proof}

\subsection{Write-Centered Reading}\label{s:soundness-write-centered}
Let the constraint system for the Write-Centered Reading analysis from \cref{s:write-centered} be called $\C_\WriteCentered$.
We construct from the constraint system $\C$ for the concrete collecting semantics a system $\C\CWCU$
so that the set of unknowns of $\C\CWCU$ matches the set of unknowns of $\C_\WriteCentered$.
This means that
each unknown $[u]$ for program point $u$ is replaced with the set of unknowns $[u,S]$, $S\subseteq\M$,
while the unknown $[a]$ for a mutex $a$ is replaced with the set of unknowns
$[g,a,S,w]$, $g\in\G,S\subseteq\M, w\subseteq\M$.
Accordingly, the constraint system $\C\CWCU$ consists of these constraints:
\[
  \begin{array}{llll}
    \relax [u_0,\emptyset] & \supseteq &\textbf{fun}\,\_\to (\emptyset,\init) \\
    \relax [u',S\cup\{a\}] & \supseteq & \sem{[u,S],\lock(a)}\CWCU & \quad (u,\lock(a),u')\in\E, a\in\M \\
    \relax [u',S\setminus\{a\}] & \supseteq & \sem{[u,S],\unlock(a)}\CWCU & \quad (u,\unlock(a),u') \in \E, a \in \M\\
    \relax [u',S] & \supseteq & \sem{[u,S],A}\CWCU & \quad (u,A,u') \in \E, \forall a \in\M:\\
      &&& \qquad A\neq \lock(a), A \neq \unlock(a)  \\
  \end{array}
\]
where new right-hand-side functions (relative to the semantics $\sem{e}$ of control-flow edges $e$)
are given by:
\[
\begin{array}{lll}
\sem{[u,S],x = \textsf{create}(u_1)}\CWCU\,\Get\CWCU	&=&
		\Let\;T = \sem{e}(\Get\CWCU\,[u,S])\;\In	\\
&&		(\{[u_1,\emptyset]\mapsto\new\,u_1\,(\Get\CWCU\,[u,S])\},T)	\\[1ex]

\sem{[u,S],\lock(a)}\CWCU\,\Get\CWCU &=& \Let\;T' = \bigcup \{ \Get\CWCU\,[g,a,S',w] \mid g\in\G, S' \subseteq \M, w \subseteq \M \}\;\In \\
  & & (\emptyset,\sem{e}(\Get\CWCU\,[u,S],T'))\\[1ex]

\sem{[u,S],\unlock(a)}\CWCU\,\Get\CWCU
    &=&	\Let\;T = \sem{e}(\Get\CWCU\,[u,S])\;\In	\\
    & & \Let\;\rho = \{[g,a,S\setminus \{ a \}, w ]\mapsto \{t\} \mid t \in T, g \in \G, w \subseteq \M, \\
    & & \qquad ((\lTlW _g\,t = (\bar u, g =x, \bar u ') \land L_t[\bar u'] \subseteq w)\\
    & & \qquad \lor (\lTlW _g\,t = \bot)) \}\;\In \\
    & &	(\rho, T)\\[1ex]

\sem{[u,S],x=g}\CWCU\,\Get\CWCU	&=& (\emptyset,\sem{e}(\Get\CWCU\,[u,S]))	\\[1ex]

\sem{[u,S],g=x}\CWCU\,\Get\CWCU	&=& (\emptyset,\sem{e}(\Get\CWCU\,[u,S]))	\\
\end{array}
\]
In contrast to the right-hand-side functions of $\C$, the new right-hand sides now also re-direct
side-effects not to unknowns $[a], a\in\M$, but to appropriate more specific unknowns
$[g,a,S',w], g\in\G, a\in\M,S'\subseteq\M, w\subseteq\M$.
For a mapping $\Get$ from the unknowns of $\C$ to $2^\T$, we construct a mapping $\GetPrimeP$
from the unknowns of $\C\CWCU$ to $2^\T$ by
\[
\begin{array}{llll}
  \GetPrimeP [u,S] &=& \Get[u] \cap \T_S & \text{ for } u \in \N, S \subseteq \M \\
  \GetPrimeP [g,a,S,w] &=& \Get[a] \cap \{ t \in \T_{S} \mid & \text { for } g \in \G, a \in \M, S \subseteq \M,w \subseteq \M  	 	\\
    && \quad (\lTlW_g\,t = (\bar u, g=&x,\bar u')\land\,L_t[\bar u'] \subseteq w) 	\\
    && \quad \lor\,(\lTlW _g\,t = \bot) \}
\end{array}
\]
Thus,
\[
\begin{array}{lll}
\Get[u]	&=& \bigcup\{\GetPrimeP[u,S]\mid S\subseteq\M\}	\\
\Get[a] &=& \bigcup\{ \GetPrimeP[g,a,S,w]\mid g\in\G, S\subseteq\M,w\subseteq\M \}
\end{array}
\]
for all program points $u$ and mutexes $a$.
Moreover, we have:
\begin{proposition}\label{p:GetGet'}
The following two statements are equivalent:
\begin{itemize}
  \item $\Get$ is the least solution of $\C$;
  \item $\GetPrimeP$ is the least solution of $\C\CWCU$.
\end{itemize}
\end{proposition}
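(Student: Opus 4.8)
The plan is to mirror the proof of the analogous \cref{p:GetGetCMU} and argue by fixpoint induction, exhibiting the construction $\Get\mapsto\GetPrimeP$ as a relabelling that carries the Kleene iteration for $\C$ exactly onto the one for $\C\CWCU$. Write $\alpha$ for the map sending $\Get$ to $\GetPrimeP$ as defined above, and let $\gamma$ denote the reverse map given by the two union formulas stated just before the proposition, namely $\Get[u]=\bigcup_{S}\GetPrimeP[u,S]$ and $\Get[a]=\bigcup_{g,S,w}\GetPrimeP[g,a,S,w]$. First I would record the easy structural facts: the sets $\{\T_S\}_{S\subseteq\M}$ partition $\T$ by the lockset held at the sink, so $\alpha$ splits each $[u]$ into disjoint pieces; for the mutex unknowns the $S$-index is again a genuine partition, whereas the $w$-index is merely upward closed (since $L_t[\bar u']\subseteq w$ is upward closed, and the case $\lTlW_g\,t=\bot$ admits every $w$), yet the union over $g,S,w$ still recovers $\Get[a]$ because $w=\M$ always qualifies. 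From these it follows that $\gamma\circ\alpha=\mathrm{id}$, that $\alpha$ and $\gamma$ are monotone, and that $\alpha$ preserves unions (intersection with a fixed set distributes over union), so $\alpha$ is continuous; moreover $\alpha(\bot)=\bot$.

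The core of the argument is a commutation lemma: letting $F$ and $F'$ denote the monotone right-hand-side operators of $\C$ and $\C\CWCU$, we have $\alpha\circ F=F'\circ\alpha$. I would verify this edge by edge. For an edge carrying a local computation, a guard, a read $x=g$, a write $g=x$, or a thread creation, the action $\sem{e}$ neither locks nor unlocks and hence preserves the sink-lockset; thus restricting to $\T_S$ commutes with $\sem{e}$, and the $\C\CWCU$-contribution to $[u',S]$ is exactly the $\T_S$-part of the $\C$-contribution to $[u']$ (the create side-effect to $[u_1,\emptyset]$ being literally identical in both systems). For a lock edge $(u,\lock(a),u')$, the auxiliary set $\bigcup_{g,S',w}\GetPrimeP[g,a,S',w]$ equals $\Get[a]$ by the union formula, and by non-reentrancy a trace can acquire $a$ only from a unique predecessor lockset $S$ to reach sink-lockset $S\cup\{a\}$; hence the $\C\CWCU$-contribution to $[u',S\cup\{a\}]$ reproduces the $\T_{S\cup\{a\}}$-part of the $\C$-contribution to $[u']$, where the completeness assumption (that $\sem{e}$ only incorporates $t_1\in\init$ or $\last(t_1)=\unlock(a)$) guarantees that nothing is lost by reading only the unknowns $[g,a,\cdot,\cdot]$.

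The decisive case is the unlock edge $(u,\unlock(a),u')$, where I expect the main obstacle to lie. Here the produced set $T=\sem{e}(\GetPrimeP[u,S])$ consists of traces $t$ with $\last\,t=\unlock(a)$ and sink-lockset $S\setminus\{a\}$, and the refined side-effect of $\C\CWCU$ deposits each such $t$ into $[g,a,S\setminus\{a\},w]$ for precisely those $g$ and $w$ with $\big(\lTlW_g\,t=(\bar u,g=x,\bar u')\land L_t[\bar u']\subseteq w\big)\lor\lTlW_g\,t=\bot$. This is exactly the membership predicate appearing in the definition of $\GetPrimeP[g,a,S\setminus\{a\},w]$, so the whole family of side-effects is the $\alpha$-image of the single $\C$-side-effect $[a]\mapsto T$. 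The care required here is twofold: to check that no other constraint of $\C\CWCU$ feeds the unknowns $[g,a,\cdot,\cdot]$ (only $\unlock(a)$-edges do, matching the fact that in $\C$ only $\unlock(a)$-edges and the initial seeding feed $[a]$), and to check that the traces in $\init$, for which $\last=\bot$, the sink-lockset is $\emptyset$, and $\lTlW_g=\bot$ for every $g$, are seeded into every $[g,a,\emptyset,w]$ in the same way the concrete system seeds $[a]$ with $\init$.

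Finally I would close the induction. Monotonicity and $\alpha(\bot)=\bot$ give the base case, and the commutation lemma gives $\alpha(F^i(\bot))=(F')^i(\bot)$ for every $i$ by a one-line induction step; since $\alpha$ is continuous, taking least upper bounds yields $\alpha(\mathrm{lfp}\,F)=\mathrm{lfp}\,F'$. The biconditional then follows. The forward direction is immediate: if $\Get=\mathrm{lfp}\,F$ then $\GetPrimeP=\alpha(\Get)=\mathrm{lfp}\,F'$. For the converse, if $\GetPrimeP=\alpha(\Get)$ is the least solution of $\C\CWCU$, then $\alpha(\Get)=\mathrm{lfp}\,F'=\alpha(\mathrm{lfp}\,F)$, and applying $\gamma$ together with $\gamma\circ\alpha=\mathrm{id}$ gives $\Get=\mathrm{lfp}\,F$.
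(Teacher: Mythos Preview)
Your proposal is correct and follows exactly the approach the paper indicates: the paper's proof reads, in its entirety, ``The proof of \cref{p:GetGet'} is by fixpoint induction,'' and you have supplied the details of that induction via the commutation $\alpha\circ F = F'\circ\alpha$ together with continuity of $\alpha$ and the left-inverse $\gamma$. Your identification of the one delicate point---that the $\init$ seeding of $[a]$ in $\C$ must be matched by a corresponding seeding of $[g,a,\emptyset,w]$ in $\C\CWCU$ for the commutation to hold at the unlock/mutex unknowns---is a detail the paper leaves implicit, and your edge-by-edge verification (lockset-preserving actions, lock via the union formula recovering $\Get[a]$, unlock via the membership predicate matching the definition of $\GetPrimeP[g,a,S',w]$) is the right decomposition.
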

\begin{proof}
The proof of \cref{p:GetGet'} is by fixpoint induction.\qed
\end{proof}
The next proposition indicates that the new unknown $[g,a,S,w]$ collects a superset of local
traces whose last write to the global $g$ can be read by a thread satisfying the specific assumptions
(W0) through (W4) below.

\begin{proposition}\label{prop:read}
Consider the $i$-th approximation $\Get^i$ to the least solution $\GetPrimeP$ of constraint system $\C\CWCU$,
a control-flow edge $(u,x=g,u')$ of the program, and a local trace $t \in\Get^i\,[u',S]$
in which the last action is $x=g$,
that ends in $\bar u' = (j,u',\sigma)$, i.e., $t = (\bar u')\downarrow_t$.
Let $P=\minLSince(t,\bar v')$ denote the upwards-closed set of minimal locksets held by
the ego thread since the endpoint $\bar v'$ of its last thread-local write to $g$,
or $\{\emptyset\}$ if there is no thread-local write to $g$ in $t$.

Then, the value $d = \sigma\,x$ that is read for $g$, is produced by a write to $g$
which
\begin{itemize}
  \item either is the last thread-local write to $g$ in $t$; or
  \item is the last \emph{thread-local} write to $g$ in some local trace stored at $\Get^{i'}\,[g,a,S',w']$ for some $i' < i$
	 i.e.,
	\[
	d\in\eval_g (\Get^{i'}\,[g,a,S',w'])
	\]
  where
	\begin{enumerate}
	\item[(W0)]	$a\in S$,
	\item[(W1)]	$w' \subseteq \M$,
	\item[(W2)]	$S \cap S' = \emptyset$,
  \item[(W3)]	$\exists S'' \in P:\,S'' \cap w' = \emptyset$, and
	\item[(W4)]	$\exists S''' \in P:\,a \notin S'''$
	\end{enumerate}
\end{itemize}
\end{proposition}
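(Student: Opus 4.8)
The plan is to prove \cref{prop:read} by fixpoint induction on the approximation level $i$, closely mirroring the structure of the proof of \cref{prop:read'} for \emph{Lock-Centered Reading}; the genuinely new work lies in establishing conditions (W3) and (W4), which concern the acquisition history $P = \minLSince(t,\bar v')$. Since the bookkeeping around $\beta$, the construction of $\C\CWCU$, and the equivalence of least solutions has already been set up (\cref{p:GetGet'}), I can concentrate on the read edge.

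For the base case $i = 0$, every unknown except $[u_0,\emptyset]$ is empty, so no read or unlock edge is ever evaluated on a nonempty argument and the claim holds vacuously. In the induction step there are, as in \cref{prop:read'}, two obligations: that every value read at an edge $(u,x=g,u')$ satisfies the stated property, and that every trace ending in an $\unlock(a)$ produced in iteration $i$ is side-effected to the appropriate unknowns. The latter I would dispatch directly from the construction of $\C\CWCU$: the right-hand side for $\unlock(a)$ sends each such trace $t$ to $[g,a,S\setminus\{a\},w]$ for every $w \supseteq L_t[\bar u']$, where $\bar u'$ is the endpoint of the last thread-local write to $g$ in $t$ (or, when $\lTlW_g\,t=\bot$, unconditionally) --- exactly matching the membership condition defining $\GetPrimeP\,[g,a,S',w']$.

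For the read obligation, I fix $t \in \Get^i\,[u',S]$ ending in $x=g$ and take $\lW_g\,t = ((j'-1,u_{j'-1},\sigma_{j'-1}),g=x',\bar u'')$, with reading thread $i_0 = \id\,t$ and writing thread $i_1 = \sigma_{j'-1}\,\self$. If $i_0 = i_1$ the write is thread-local and the first bullet holds. Otherwise I single out the mutex $a$ for reading as follows: among all mutexes that are currently held by the ego thread (so $a \in S$), that are unlocked by $i_1$ after the last write, and whose current acquisition by $i_0$ succeeds that $\unlock(a)$ in the causality order, I pick the one whose $\unlock(a)$ by $i_1$ is latest in $i_1$'s program order. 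This set is nonempty because the dedicated mutex $m_g$ qualifies: by atomicity $m_g$ is held at the read, it is released by $i_1$ right after the write, and mutual exclusion on $m_g$ forces $i_0$ to re-acquire it only afterwards. I then set $w' = L_{t'}[\bar u'']$ (the lockset held by $i_1$ at the write) and let $S'$ be the background lockset at $i_1$'s $\unlock(a)$; the sub-trace $t'$ of the maximal $i_1$-subtrace that ends at this $\unlock(a)$ carries the write as its last thread-local write, was built in some iteration $i' < i$, and is side-effected to $\Get^{i'}\,[g,a,S',w']$, so that $d \in \eval_g(\Get^{i'}\,[g,a,S',w'])$.

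It then remains to verify (W0)--(W4) for this choice. Conditions (W0) and (W1) hold by construction, and (W4) is immediate: since $i_0$ acquires $a$ only after $i_1$'s $\unlock(a)$, which lies after the write and hence after $\bar v'$, the ego does not hold $a$ just before that acquisition, yielding a lockset $S''' \in P$ with $a \notin S'''$. For (W2) I would argue by contradiction using the maximality in the choice of $a$: a common lock $c \in S \cap S'$ would be held by the ego now and by $i_1$ at its $\unlock(a)$, so mutual exclusion forces $i_0$ to acquire $c$ only after $i_1$ releases it (which happens after $\unlock(a)$, hence after the write); then $c$ itself satisfies all the selection criteria yet is unlocked by $i_1$ strictly later than $a$, contradicting maximality. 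The hard part will be (W3): exhibiting an $S'' \in P$ with $S'' \cap w' = \emptyset$. If the ego has no thread-local write to $g$ then $P = \{\emptyset\}$ and (W3) is trivial, so I may assume $\bar v'$ exists and argue, in the spirit of the (L1) step of \cref{prop:read'}, by contradiction: if at every node of the ego's timeline between $\bar v'$ and the read the ego held at least one mutex of $w'$, then, since $i_1$ holds \emph{all} of $w'$ simultaneously at the write while mutual exclusion forbids the ego from holding any mutex of $w'$ at the same causal point, a continuous chain of ego-held locks from $w'$ across the whole interval would force the write to lie causally before $\bar v'$ or after the read --- contradicting that it is the last write to $g$ preceding the read. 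Making this chaining argument rigorous, by tracking how ownership of the individual mutexes of $w'$ alternates between $i_0$ and $i_1$ along the causality order, is the delicate step; everything else parallels \cref{prop:read'}.
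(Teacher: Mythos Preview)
Your proposal is correct and follows essentially the same approach as the paper's proof: fixpoint induction, a case split on whether the last write is thread-local, and in the non-thread-local case identifying a suitable mutex $a$ via the last relevant $\unlock$ by $i_1$ and then verifying (W0)--(W4). The paper's selection of $a$ is slightly simpler---it takes the maximal sub-trace of $i_1$ ending in an unlock of \emph{any} mutex in $S$, without your extra condition that the ego's current acquisition succeeds that unlock (a condition that is in fact automatic by mutual exclusion, since the ego currently holds $a$)---and it dispatches (W2) and (W4) by short direct contradictions rather than via your maximality argument, but the overall structure and the handling of (W3) match.
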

\begin{proof}
The proof is by fixpoint induction:
We prove that the values read non-thread-locally for a global $g$ at some $(u,x=g,u')$ during
the computation of $\Get^i$ are the last \emph{thread-local} writes of a local trace
$t'$ ending in an unlock operation that is side-effected to an appropriate
$\Get^{i'}\,[g,a,S',w']$ in some prior iteration $i' < i$ for some $a$, $S'$, and $w'$ satisfying
(W0) and (W4).

This property holds for $i=0$, as in $\Get^0$, all unknowns for program points and currently held locksets
(except for the initial program point and the empty lockset) are $\emptyset$, and therefore no reads from globals or unlocks can happen.

For the induction step $i > 0$, there are two proof obligations: First that the property holds
for all reads from a global, and additionally that all traces ending in an unlock operation are
once more side-effected to appropriate unknowns in this iteration.

For the first obligation, consider a local trace $t \in\Get^i\,[u',S]$ where the last action is
$x=g$.
There is a last write to $g$ in $t$:
\[
\lW_g\,t = ((j'-1,u_{j'-1},\sigma_{j'-1}), g = x', \bar u'') = l.
\]
Let $i_0 = \id\,t$ and $i_1 = \sigma_{j-1}\,\self$ the thread \emph{id}s of the reading ego thread
and the thread performing the last write, respectively.
We distinguish two cases:\\
\emph{Case 1: $i_0 = i_1$.} The last write is thread-local to $t$ (and $l$ is therefore also the last thread-local write to $g$ in $t$).\\
\emph{Case 2: $i_0 \neq i_1$.} The last write is not thread-local.
Let $w_l$ the set of locks held on that last write to
$g$, i.e., $L_t[\bar u'']$.
Consider the maximal sub-trace $t'$ of $t$ with $\id(t') = i_1$ so that
$\last(t')$ unlocks some mutex in $S$. Let this mutex be $a$.
Such a sub-trace must exist since accessing $g$ is necessarily protected by $m_g$.
Let $S'$ denote the background lockset held at this last action in $t'$.
$t'$ was produced during some earlier iteration $i' < i$.
By induction hypothesis, we may assume this $t'$ was side-effected to $\Get^{i'}[g,a,S',w'']$
during the $i'$-th iteration, for all $w'' \supseteq w_l$.
Therefore, the read value $d$ is given by
\[
d = \sigma_{j-1}\,x' \in \eval_g (\Get^{i'}\,[g,a,S',w']) \subseteq  \eval_g (\Get^{i}\,[g,a,S',w'])
\]
It remains to prove that then the conditions hold for
$a$, $S'$, $w'$:
\begin{itemize}
\item[(W0)]	$a \in S$ (at least $m_g$, perhaps more)
\item[(W1)]	$w' \subseteq w_l \subseteq \M$ (by construction of the constraint for
	edges with unlock operations in $\C\CWCU$)
\item[(W2)]	$S \cap S' = \emptyset$\\
    Assume that this were not the case, i.e., $c\in S\cap S'$.
    Then thread $i_1$ holds the lock of mutex $c$ at the sink of
    every super-trace $t''$ of $t'$ in $t$ with $\id(t'') = i_1$.
    Since $c$ is never released, thread $i_0$ is unable to acquire $c$ --- which would
    be necessary to hold $S$ at the sink of $t$. Contradiction.
\item[(W3)]	$\exists S'' \in P\,g$, $S'' \cap w' = \emptyset$\\
    Let $l' = \lTlW_g t$ the last thread-local write to $g$ in $t$.
    If there is no last thread-local write, i.e., $l' = \bot$, then $P\,g = \{\emptyset\}$,
    and the condition holds. Otherwise, assume for a contradiction that $i_0$ has always maintained a
    non-empty lockset intersection with $w'$ since $l'$, i.e., since action $l'$ thread $i_0$ has at each point
    held one of the locks held when the write $l$ was performed. Then $l$ can not have
    happened after $l'$, and $l$ can not be the last write to $g$ in $t$.
\item[(W4)]	$\exists S''' \in P\,g$, $a \notin S'''$\\
    Let $l' = \lTlW_g t$ the last thread-local write to $g$ in $t$.
    If there is no last thread-local write, i.e., $l' = \bot$, then $P\,g = \{\emptyset\}$,
    and the condition holds.
    Otherwise, assume for a contradiction that $a \in S'''$ for all $S''' \in P\,g$.
    Since $a$ is unlocked by $i_1$ after $l$, $l$ can not have happened after $l'$ and $l$ can not
    be the last write to $g$ in $t$.
\end{itemize}
It now remains to show that any trace $t$ with $\last(t)=\unlock(a)$, $a\in\M$ ending in $(j,u,\sigma)$, i.e., $t=(j,u,\sigma)\downarrow_t$,
produced in this iteration $i$ is side-effected to $\Get^i\,[g,a,S,w]$
for $S = L_t[(j,u,\sigma)]$ and $w_l' \subseteq w\subseteq\M$, where $w_l'$ is the set of mutexes
held when writing to $g$ for the last time thread-locally in $t$,
if $g$ was written to at all. This, however, follows directly from the construction of
$\C\CWCU$.
\qed
\end{proof}

\noindent
Our goal is to relate post-solutions of the constraint systems
$\C\CWCU$ and $\C_\WriteCentered$ to each other. While the sets of unknowns of these two systems are the same,
the side-effects to unknowns are still not fully comparable.
Therefore, we modify the side-effects produced by
$\C_\WriteCentered$ for unlock operations to obtain yet another constraint system $\C_\ModWC$.
All right-hand-side functions remain the same except for $\unlock(a)$
which is now given by:
\[
	\begin{array}{lll}
	\sem{[u,S],\unlock(a)}^\sharp_\ModWCL\Get_\ModWC
		&=&	\Let\;(W,P,\sigma) = \Get\,[u,S]\;\In	\\
		& & \Let\;P' = \{ g \mapsto P\,g \sqcup \{S \setminus \{a\}\} \mid g \in \G \}\;\In\\
		& & \Let\;\rho = \{ [g,a,S \setminus \{a\},w] \mapsto \sigma\,g \mid\\
    && \qquad g \in \G, w'\in W\,g, w' \subseteq w \}\;\In\\
		& &	(\rho, (W,P',\sigma))
	\end{array}
\]
Instead of only side-effecting to \emph{minimal} sets $w'$ of locks held on a write to $g$,
the value now is side-effected to \emph{all}
supersets $w$ of such minimal elements.
This modification of the constraint system
only changes the values computed for globals, but not those for program points and currently held locksets: Upon reading, all
$[g,a,S,w]$ are consulted where there is an empty intersection of $w$ and some $P\,g$.
If this is the case for $w$,
it also holds for $w' \subseteq w$. Accordingly, the values additionally published to $[g,a,S,w]$,
are already read from $[g,a,S,w']$ directly in $C_\WriteCentered$.
More formally,
let $\Get_\WriteCentered$ be a post-solution of $\C_\WriteCentered$, define $\Get_\ModWC$ by
\[
\begin{array}{llll}
  \Get_\ModWC\,[u,S] &=& \Get_\WriteCentered\,[u,S] \qquad & u\in\N, S\subseteq\M\\
  \Get_\ModWC\,[g,a,S,w] &=& \bigsqcup \{\Get_\WriteCentered\,[g,a,S,w'] \mid w' \subseteq w\}
    \qquad & g\in\G, a\in\M, S\subseteq\M,w\subseteq\M
\end{array}
\]
Then, we have:
\begin{proposition}\label{prop:modBTPG}
$\Get_\ModWC$ as constructed above is a post-solution of $\C_\ModWC$.
\end{proposition}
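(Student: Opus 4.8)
The plan is to verify directly that $\Get_\ModWC$ satisfies every constraint of $\C_\ModWC$, using throughout that $\Get_\WriteCentered$ satisfies every constraint of $\C_\WriteCentered$. Two observations organize the argument. First, $\C_\ModWC$ and $\C_\WriteCentered$ have identical right-hand sides except for $\unlock(a)$. Second, by construction $\Get_\ModWC$ and $\Get_\WriteCentered$ agree on every program-point unknown $[u,S]$, while on the global unknowns $\Get_\ModWC\,[g,a,S,w]=\bigsqcup\{\Get_\WriteCentered\,[g,a,S,w']\mid w'\subseteq w\}\sqsupseteq\Get_\WriteCentered\,[g,a,S,w]$. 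I would split the verification into the constraints whose right-hand sides never consult a global unknown, the read $x=g$, and the side-effects produced by $\unlock(a)$.

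For the first group --- the initial constraint, $\lock(a)$, $x=\create(u_1)$, guards and computations on locals, writing $g=x$, and the contribution of $\unlock(a)$ to its left-hand unknown --- the right-hand sides inspect only the local state stored at some $[u,S]$ and contribute to, or side-effect to, further program-point unknowns (for $\create$, to $[u_1,\emptyset]$). Since $\Get_\ModWC$ and $\Get_\WriteCentered$ coincide on all program-point unknowns and these functions are literally the same in both systems, each such constraint evaluates identically under $\Get_\ModWC$ and under $\Get_\WriteCentered$; as $\Get_\WriteCentered$ satisfies them, so does $\Get_\ModWC$.

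The read $x=g$ is the first place a global unknown is consulted, and here I would prove that the value $d$ read under $\Get_\ModWC$ equals the one read under $\Get_\WriteCentered$. Writing $\Phi(a,S',w)$ for the selection condition ``$a\in S$, $S\cap S'=\emptyset$, $\exists S''\in P\,g:S''\cap w=\emptyset$, $\exists S'''\in P\,g:a\notin S'''$'', the only $w$-dependent clause is $\exists S''\in P\,g:S''\cap w=\emptyset$, which is \emph{downward closed} in $w$: if $S''\cap w=\emptyset$ and $w'\subseteq w$, then $S''\cap w'=\emptyset$. Hence $\Phi(a,S',w)$ implies $\Phi(a,S',w')$ for all $w'\subseteq w$. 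Substituting the defining join for each $\Get_\ModWC\,[g,a,S',w]$ and using this implication shows that every joinand of $\bigsqcup\{\Get_\ModWC\,[g,a,S',w]\mid\Phi(a,S',w)\}$ already occurs in $\bigsqcup\{\Get_\WriteCentered\,[g,a,S',w]\mid\Phi(a,S',w)\}$, so the former is $\sqsubseteq$ the latter; the opposite inequality is immediate since $\Get_\ModWC$ dominates $\Get_\WriteCentered$ pointwise. The two joins therefore coincide, the read contributes the same local state in both systems, and the constraint is inherited from $\Get_\WriteCentered$.

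Finally, for the side-effects of $\unlock(a)$ under $\C_\ModWC$, a value $\sigma\,g$ is side-effected to $[g,a,S\setminus\{a\},w]$ for every $w$ admitting some $w'\in W\,g$ with $w'\subseteq w$. Fixing such a $w'$, the system $\C_\WriteCentered$ side-effects the same $\sigma\,g$ to $[g,a,S\setminus\{a\},w']$, so $\Get_\WriteCentered\,[g,a,S\setminus\{a\},w']\sqsupseteq\sigma\,g$; since $w'\subseteq w$, this value is one of the joinands of $\Get_\ModWC\,[g,a,S\setminus\{a\},w]$, whence $\Get_\ModWC\,[g,a,S\setminus\{a\},w]\sqsupseteq\sigma\,g$, as required. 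I expect the main obstacle to be precisely the interplay exhibited in the last two paragraphs: one must check that enlarging the family of side-effect targets from the minimal witnesses to all their supersets neither makes anything readable that was not already read in $\C_\WriteCentered$ (handled by downward-closedness of the read condition in $w$) nor leaves any new side-effect unabsorbed (handled by the join-over-subsets definition of $\Get_\ModWC$).
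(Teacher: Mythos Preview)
Your proposal is correct and follows precisely the approach the paper indicates: verify each constraint of $\C_\ModWC$ against $\Get_\ModWC$, using that $\Get_\WriteCentered$ is a post-solution of $\C_\WriteCentered$ and that the two systems differ only in the $\unlock$ side-effects. Your key observation---that the read condition $\exists S''\in P\,g:S''\cap w=\emptyset$ is downward closed in $w$, so the extra side-effects to supersets never add a readable value---is exactly the reasoning the paper records informally just before stating the proposition; the paper's own proof is a one-line sketch, and your write-up simply spells it out in full.
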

\begin{proof}
The proof of \cref{prop:modBTPG} is by verifying for each edge $(u,A,v)$
of the control-flow graph, each possible lockset $S$,
and $\Get_\ModWC$ as constructed above, that
\[
\sem{[u,S],A}^\sharp_\ModWCL\,\Get_\ModWC \sqsubseteq(\Get_\ModWC,\Get_\ModWC\,[v,S'])
\]
holds.\qed
\end{proof}

\noindent It thus remains to relate post-solution of $\C\CWCU$ and $C_\ModWC$ to each other.
As a first step, we define a function
$\beta$ that extracts from a local trace $t$ for each global $g$ the minimal lockset $W\,g$
held at the last \emph{thread-local} write to $g$, as well as all minimal locksets $P\,g$
since the last \emph{thread-local} write to $g$.
Additionally, it extracts a map $\sigma$ that contains
the values of the locals at the sink of $t$ as well as the last-written thread-local values
of globals. Thus, we define
\[
  \begin{array}{lll}
  \beta\,t &=& (W,P,\sigma) \qquad \text{where} \\[1ex]
  W &=& \{ g \mapsto \{ L_t[\bar u'] \}   \mid g \in \G, (\_,g=x, \bar u ') = \lTlW_g\,t\}\\
  && \cup \; \{ g \mapsto \emptyset \mid g \in \G, \bot  = \lTlW_g\,t\ \}\\[1ex]
  P &=& \{ g \mapsto \minLSince\,t\,\bar u' \mid g \in \G, (\_,g=x, \bar u') = \lTlW_g\,t\}\\
  && \cup \; \{ g \mapsto \{\emptyset\} \mid g \in \G, \bot  = \lTlW_g\,t\ \}\\[1ex]
  \sigma &=& \{ x \mapsto \{t(x)\} \mid x \in \X \} \cup \{ g \mapsto \emptyset \mid g \in \G, \bot = \lTlW_g\,t\} \\[1ex]
  && \cup \; \{ g \mapsto \{\sigma_{j-1}\,x\}  \mid g \in \G, ((j-1,u_{j-1},\sigma_{j-1}),g=x, \_) = \lTlW_g\,t\} \\
  \end{array}
\]
\noindent
The abstraction function $\beta$ is used to specify concretization functions for
the values of unknowns $[u,S]$ for program points and currently held locksets as well as for unknowns $[g,a,S,w]$.
%
\[
\begin{array}{lll}
  \gamma_{u,S}(P^\sharp,W^\sharp,\sigma^\sharp) &=& \{ t \in \T_S \mid \loc\,t=u, \beta\,t = (W,P,\sigma),\\
  && \quad  \sigma \subseteq \gamma_\D\circ\sigma^\sharp, W \sqsubseteq W^\sharp, P \sqsubseteq P^\sharp\}
\end{array}
\]
where $\subseteq,\sqsubseteq$ are extended point-wise from domains to maps into domains.
Moreover,
%
\[
\begin{array}{lll}
  \gamma_{g,a,S,w}(v) &=& \{ t \in \T_S \mid \last\,t=\unlock(a),\\
  && (\_,\_,\sigma_{j-1}),g=x, \bar u ') = \lTlW_g\,t, \sigma_{j-1}\,x \in \gamma_\D(v),\\
  && w \subseteq  L_t[\bar u'] \}\\
  && \cup\; \{ t \in \T_S \mid \last\,t=\unlock(a), \lTlW_g\,t = \bot \}
\end{array}
\]
where $\gamma_\D:\D\to 2^\V$ is the concretization function for abstract values in $\D$.
Let $\Get_\ModWC$ be a post-solution of $\C_\ModWC$. We then construct from it a mapping $\Get\CWCU$ by:
\[
  \begin{array}{llll}
    \Get\CWCU [u,S] &=& \gamma_{u,S}(\Get_\ModWC\,[u,S]) \qquad & u\in\N, S\subseteq\M\\
    \Get\CWCU [g,a,S,w] &=& \gamma_{g,a,S,w}(\Get_\ModWC\,[g,a,S,w]) \qquad & g\in\G, a\in\M, S\subseteq\M,w\subseteq\M
  \end{array}
\]
Altogether, the correctness of the constraint system $\C_\WriteCentered$ follows from the
following theorem.

\begin{theorem}\label{t:write-centered}
Every post-solution of $\C_\WriteCentered$ is sound w.r.t.\ the local trace semantics.
\end{theorem}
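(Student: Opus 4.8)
The plan is to chain together the results already established for the two auxiliary constraint systems. By construction, the set of unknowns of $\C\CWCU$ coincides with that of $\C_\WriteCentered$ (and $\C_\ModWC$), so the three systems can be compared directly. The strategy mirrors the completed soundness proof for \emph{Lock-Centered Reading} (\cref{t:lock-centered}): I first observe that by \cref{p:GetGet'} the least solution $\GetPrimeP$ of $\C\CWCU$ faithfully encodes the concrete local-trace semantics, so it suffices to exhibit a sound abstraction relation from $\C\CWCU$ into $\C_\WriteCentered$. The intermediate system $\C_\ModWC$ serves as a bridge, since its side-effects (publishing to \emph{all} supersets $w$ of the minimal write-lockset) are the ones that line up cleanly with the concretizations $\gamma_{g,a,S,w}$.

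**First I would** reduce, exactly as in the lock-centered case, the claim ``every post-solution of $\C_\WriteCentered$ is sound'' to the claim ``$\Get\CWCU$ (built from a post-solution $\Get_\ModWC$ via the $\gamma$-maps) is a post-solution of $\C\CWCU$.'' This uses two earlier facts: \cref{prop:modBTPG}, which tells us that from any post-solution $\Get_\WriteCentered$ of $\C_\WriteCentered$ we obtain a post-solution $\Get_\ModWC$ of $\C_\ModWC$ by saturating the global unknowns over supersets; and the observation (argued before \cref{prop:modBTPG}) that this saturation does not change the values computed at program points, so soundness of $\Get_\ModWC$ transfers back to $\Get_\WriteCentered$. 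I would then prove $\Get\CWCU$ is a post-solution by fixpoint induction, showing $\Get^i \subseteq \Get\CWCU$ for every approximation $\Get^i$ of $\GetPrimeP$.

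**The bulk** of the work is the edge-by-edge verification of the induction step, treating the initial state, thread creation, $\lock(a)$, $\unlock(a)$, writes $g=x$, reads $x=g$, and local guards/assignments. For each non-read edge this is a routine computation of $\beta\,t'$ for the prolonged trace $t'$, matching the updates of $W$, $P$, and $\sigma$ against the abstract right-hand side and invoking that $\Get_\ModWC$ is a post-solution; these follow the same template already spelled out for lock-centered reading. The side-effect obligations at $\unlock(a)$ must be checked against the $\gamma_{g,a,S,w}$ definition, where the superset condition $w \subseteq L_t[\bar u']$ is precisely what the $\C_\ModWC$ modification was engineered to supply.

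**The hard part will be** the read edge $x=g$, and this is where \cref{prop:read} does the heavy lifting. For a trace $t'$ whose last action reads $g$ non-thread-locally, I must show the concrete read value lies inside $\gamma_\D$ of the abstract value $d = \sigma^\sharp\,g \sqcup (d_m \sqcap d_g)$ computed by \emph{Write-Centered Reading}. \cref{prop:read} guarantees the value comes from some $\Get^{i'}[g,a,S',w']$ with conditions (W0)--(W4) holding, where (W3) and (W4) are stated in terms of $P = \minLSince(t,\bar v')$; the induction hypothesis gives $P \sqsubseteq P^\sharp$ (and $W \sqsubseteq W^\sharp$), so these concrete conditions imply the abstract side-conditions $\exists S''\in P\,g\,(S''\cap w' = \emptyset)$ and $\exists S'''\in P\,g\,(a\notin S''')$ that gate the join defining $d$. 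The subtle point, absent in the lock-centered proof, is that the $P$-maps here simultaneously constrain the \emph{write} histories rather than lock-acquisition histories, so I must carefully confirm that the abstraction $\beta$ records enough of the causality order for (W3)/(W4) to transfer; once that monotonicity of the read-selection under $\sqsubseteq$ is established, the inclusion $\sigma'\,x \subseteq (\gamma_\D\circ\sigma^{\sharp})\,x$ follows by the same telescoping chain of $\gamma_\D$-monotonicity steps used in \cref{t:lock-centered}, completing the induction and hence the theorem.
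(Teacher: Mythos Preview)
Your plan matches the paper's proof essentially step for step: reduce via \cref{p:GetGet'} and \cref{prop:modBTPG} to showing that the mapping $\Get\CWCU$ (built from a post-solution of $\C_\ModWC$ through the $\gamma$-maps) is a post-solution of $\C\CWCU$, then verify this by fixpoint induction edge by edge, with \cref{prop:read} handling the read case. One small slip: the abstract read value in \emph{Write-Centered Reading} is $d = \sigma^\sharp\,g \sqcup \bigsqcup\{\Get_\ModWC[g,a,S',w]\mid\ldots\}$, not $\sigma^\sharp\,g \sqcup (d_m \sqcap d_g)$ --- the latter belongs to the \emph{Combined} analysis of \cref{s:combined} --- but since you then correctly identify the gating conditions (W0)--(W4) this does not affect the argument.
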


\begin{proof}
Recall from \cref{p:GetGet'}, that the least solution of $\C\CWCU$ is sound
w.r.t.\ the local trace semantics as specified by the constraint system $\C$.
By \cref{prop:modBTPG}, it thus suffices to prove that the mapping $\Get\CWCU$
as constructed above, is a post-solution of the constraint system $\C\CWCU$.
For that, we verify by fixpoint induction that for the $i$-th approximation $\Get^i$
to the least solution $\GetPrimeP$ of $\C\CWCU$, $\Get^i \subseteq \Get\CWCU$ holds.
To this end, we verify for the start point $u_0$ and the empty lockset, that
\[
  (\emptyset,\init) \subseteq (\Get\CWCU,\Get\CWCU\,[u_0,\emptyset])
\] holds and for each edge $(u,A,v)$ of the control-flow graph
and each possible lockset $S$, that
\[
  \sem{[u,S],A}\CWCU\, \Get^{i-1} \subseteq(\Get\CWCU,\Get\CWCU\,[v,S'])
\] holds.

\noindent First, for the start point $u_0$ and the empty lockset:
\[
  (\emptyset,\init) \subseteq (\Get\CWCU,\Get\CWCU\,[u_0,\emptyset])
\]
As there are no side-effects triggered, it suffices to check that $\init \subseteq \Get\CWCU\,[u_0,\emptyset]$.
\[
	\begin{array}{lll}
		\init^\sharp_\ModWCL\,\_ &=& \Let\; W^\sharp = \{ g \mapsto \emptyset \mid g \in \G \} \;\In\\
		&& \Let\; P^\sharp = \{ g \mapsto \{\emptyset\} \mid g \in \G \} \;\In\\
		&& \Let\; \sigma^\sharp = \{ x \mapsto \top \mid x \in \X \} \cup \{ g \mapsto \bot \mid g \in \G \} \;\In\\
		&& (\emptyset,(W^\sharp,P^\sharp,\sigma^\sharp))
	\end{array}
\]
Let $\Get_\ModWC\,[u_0,\emptyset] = (W^{\sharp'},P^{\sharp'},\sigma^{\sharp'})$ the
value provided by $\Get_\ModWC$ for the start point and the empty lockset.
Since $\Get_\ModWC$ is a post-solution of $\C_\ModWC$,
$W^\sharp \sqsubseteq W^{\sharp'}$, $P^\sharp \sqsubseteq P^{\sharp'}$,
and $\sigma^\sharp \sqsubseteq \sigma^{\sharp'}$ all hold.
Then, by definition:
\[
\begin{array}{lll}
\Get\CWCU[u_0,\emptyset] = \gamma_{u_0,\emptyset}(W^{\sharp'},P^{\sharp'},\sigma^{\sharp'}) &=&
    \{ t \in \T_\emptyset \mid \loc\,t=u_0, \beta\,t = (W,P,\sigma),\\
    && \quad  \sigma \subseteq \gamma_\D\circ\sigma^{\sharp'}, W \sqsubseteq W^{\sharp'}, P \sqsubseteq P^{\sharp'} \}
\end{array}
\]
For every trace $t \in \init$, let
\[
\begin{array}{lll}
  \beta\,t &=& (W,P,\sigma) \text{ where }\\[1ex]

  W &=& \{ g \mapsto \{ L_{t}[\bar u'] \} \mid g \in \G, (\bar u,g=x, \bar u ') = \lTlW_{g}\,t\}\\
  && \quad \cup \; \{ g \mapsto \emptyset \mid g \in \G, \bot  = \lTlW_{g}\,t\ \}\\
  &=& \{ g \mapsto \emptyset \mid g \in \G \}\\[1ex]

  P &=&  \{ g \mapsto \minLSince\,t\,\bar u' \mid g \in \G, (\bar u,g=x, \bar u ') = \lTlW_{g}\,t\}\\
  && \quad \cup \; \{ g \mapsto \{\emptyset\} \mid g \in \G, \bot  = \lTlW_{g}\,t\ \}\\
  &=& \{g \mapsto \{\emptyset\} \mid g \in \G\}\\[1ex]

  \sigma &=&  \{ x \mapsto \{t(x)\} \mid x \in \X \} \cup \{ g \mapsto \emptyset \mid g \in \G, \bot = \lTlW_{g}\,t\} \\
  && \quad \cup \; \{ g \mapsto \{\sigma_{j-1}\,x\}  \mid g' \in \G, ((j-1,u_{j-1},\sigma_{j-1}),g=x, \bar u ') = \lTlW_{g}\,t\} \\
  &=& \{ x \mapsto \{t(x)\} \mid x \in \X \} \cup \{ g \mapsto \emptyset \mid g \in \G\}
\end{array}
\]
Thus,
\[
  \begin{array}{lll}
    W &=& W^{\sharp} \sqsubseteq W^{\sharp'}\\
    P &=& P^{\sharp} \sqsubseteq P^{\sharp'}\\
    \sigma &=& \{ x \mapsto \{t(x)\} \mid x \in \X \} \cup \{ g \mapsto \emptyset \mid g \in \G\} \\
    &\subseteq& \gamma_\D\circ(\{ x \mapsto \top \mid x \in \X \} \cup \{ g \mapsto \bot \mid g \in \G\}) = \gamma_\D\circ\sigma^\sharp \\
    &\subseteq& \gamma_\D\circ\sigma^{\sharp'}
  \end{array}
\]
Altogether, $t \in \Get\CWCU\,[u_0,\emptyset]$ for all $t \in \init$.

\vspace{1em}
\noindent Next, we verify for each edge $(u,A,v)$ of the control-flow graph
and each possible lockset $S$, that
\[
  \sem{[u,S],A}\CWCU\, \Get^{i-1} \subseteq(\Get\CWCU,\Get\CWCU\,[v,S'])
\] holds.

\vspace{1em}
\noindent We first consider a write to a global $g=x$.
\[
\begin{array}{lll}
  \sem{[u,S],g=x}\CWCU\,\Get\CWCU	&=& (\emptyset,\sem{e}(\Get\CWCU\,[u,S]))	\\[1ex]

    \sem{[u,S],g=x}^\sharp_\ModWCL\Get_\ModWC	&=&	\Let\;(W^\sharp,P^\sharp,\sigma^\sharp) = \Get_\ModWC\,[u,S]\;\In	\\
		& &	\Let\;W^{\sharp''} = W^\sharp \oplus \{g \mapsto \{S\} \}\;\In\\
		& & \Let\;P^{\sharp''} = P^\sharp \oplus \{g \mapsto \{S\} \}\;\In\\
    & & \Let\;\sigma^{\sharp''} =	\sigma^\sharp \oplus\{g\mapsto(\sigma^\sharp \,x)\} \;\In\\
		& &	(\emptyset, (W^{\sharp''},P^{\sharp''},\sigma^{\sharp''}))
\end{array}
\]
Let $\Get_\ModWC\,[u,S] = (W^\sharp,P^\sharp,\sigma^\sharp)$ and
$\Get_\ModWC\,[v,S] = (W^{\sharp'},P^{\sharp'},\sigma^{\sharp'})$ the
value provided by $\Get_\ModWC$ for the end point of the given control-flow edge and lockset.
Since $\Get_\ModWC$ is a post-solution of $\C_\ModWC$,
$W^{\sharp''} \sqsubseteq W^{\sharp'}$, $P^{\sharp''} \sqsubseteq P^{\sharp'}$, and $\sigma^{\sharp''} \sqsubseteq \sigma^{\sharp'}$
hold.
Then, by definition:
\[
\begin{array}{lll}
\Get\CWCU[v,S] = \gamma_{v,S}(W^{\sharp'},P^{\sharp'},\sigma^{\sharp'}) &=&
    \{ t \in \T_S \mid \loc\,t=v, \beta\,t = (W,P,\sigma),\\
    && \quad  \sigma \subseteq \gamma_\D\circ\sigma^{\sharp'}, W \sqsubseteq W^{\sharp'}, P \sqsubseteq P^{\sharp'} \}
\end{array}
\]
For every trace $t \in \Get^{i-1}\,[u,S]$, let $\beta\,t = (W,P,\sigma)$. By induction hypothesis,
$W \sqsubseteq W^\sharp$, $P \sqsubseteq P^\sharp$, and
$\sigma \subseteq\gamma_\D\circ\sigma^\sharp$. Let $t'= \sem{e}\{ t\}$, then $\loc(t') = v$, $t' \in \T_S$, and
\[
\begin{array}{lll}
  \beta\,t' &=& (W',P',\sigma') \text{ where }\\[1ex]

  W' &=& \{ g' \mapsto \{ L_{t'}[\bar u'] \} \mid g' \in \G, (\_,g'=x', \bar u ') = \lTlW_{g'}\,t'\}\\
  && \cup \; \{ g' \mapsto \emptyset \mid g' \in \G, \bot  = \lTlW_{g'}\,t'\ \}\\
  &=& W \oplus \{ g \mapsto \{ L_t'[\bar u'] \} \mid (\_,g=x, \bar u ') = \lTlW_g\,t'\}\\
  &=& W \oplus \{ g \mapsto \{ S \} \}\\[1ex]

  P' &=&  \{ g' \mapsto \minLSince\,t'\,\bar u' \mid g' \in \G, (\_,g'=x', \bar u ') = \lTlW_{g'}\,t'\}\\
  && \cup \; \{ g' \mapsto \{\emptyset\} \mid g' \in \G, \bot  = \lTlW_{g'}\,t'\ \}\\
  &=& P \oplus \{ g \mapsto \minLSince\,t'\,\bar u' \mid (\_,g=x, \bar u ') = \lTlW_g\,t'\}\\
  &=& P \oplus \{ g \mapsto \{ S \} \}\\[1ex]

  \sigma' &=&  \{ x \mapsto \{t'(x')\} \mid x' \in \X \} \cup \{ g' \mapsto \emptyset \mid g' \in \G, \bot = \lTlW_{g'}\,t'\} \\
  && \cup \; \{ g' \mapsto \{\sigma_{j-1}\,x'\}  \mid g' \in \G, ((j-1,u_{j-1},\sigma_{j-1}),g'=x', \_) = \lTlW_{g'}\,t'\} \\
  &=& \sigma \oplus \{ g \mapsto \{\sigma_{j-1}\,x\}  \mid ((j-1,u_{j-1},\sigma_{j-1}),g=x, \_) = \lTlW_{g}\,t'\} \\
  &=& \sigma \oplus \{ g \mapsto \{t(x)\}\} = \sigma \oplus \{ g \mapsto \sigma\,x \}
\end{array}
\]
Thus,
\[
  \begin{array}{lll}
  W' &=& W \oplus \{g \mapsto \{S\}\} \sqsubseteq W^\sharp \oplus \{g \mapsto \{S\}\} = W^{\sharp''}  \sqsubseteq W^{\sharp'}\\
  P' &=& P \oplus \{g \mapsto \{S\}\} \sqsubseteq P^\sharp \oplus \{g \mapsto \{S\}\} = P^{\sharp''}  \sqsubseteq P^{\sharp'}\\
  \sigma' &=& \sigma \oplus \{ g \mapsto \sigma\,x \} \subseteq \gamma_\D\circ(\sigma^\sharp \oplus \{ g \mapsto \sigma^\sharp\,x\}) =
  \gamma_\D\circ\sigma^{\sharp''}\subseteq \gamma_\D\circ\sigma^{\sharp'}
  \end{array}
\]
Altogether, $t' \in \Get\CWCU\,[v,S]$ for all $t \in \Get^{i-1}\,[u,S]$.
We conclude that the return value of
$\sem{[u,S],g=x}\CWCU\Get^{i-1}$ is subsumed by the value $\Get\CWCU\,[v,S]$
and since the constraint causes no side-effects, the claim holds.

\vspace{1em}
\noindent Next, for a read from a global $x = g$:
\[
\begin{array}{lll}
  \sem{[u,S],x=g}\CWCU\,\Get\CWCU	&=& (\emptyset,\sem{e}(\Get\CWCU\,[u,S]))	\\[1ex]

  \sem{[u,S],x=g}^\sharp_\ModWCL\Get_\ModWC	&=&
  \Let\;(W^\sharp,P^\sharp,\sigma^\sharp) = \Get_\ModWC\,[u,S]\;\In	\\
  & & \Let\;d = \sigma^\sharp\, g \sqcup\bigsqcup\{\Get_\ModWC\,[g,a,S',w] \mid  a\in S, S\cap S' =\emptyset, \\
  & & \quad \exists S'' \in P^\sharp\,g: S'' \cap w = \emptyset, \\
  & & \quad \exists S''' \in P^\sharp\,g: a \notin S''' \}\;\In\\
  & & \Let\;\sigma^{\sharp''} = \sigma^\sharp \oplus\{x\mapsto d\}\;\In\\
  & &	(\emptyset, (W^{\sharp},P^{\sharp},\sigma^{\sharp''}))
\end{array}
\]
Let $\Get_\ModWC\,[u,S] = (W^\sharp,P^\sharp,\sigma^\sharp)$ and
$\Get_\ModWC\,[v,S] = (W^{\sharp'},P^{\sharp'},\sigma^{\sharp'})$ the
value provided by $\Get_\ModWC$ for the end point of the given control-flow edge and lockset.
Since $\Get_\ModWC$ is a post-solution of $\C_\ModWC$,
$W^{\sharp} \sqsubseteq W^{\sharp'}$, $P^{\sharp} \sqsubseteq P^{\sharp'}$, and $\sigma^{\sharp''} \sqsubseteq \sigma^{\sharp'}$
hold.
Then, by definition:
\[
\begin{array}{lll}
\Get\CWCU[v,S] = \gamma_{v,S}(W^{\sharp'},P^{\sharp'},\sigma^{\sharp'}) &=&
    \{ t \in \T_S \mid \loc\,t=v, \beta\,t = (W,P,\sigma),\\
    && \quad  \sigma \subseteq \gamma_\D\circ\sigma^{\sharp'}, W \sqsubseteq W^{\sharp'}, P \sqsubseteq P^{\sharp'} \}
\end{array}
\]
For every trace $t \in \Get^{i-1}\,[u,S]$, let $\beta\,t = (W,P,\sigma)$. By induction hypothesis,
$W \sqsubseteq W^\sharp$, $P \sqsubseteq P^\sharp$, and
$\sigma \subseteq \gamma_\D\circ\sigma^\sharp$. Let $t'= \sem{e}\{ t\}$, then $\loc(t') = v$, $t' \in \T_S$, and
\[
\begin{array}{lll}
  \beta\,t' &=& (W',P',\sigma') \text{ where: }\\[1ex]

  W' &=& \{ g' \mapsto \{ L_{t'}[\bar u'] \} \mid g' \in \G, (\_,g'=x', \bar u ') = \lTlW_{g'}\,t'\}\\
  && \cup \; \{ g' \mapsto \emptyset \mid g' \in \G, \bot  = \lTlW_{g'}\,t'\ \}\\
  &=& W\\[1ex]

  P' &=&  \{ g' \mapsto \minLSince\,t'\,\bar u' \mid g' \in \G, (\_,g'=x', \bar u ') = \lTlW_{g'}\,t'\}\\
  && \cup \; \{ g' \mapsto \{\emptyset\} \mid g' \in \G, \bot  = \lTlW_{g'}\,t'\ \}\\
  &=& P\\[1ex]

  \sigma' &=&  \{ x' \mapsto \{t'(x)\} \mid x' \in \X \} \cup \{ g' \mapsto \emptyset \mid g' \in \G, \bot = \lTlW_{g'}\,t'\} \\
  && \cup \; \{ g' \mapsto \{\sigma_{j-1}\,x\}  \mid g' \in \G, ((j-1,u_{j-1},\sigma_{j-1}),g'=x', \_) = \lTlW_{g'}\,t'\} \\
  &=& \sigma \oplus \{ x \mapsto \{t'(x)\} \} \\
  &=& \sigma \oplus \{ x \mapsto \{\sigma_{j'-1}\,x'\} \mid \lW_g\,t'= ((j'-1,u_{j'-1},\sigma_{j'-1}), g = x', \_)  \}\\
  &=& \sigma \oplus \{ x \mapsto \{\sigma_{j'-1}\,x'\} \mid \lW_g\,t = ((j'-1,u_{j'-1},\sigma_{j'-1}), g = x', \_)  \}
\end{array}
\]
Thus, $W = W' \sqsubseteq W^{\sharp} \sqsubseteq W^{\sharp'}$ and $P = P' \sqsubseteq P^{\sharp} \sqsubseteq P^{\sharp'}$.
Also $\sigma\,y = \sigma'\,y$ and therefore, $\sigma'\,y\subseteq(\gamma_\D\circ\sigma^{\sharp'})\,y$
for $y \not\equiv x$.
For $y \equiv x$, we consider two cases:
\begin{itemize}
  \item Last write to $g$ is thread-local ($\lTlW_g\,t = ((j'-1,u_{j'-1},\sigma_{j'-1}), g = x', \bar u'')$):
  Then $\sigma\,g= \{\sigma_{j'-1}\,x'\} \subseteq (\gamma\circ\sigma^\sharp)\,g$, thus
 $\sigma'\,x \subseteq (\gamma\circ\sigma^{\sharp''})\,x$
  and accordingly, $\sigma' \subseteq \gamma\circ\sigma^{\sharp'}$.
  \item Last write to $g$ is non-thread-local. Then
  \[
    \begin{array}{lll}
    \sigma'\,x &\subseteq& \bigcup \{ \eval_g(\Get^{i-1}\,[g,a,S',w]) \mid a \in S, S \cap S' = \emptyset, w \subseteq \M \\
    && \qquad \exists S'' \in P:\,S'' \cap w = \emptyset\\
    && \qquad \exists S''' \in P:\,a \notin S''' \} \qquad \text{(By \cref{prop:read})} \\
    &\subseteq&  \bigcup \{ \eval_g(\Get\CWCU[g,a,S',w]) \mid a \in S, S \cap S' = \emptyset, w \subseteq \M \\
    && \qquad \exists S'' \in P^\sharp:\,S'' \cap w = \emptyset\\
    && \qquad \exists S''' \in P^\sharp:\,a \notin S''' \} \qquad \text{(By Induction Hypothesis)} \\
    &\subseteq&  \bigcup \{ \gamma_\D(\Get_\ModWC[g,a,S',w]) \mid a \in S, S \cap S' = \emptyset, w \subseteq \M \\
    && \qquad \exists S'' \in P^\sharp:\,S'' \cap w = \emptyset\\
    && \qquad \exists S''' \in P^\sharp:\,a \notin S''' \} \\
    &\subseteq& \gamma_\D (\bigsqcup \{ (\Get_\ModWC[g,a,S',w]) \mid a \in S, S \cap S' = \emptyset, w \subseteq \M \\
    && \qquad \exists S'' \in P^\sharp:\,S'' \cap w = \emptyset\\
    && \qquad \exists S''' \in P^\sharp:\,a \notin S''' \} \sqcup \sigma^\sharp\,g) \\
    &=& (\gamma_\D\circ\sigma^{\sharp''})\,x
       \subseteq (\gamma_\D\circ\sigma^{\sharp'})\,x  \\
    \end{array}
  \]
  and thus $\sigma' \subseteq \gamma_\D\circ\sigma^{\sharp''} \subseteq \gamma_\D\circ\sigma^{\sharp'}$.
\end{itemize}
Altogether, $t' \in \Get\CWCU\,[v,S]$ for all $t \in \Get^{i-1}\,[u,S]$.
We conclude that the return value of
$\sem{[u,S],x=g}\CWCU\,\Get^{i-1}$ is subsumed by the value $\Get\CWCU\,[v,S]$
and since the constraint causes no side-effects, the claim holds.

\vspace{1em}
\noindent Next, for $\lock(a)$, $a \in \M$:
\[
\begin{array}{lll}
  \sem{[u,S],\lock(a)}\CWCU\,\Get\CWCU &=& \Let\;T' = \bigcup \{ \Get\CWCU\,[g,a,S,w] \mid g\in\G, S \subseteq \M, w \subseteq \M \}\;\In \\
  & & (\emptyset,\sem{e}(\Get\CWCU\,[u,S],T')	\\[1ex]

  \sem{[u,S],\lock(a)}^\sharp_\ModWCL\Get_\ModWC	&=&
  \Let\;(W^\sharp,P^\sharp,\sigma^\sharp) = \Get_\ModWC\,[u,S]\;\In	\\
  & &	(\emptyset, (W^\sharp,P^\sharp,\sigma^\sharp))
\end{array}
\]
Let $\Get_\ModWC\,[u,S] = (W^\sharp,P^\sharp,\sigma^\sharp)$ and
$\Get_\ModWC\,[v,S\cup\{a\}] = (W^{\sharp'},P^{\sharp'},\sigma^{\sharp'})$ the
value provided by $\Get_\ModWC$ for the end point of the given control-flow edge and lockset.
Since $\Get_\ModWC$ is a post-solution of $\C_\ModWC$,
$W^{\sharp} \sqsubseteq W^{\sharp'}$, $P^{\sharp} \sqsubseteq P^{\sharp'}$, and $\sigma^{\sharp''} \sqsubseteq \sigma^{\sharp'}$
hold.
Then, by definition:
\[
\begin{array}{lll}
\Get\CWCU[v,S\cup\{a\}] = \gamma_{v,S\cup\{a\}}(W^{\sharp'},P^{\sharp'},\sigma^{\sharp'}) &=&
    \{ t \in \T_{S\cup\{a\}} \mid \loc\,t=v, \beta\,t = (W,P,\sigma),\\
    && \quad W \sqsubseteq W^{\sharp'}, P \sqsubseteq P^{\sharp'}, \sigma \in \gamma\,\sigma^{\sharp'} \}
\end{array}
\]
For every trace $t \in \Get^{-1}\,[u,S]$, let $\beta\,t = (W,P,\sigma)$. By induction hypothesis,
$W \sqsubseteq W^\sharp$, $P \sqsubseteq P^\sharp$, and
$\sigma \subseteq \gamma_\D\circ\sigma^\sharp$. Let $t' \in \sem{e}(\{t\}, \bigcup \{ \Get\CWCU\,[g,a,S,w] \mid g\in\G, S \subseteq \M, w \subseteq \M \})$,
then $\loc(t') = v$, $t' \in \T_{S\cup\{a\}}$, and
\[
\begin{array}{lll}
  \beta\,t' &=& (W',P',\sigma') \text{ where: }\\[1ex]

  W' &=& \{ g \mapsto \{ L_{t'}[\bar u'] \} \mid g \in \G, (\_,g=x, \bar u ') = \lTlW_{g}\,t'\}\\
  && \cup \; \{ g \mapsto \emptyset \mid g \in \G, \bot  = \lTlW_{g}\,t'\ \}\\
  &=& W\\[1ex]

  P' &=&  \{ g \mapsto \minLSince\,t'\,\bar u' \mid g \in \G, (\_,g=x, \bar u ') = \lTlW_{g}\,t'\}\\
  && \cup \; \{ g \mapsto \{\emptyset\} \mid g \in \G, \bot  = \lTlW_{g}\,t'\ \}\\
  &=& P\\[1ex]

  \sigma' &=&  \{ x \mapsto t'(x) \mid x \in \X \} \cup \{ g \mapsto \bot \mid g \in \G, \bot = \lTlW_{g}\,t'\} \\
  && \cup \; \{ g \mapsto \sigma_{j-1}\,x  \mid g \in \G, ((j-1,u_{j-1},\sigma_{j-1}),g=x, \_) = \lTlW_{g}\,t'\} \\
  &=& \sigma
\end{array}
\]
Thus, $W' = W \sqsubseteq W^\sharp \sqsubseteq W^{\sharp'}$, $P' = P \sqsubseteq P^\sharp \sqsubseteq P^{\sharp'}$,
$\sigma' = \sigma \in \gamma (\sigma^\sharp) \subseteq \gamma (\sigma^{\sharp'})$.
Altogether, $t' \in \Get\CWCU\,[v,S\cup\{a\}]$ for all $t \in \Get^{i-1}\,[u,S]$.
We conclude that the return value of
$\sem{[u,S],\lock(a)}\CWCU\, \Get^{i-1}$ is subsumed by the value $\Get\CWCU\,[v,S\cup\{a\}]$
and since the constraint causes no side-effects, the claim holds.

\vspace{1em}
\noindent Next, for $\unlock(a)$ $a \in \M$:
\[
\begin{array}{lll}
  \sem{[u,S],\unlock(a)}\CWCU\,\Get\CWCU
  &=&	\Let\;T = \sem{e}(\Get\CWCU\,[u,S])\;\In	\\
  & & \Let\;\rho = \{[g,a,S\setminus \{ a \}, w ]\mapsto \{t\} \mid  t \in T, g \in \G, w \subseteq \M \\
  & & \qquad (\lTlW _g\,t = (\bar u, g =x, \bar u ') \land L_t[\bar u'] \subseteq w) \lor\\
  & & \qquad (\lTlW _g\,t = \bot) \}\;\In \\
  & &	(\rho, T)\\[1ex]

	\sem{[u,S],\unlock(a)}^\sharp_\ModWCL\Get_\ModWC
		&=&	\Let\;(W^\sharp,P^\sharp,\sigma^\sharp) = \Get_\ModWC\,[u,S]\;\In	\\
		& & \Let\;P^{\sharp''} = \{ g \mapsto P^\sharp\,g \sqcup \{S \setminus \{a\}\} \mid g \in \G \}\;\In\\
    & & \Let\;\rho^\sharp = \{ [g,a,S \setminus \{a\},w] \mapsto \sigma^\sharp\,g \mid g \in \G, w'\in W^\sharp\,g, w' \subseteq w \}\;\In\\
		& &	(\rho^\sharp, (W^\sharp,P^{\sharp''},\sigma^\sharp))
\end{array}
\]
Let $\Get_\ModWC\,[u,S] = (W^\sharp,P^\sharp,\sigma^\sharp)$ and
$\Get_\ModWC\,[v,S\setminus\{a\}] = (W^{\sharp'},P^{\sharp'},\sigma^{\sharp'})$ the
value provided by $\Get_\ModWC$ for the end point of the given control-flow edge and lockset.
Since $\Get_\ModWC$ is a post-solution of $\C_\ModWC$,
$W^{\sharp} \sqsubseteq W^{\sharp'}$, $P^{\sharp''} \sqsubseteq P^{\sharp'}$, and $\sigma^{\sharp} \sqsubseteq \sigma^{\sharp'}$
hold.
Then, by definition:
\[
\begin{array}{lll}
\Get\CWCU[v,S\setminus\{a\}] = \gamma_{v,S\setminus\{a\}}(W^{\sharp'},P^{\sharp'},\sigma^{\sharp'}) &=&
    \{ t \in \T_{S\setminus\{a\}} \mid \loc\,t=v, \beta\,t = (W,P,\sigma),\\
    && \quad W \sqsubseteq W^{\sharp'}, P \sqsubseteq P^{\sharp'}, \sigma \in \gamma\,\sigma^{\sharp'} \}
\end{array}
\]
For every trace $t \in \Get^{i-1}\,[u,S]$, let $\beta\,t = (W,P,\sigma)$. By induction hypothesis,
$W \sqsubseteq W^\sharp$, $P \sqsubseteq P^\sharp$, and
$\sigma \subseteq \gamma_\D\circ\sigma^\sharp$. Let $t'= \sem{e}\,\{t\}$,
then $\loc(t') = v$, $t' \in \T_{S\setminus\{a\}}$, and
\[
\begin{array}{lll}
  \beta\,t' &=& (W',P',\sigma') \text{ where: }\\[1ex]

  W' &=& \{ g \mapsto \{ L_{t'}[\bar u'] \} \mid g \in \G, (\_,g=x, \bar u ') = \lTlW_{g}\,t'\}\\
  && \cup \; \{ g \mapsto \emptyset \mid g \in \G, \bot  = \lTlW_{g}\,t'\ \}\\
  &=& W\\[1ex]

  P' &=&  \{ g \mapsto \minLSince\,t'\,\bar u' \mid g \in \G, (\_g=x, \bar u ') = \lTlW_{g}\,t'\}\\
  && \cup \; \{ g \mapsto \{\emptyset\} \mid g \in \G, \bot  = \lTlW_{g}\,t'\ \}\\
  &=& \{ g \mapsto ((\minLSince\,t\,\bar u') \sqcup \{S\setminus\{a\}\}) \mid g \in \G, (\_,g=x, \bar u ') = \lTlW_{g}\,t\}\\
  && \cup \; \{ g \mapsto \{\emptyset\} \mid g \in \G, \bot  = \lTlW_{g}\,t'\ \}\\
  &=& \{ g \mapsto (P\,g \sqcup \{S \setminus \{a\}\}) \mid g \in \G \}\\[1ex]

  \sigma' &=&  \{ x \mapsto t'(x) \mid x \in \X \} \cup \{ g' \mapsto \bot \mid g' \in \G, \bot = \lTlW_{g'}\,t'\} \\
  && \cup \; \{ g' \mapsto \sigma_{j-1}\,x  \mid g' \in \G, ((j-1,u_{j-1},\sigma_{j-1}),g'=x, \_) = \lTlW_{g'}\,t'\} \\
  &=& \sigma
\end{array}
\]
Thus,
\[
\begin{array}{lll}
  W' &=& W \sqsubseteq W^\sharp \sqsubseteq W^{\sharp'}\\
  P' &=& \{ g \mapsto (P\,g \sqcup \{S \setminus \{a\}\}) \mid g \in \G \} \\
  &\sqsubseteq& \{ g \mapsto (P^\sharp\,g \sqcup \{S \setminus \{a\}\}) \mid g \in \G \} = P^{\sharp''}\\
  &\sqsubseteq& P^{\sharp'}\\
  \sigma' &=& \sigma \in \gamma (\sigma^\sharp) \subseteq \gamma (\sigma^{\sharp'})
\end{array}
\]
Altogether, $t' \in \Get\CWCU\,[v,S\cup\{a\}]$ for all $t \in \Get^{i-1}\,[u,S]$.
We conclude that the return value of
$\sem{[u,S],\unlock(a)}\CWCU\,\Get^{i-1}$ is subsumed by the value $\Get\CWCU\,[v,S\setminus\{a\}]$.
Next, we consider the side-effects of the corresponding right-hand-side functions.
For each $g \in \G$, we distinguish two cases for $t'$:
\begin{itemize}
  \item $\lTlW _g\,t' = \bot$: Then side-effects $\{ [g,a,S\setminus \{a\},w] \mapsto \{ t' \} \mid w \subseteq \M \}$
  are caused.
  These are accounted for by construction of $\Get\CWCU$:
  \[
    t' \in \{ t \in \T_{S\setminus\{a\}} \mid \last\,t=\unlock(a), \lTlW_g\,t = \bot \} \subseteq \Get\CWCU[g,a,S\setminus \{a\},w]
  \]
  \item $\lTlW _g\,t' = ((j-1,u_{j-1},\sigma_{j-1}),g=x, \bar u ')$: Then the side-effects caused by $\C\CWCU$ and $\C_\ModWC$
  for $g$, respectively, are given by
  \[
  \begin{array}{lll}
    \rho' &=& \{ [g,a,S\setminus \{a\},w] \mapsto \{t'\} \mid L_t[\bar u'] \subseteq w \}\\
    \rho^{\sharp'} &=& \{ [g,a,S \setminus \{a\},w''] \mapsto \sigma^\sharp\,g \mid w'\in W^\sharp\,g, w' \subseteq w'' \}\\
  \end{array}
  \]
  We remark that $\sigma\,g= \{\sigma_{j-1}\,x\}\subseteq
  (\gamma_\D\circ\sigma^{\sharp})\,g$, and that since $W \sqsubseteq W^\sharp$,
  there is $w' \in W^\sharp\,g$ where $w' \subseteq L_t[\bar u']$.
\end{itemize}
Hence, all side-effects for $\unlock(a)$ of $\C\CWCU$ are accounted for in $\Get\CWCU$, and the claim holds.

\vspace{1em}
\noindent Next, for $x = \create(u_1)$:
\[
\begin{array}{lll}
  \sem{[u,S],x = \create(u_1)}\CWCU\,\Get\CWCU	&=&
		\Let\;T = \sem{e}(\Get\CWCU\,[u,S])\;\In	\\
  &&		(\{[u_1,\emptyset]\mapsto\new\,u_1\,(\Get\CWCU\,[u,S])\},T)\\[1ex]

  \sem{[u,S],x = \create(u_1)}^\sharp_\ModWCL\Get_\ModWC	&=&	\Let\;(W^\sharp,P^\sharp,\sigma^\sharp) = \Get_\ModWC\,[u,S]\;\In	\\
				& & \Let\; W^{\sharp'''} = \{ g \mapsto \emptyset \mid g \in \G \}\;\In\\
				& & \Let\; P^{\sharp'''} = \{ g \mapsto \{ \emptyset \} \mid g \in \G \} \;\In\\
				& & \Let\; i^\sharp = \nu^\sharp\,u\,(W,P,\sigma)\,u_1\;\In\\
				& & \Let\; \sigma^{\sharp'''} = \sigma \oplus (\{ \self \mapsto i^\sharp\} \cup \{ g \mapsto \bot \mid g \in \G\})\;\In\\
        & & \Let\; \sigma^{\sharp''} = \sigma \oplus \{ x \mapsto i^\sharp \}\;\In\\
        & & \Let\; \rho = \{ [u_1,\emptyset] \mapsto (W^{\sharp'''},P^{\sharp'''},\sigma^{\sharp'''}) \}\;\In\\
				& &	(\rho,(W^\sharp,P^\sharp,\sigma^{\sharp''}))
\end{array}
\]
Let $\Get_\ModWC\,[u,S] = (W^\sharp,P^\sharp,\sigma^\sharp)$ and
$\Get_\ModWC\,[v,S] = (W^{\sharp'},P^{\sharp'},\sigma^{\sharp'})$ the
value provided by $\Get_\ModWC$ for the end point of the given control-flow edge and lockset.
Since $\Get_\ModWC$ is a post-solution of $\C_\ModWC$,
$W^{\sharp} \sqsubseteq W^{\sharp'}$, $P^{\sharp} \sqsubseteq P^{\sharp'}$,
and $\sigma^{\sharp''} \sqsubseteq \sigma^{\sharp'}$
hold.
Then, by definition:
\[
\begin{array}{lll}
\Get\CWCU[v,S] = \gamma_{v,S}(W^{\sharp'},P^{\sharp'},\sigma^{\sharp'}) &=&
    \{ t \in \T_{S} \mid \loc\,t=v, \beta\,t = (W,P,\sigma),\\
    && \quad W \sqsubseteq W^{\sharp'}, P \sqsubseteq P^{\sharp'}, \sigma \in \gamma\,\sigma^{\sharp'} \}
\end{array}
\]
For every trace $t \in \Get^{i-1}\,[u,S]$, let $\beta\,t = (W,P,\sigma)$. By induction hypothesis,
$W \sqsubseteq W^\sharp$, $P \sqsubseteq P^\sharp$, and
$\sigma \subseteq \gamma_\D\circ\sigma^\sharp$. Let $t'= \sem{e}\,\{t\}$,
then $\loc(t') = v$, $t' \in \T_{S}$, and
\[
\begin{array}{lll}
  \beta\,t' &=& (W',P',\sigma') \text{ where: }\\[1ex]

  W' &=& \{ g \mapsto \{ L_{t'}[\bar u'] \} \mid g \in \G, (\_,g=x, \bar u ') = \lTlW_{g}\,t'\}\\
  && \cup \; \{ g \mapsto \emptyset \mid g \in \G, \bot  = \lTlW_{g}\,t'\ \}\\
  &=& W\\[1ex]

  P' &=&  \{ g \mapsto \minLSince\,t'\,\bar u' \mid g \in \G, (\_,g=x', \bar u ') = \lTlW_{g}\,t'\}\\
  && \cup \; \{ g \mapsto \{\emptyset\} \mid g \in \G, \bot  = \lTlW_{g}\,t'\ \}\\
  &=& P\\[1ex]

  \sigma' &=&  \{ x' \mapsto \{t'(x')\} \mid x' \in \X \} \cup \{ g \mapsto \emptyset \mid g \in \G, \bot = \lTlW_{g}\,t'\} \\
  && \cup \; \{ g \mapsto \{\sigma_{j-1}\,x'\}  \mid g \in \G, ((j-1,u_{j-1},\sigma_{j-1}),g=x', \_) = \lTlW_{g}\,t'\} \\
  &=& \sigma \oplus \{ x \mapsto \{\nu\,t\} \}
\end{array}
\]
Since by definition $\nu\,t \in\gamma_D (\nu^\sharp\,u\,(V^\sharp,L^\sharp,\sigma^\sharp)\,u_1) $, thus
\[
\begin{array}{lll}
  W' &=& W \sqsubseteq W^{\sharp} \sqsubseteq W^{\sharp'}\\
  P' &=& P \sqsubseteq P^{\sharp} \sqsubseteq P^{\sharp'}\\
  \sigma' &=& \sigma \oplus \{ x \mapsto \{\nu\,t\} \} \subseteq
    \gamma_\D\circ(\sigma^{\sharp} \oplus \{ x \mapsto \nu^\sharp\,u\,(W^\sharp,P^\sharp,\sigma^\sharp)\,u_1 \})
  = \gamma_\D\circ\sigma^{\sharp''}\\
  &\subseteq& \gamma_\D\circ\sigma^{\sharp'}
\end{array}
\]
Altogether, $t' \in \Get\CWCU\,[v,S]$ for all $t \in \Get^{i-1}\,[u,S]$.
We conclude that the return value of
$\sem{[u,S],x=\create(u_1)}\CWCU\,\Get^{i-1}$ is subsumed by the value $\Get\CWCU\,[v,S]$.
Next, we consider the side-effects of the corresponding right-hand-side functions for $t \in \Get^{i-1}\,[u,S]$:
\[
  \begin{array}{lll}
    \rho' &=& [u_1,\emptyset] \mapsto\new\,u_1\,\{ t \} \\
    \rho^{\sharp'} &=& [u_1,\emptyset] \mapsto (W^{\sharp'''},P^{\sharp'''},\sigma^{\sharp'''})\\
  \end{array}
\]
Let $t'' = \new\,u_1\,\{ t \}$. Then,
\[
\begin{array}{lll}
  \beta\,t'' &=& (W''',P''',\sigma''') \text{ where: }\\[1ex]

  W''' &=& \{ g \mapsto \{ L_{t''}[\bar u'] \} \mid g \in \G, (\_,g=x', \bar u ') = \lTlW_{g}\,t''\}\\
  && \quad \cup \; \{ g \mapsto \emptyset \mid g \in \G, \bot  = \lTlW_{g}\,t''\ \}\\
  &=& \{ g \mapsto \emptyset \mid g \in \G \}\\[1ex]

  P''' &=&  \{ g \mapsto \minLSince\,t''\,\bar u' \mid g \in \G, (\_,g=x', \bar u ') = \lTlW_{g}\,t''\}\\
  && \quad \cup \; \{ g \mapsto \{\emptyset\} \mid g \in \G, \bot  = \lTlW_{g}\,t''\ \}\\[1ex]
  &=& \{ g \mapsto \{\emptyset\} \mid g \in \G \}\\[1ex]

  \sigma''' &=&  \{ x' \mapsto \{t''(x')\} \mid x' \in \X \} \cup \{ g \mapsto \emptyset \mid g \in \G, \bot = \lTlW_{g}\,t''\} \\
  && \quad \cup \; \{ g \mapsto \{\sigma_{j-1}\,x\}  \mid g \in \G, ((j-1,u_{j-1},\sigma_{j-1}),g=x, \_) = \lTlW_{g}\,t''\} \\
  &=& \sigma \oplus (\{ \self \mapsto \{\nu\,t\} \} \cup \{ g \mapsto \emptyset \mid g \in \G \} )
\end{array}
\]
Since by definition $\nu\,t \in\gamma_D (\nu^\sharp\,u\,(W^\sharp,P^\sharp,\sigma^\sharp)\,u_1) $, thus
\[
\begin{array}{lll}
  W''' &=& W^{\sharp'''}\\
  P''' &=& P^{\sharp'''}\\
  \sigma''' &=& \sigma \oplus (\{ \self \mapsto \{\nu\,t\} \} \cup \{ g \mapsto \emptyset \} ) \\
  &\subseteq& \gamma_\D\circ(\sigma^{\sharp} \oplus
    (\{ \self \mapsto \nu^\sharp\,u\,(V^\sharp,L^\sharp,\sigma^\sharp)\,u_1 \}  \cup \{ g \mapsto \bot \}))\\
  &=& \gamma_\D\circ\sigma^{\sharp'''}
\end{array}
\]
We remark that $(W^{\sharp'''},P^{\sharp'''},\sigma^{\sharp'''}) \sqsubseteq  \Get_\ModWC\,[u_1,\emptyset]$ holds as
$\Get_\ModWC$ is a post-solution of $C_\ModWC$.
Thus,
\[
t'' \in \Get\CWCU[u_1,\emptyset] = \gamma_{u_1,\emptyset}(\Get_\ModWC\,[u_1,\emptyset])
\]
Hence, all side-effects for $x = \create(u_1)$ of $\C\CWCU$ are accounted for in $\Get\CWCU$.
This concludes the proof.
\qed
\end{proof}

\subsection{Protection-Based Reading}\label{s:soundness-protection-based}

To prove the \emph{Protection-Based Reading} analysis sound, we show that we can construct from its analysis result
a post-solution of the constraint system for the \emph{Write-Centered Reading} analysis described in \cref{s:write-centered}.

To simplify the proof, we make some minor adjustments to the constraint system $\C_\ProtectionBased$ and call this new constraint system $\C_\ModPB$.
In particular, we modify $\sem{[u,S],\unlock(a)}^\sharp$ and  $\sem{[u,S],\unlock(m_g)}^\sharp$ by introducing some additional
side-effects for unlocking.
\[
\begin{array}{lll}
\sem{[u,S],\unlock(a)}^\sharp\Get	&=&	\Let\;(P,\sigma) = \Get\,[u,S]\;\In	\\
	& &	\Let\;P' = \{g\in P\mid ((S\setminus\{a\})\cap\MM[g])\neq\emptyset\}\;\In	\\
	& &	\Let\;\rho = \{[g]\mapsto\sigma\,g\mid g \in G, (\MM[g] \setminus \{ m_g \}) \not\subseteq S \setminus \{a\}\} \\
        & &     \qquad \cup \{ [g]' \mapsto \sigma\,g \mid a\notin \MM[g] \}\;\In \\
	& &	(\rho,(P',\sigma))\\[1ex]

\sem{[u,S],\unlock(m_g)}^\sharp\Get	&=&	\Let\;(P,\sigma) = \Get\,[u,S]\;\In	\\
	& &	\Let\;P' = \{g'\in P\mid ((S\setminus\{m_g\})\cap\MM[g'])\neq\emptyset\}\;\In	\\
        & &	\Let\;\rho = \{[g']\mapsto\sigma\,g'\mid g' \in G, (\MM[g'] \setminus \{ m_{g'} \}) \not\subseteq S \setminus \{m_g\}\} \\
        & &	\qquad \cup \{ [g']' \mapsto \sigma\,g' \mid g' \in \G \}\;\In \\
	& &	(\rho,(P',\sigma))	\\[1ex]
\end{array}
\]
Compared to the original formulation, there are two changes:
\begin{enumerate}
        \item The local values of all globals $g$ that are not protected by $a$, are side-effected to $[g]'$.
        However, $[g]'$ already receives all written values immediately by means of a side-effect triggered at the $\unlock(m_g)$ directly following
        a write to $g$.
        \item The local values of globals $g$ that are no longer totally protected at the end point of the edge, i.e., where at least one of the
        mutexes $\MM[g] \setminus \{m_g\}$ is no longer held, are additionally side-effected to $[g]$. We distinguish two cases: For
        $a \in \MM[g]$, this side-effect also happens in the original formulation. Otherwise, there is a mutex $a' \in \MM[g] \setminus \{ m_g\}:
        a' \not\in S$. Since a write always happens with the full set of protecting mutexes $\MM[g]$, there must have been
        an $\unlock(a')$ since the last write at which point the value was already side-effected to $[g]$ in the original formulation as well.
\end{enumerate}
Therefore, these additional side-effects have no influence on solutions of the system and we obtain:
\begin{proposition}
The unique least solution of constraint system $\C_\ProtectionBased$ is also the unique least solution of $\C_\ModPB$.
\end{proposition}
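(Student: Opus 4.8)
The claim is that the two constraint systems $\C_\ProtectionBased$ and $\C_\ModPB$ have the same unique least solution, where $\C_\ModPB$ differs only in triggering \emph{additional} side-effects at $\unlock(a)$ and $\unlock(m_g)$ edges. The natural strategy is to show that every solution of one system is a solution of the other, which immediately forces their least solutions to coincide. Since $\C_\ModPB$ contains all the side-effects of $\C_\ProtectionBased$ plus some extra ones, any solution of $\C_\ModPB$ is trivially a solution of $\C_\ProtectionBased$. The real work is the converse: to show that every solution $\eta$ of $\C_\ProtectionBased$ already \emph{absorbs} the extra side-effects, so that $\eta$ is a solution of $\C_\ModPB$ as well.

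Concretely, I would fix an arbitrary solution $\eta$ of $\C_\ProtectionBased$ and examine, edge by edge, the two new side-effect contributions introduced in $\C_\ModPB$, verifying $\eta\,[g]' \sqsupseteq \sigma\,g$ and $\eta\,[g] \sqsupseteq \sigma\,g$ wherever the modified right-hand sides demand it. These are exactly the two bullet points sketched in the excerpt just before the proposition. For the $[g]'$ contributions, I would invoke the fact — stated in the text — that every value written to $g$ is side-effected to $[g]'$ at the $\unlock(m_g)$ immediately following the write; since $\sigma\,g$ is some such written value already recorded in the local state, it is subsumed by $\eta\,[g]'$. For the $[g]$ contributions, I would split into the two cases indicated in the text: when $a\in\MM[g]$ the side-effect to $[g]$ coincides with one already present in $\C_\ProtectionBased$; when $a\notin\MM[g]$ but $(\MM[g]\setminus\{m_g\})\not\subseteq S\setminus\{a\}$, I would argue that some protecting mutex $a'\in\MM[g]\setminus\{m_g\}$ must have been released at an earlier $\unlock(a')$ \emph{after} the last write to $g$ (using that a write to $g$ always occurs while holding the full protecting set $\MM[g]$), and that at this earlier unlock the value was already published to $[g]$ in the original system.

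I would need to make precise that the value $\sigma\,g$ held in the local state at the later unlock is still the \emph{same} value that was current at the earlier $\unlock(a')$ — i.e.\ that no intervening write to $g$ occurred without re-acquiring the released protecting mutex. This follows because a write to $g$ requires holding all of $\MM[g]$, and $a'$ has not been re-acquired; hence $g$'s local copy is unchanged, and the earlier side-effect suffices. This monotone-bookkeeping argument is the crux and the main obstacle: the rest is a direct verification that the added contributions are redundant.

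The main obstacle, therefore, is the $[g]$-side-effect argument in the case $a\notin\MM[g]$, where one must track the history of protecting-mutex releases since the last write to $g$ and confirm that the published value was already contributed. Once this is established for both unlock rules, I would conclude that $\eta$ satisfies all constraints of $\C_\ModPB$, so the two systems have identical solution sets and in particular share a unique least solution.
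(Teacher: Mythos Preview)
Your detailed argument---tracking where the value $\sigma\,g$ was previously published and why it is unchanged since---is exactly the paper's reasoning (the two numbered bullets preceding the proposition). The substance is correct.

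However, your high-level framing is too strong and does not hold. You propose to show that \emph{every} solution of $\C_\ProtectionBased$ is a solution of $\C_\ModPB$, but this is false. A post-solution $\eta$ may inflate the local component $\sigma\,g$ at some $[u,S]$ arbitrarily (post-solutions are only required to satisfy $\sqsupseteq$-constraints), without any matching inflation of $\eta\,[g]'$ or $\eta\,[g]$. Concretely, take a program with a single edge $\unlock(a)$ where $a\notin\MM[g]$ and $g$ is never written; set $\eta\,[u,S]$ to have $\sigma\,g=d\neq\bot$ but keep $\eta\,[g]'=\bot$. This is a post-solution of $\C_\ProtectionBased$ (no constraint forces $[g]'$ up), yet it violates the extra side-effect $[g]'\mapsto\sigma\,g$ introduced in $\C_\ModPB$. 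Your sentence ``since $\sigma\,g$ is some such written value already recorded in the local state'' presupposes exactly the property that fails for arbitrary post-solutions: that $\sigma\,g$ arose from an actual write.

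The fix is simple: restrict the argument to the \emph{least} solution $\eta_\ProtectionBased$. One direction is unchanged ($\eta_\ModPB$ is a solution of $\C_\ProtectionBased$, hence $\eta_\ProtectionBased\sqsubseteq\eta_\ModPB$). For the other, argue---by induction over the fixpoint iteration, or equivalently over abstract execution paths---that in $\eta_\ProtectionBased$ every value occurring as $\sigma\,g$ at any reachable $[u,S]$ has already been contributed to $[g]'$ (and, under the stated lockset condition, to $[g]$) by an earlier constraint. Your bookkeeping about the earlier $\unlock(a')$ and the invariance of $\sigma\,g$ since then is precisely what this induction needs; it just has to be stated for $\eta_\ProtectionBased$, not for an arbitrary $\eta$. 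With that adjustment the argument goes through and matches the paper.
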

Let us introduce a concretization mapping $\gamma$ defined by
\[
\begin{array}{lll}
\gamma\,(P,\sigma) &=& (W',P',\sigma') \qquad\text{where}	\\
        W' &=& \{g\mapsto\{\MM[g]\}\mid g\in\G\}		\\
        P' &=& \{g\mapsto \Iif\,g\in P\,\Then\,
                \{\{a\} \mid a \in \MM[g] \}\,\Eelse\,\{\emptyset\}\mid g\in\G\}	\\
	\sigma' &=& \sigma
\end{array}
\]
Moreover, we introduce a \emph{description relation} $\R_\ModPB$ between the sets of unknowns of the constraint systems
$\C_\WriteCentered$ and $\C_\ModPB$, respectively by
\[
\begin{array}{rlll}
\relax[u,S]	&\R_\ModPB& [u,S]	&\text{for all}\;u\in\N, S\subseteq\M	\\
\relax[g,a,S,w]	&\R_\ModPB& 	[g] &\text{for all}\;a\in \MM[g]\setminus\{m_g\}\\
\relax[g,a,S,w]	&\R_\ModPB& 	[g] &\text{for all}\;a\in\{m_g\}\cup(\M\setminus\MM[g]),( \MM[g] \setminus\{m_g\})\not\subseteq S\\
\relax[g,a,S,w]	&\R_\ModPB& 	[g]'&\text{for all}\;a\in\{m_g\}\cup(\M\setminus \MM[g]),(\MM[g] \setminus\{m_g\})\subseteq S\\
\end{array}
\]
\noindent where $g\in\G$, $\MM[g] \subseteq w$.

Let $\Get_\ModPB$ be the unique least solution of the constraint system $\C_\ModPB$.
We construct a mapping $\Get_\WriteCentered$ for the constraint system $\C_\WriteCentered$ from $\Get_\ModPB$
by $\Get_\WriteCentered[u,S] = \gamma(\Get_\ModPB[u,S])$
for all program points $u \in \N$ and locksets $S\subseteq\M$.
Moreover, we set
\[
\begin{array}{lll}
\Get_\WriteCentered\,[g, a, S, w] &=&  \left\{ \begin{array}{ll}
\Get_\ModPB\,[g]'	&\text{if}\; a\in\{m_g\}\cup(\M\setminus\MM[g])\land(\MM[g]\setminus\{m_g\})\subseteq S \land \MM[g] \subseteq w\\
\bot            &\text{if}\; \MM[g] \not\subseteq w \\
\Get_\ModPB\,[g] 	&\text{otherwise}\;  
\end{array}
        \right.
\end{array}
\]
\begin{theorem}\label{t:protection-based}
Then we have:
\begin{enumerate}
\item	$\Get_\ModPB\,[g]\sqsubseteq\Get_\ModPB\,[g]'$ holds for all $g\in\G$;
\item	$\Get_\WriteCentered$ is a post-solution of $\C_\WriteCentered$;
\item	Whenever $[g,a,S,w]\,\R_\ModPB\,[g]'$, then $\Get_\WriteCentered[g,a,S,w]\sqsubseteq\Get_\ModPB\,[g]'$;
\item	Whenever $[g,a,S,w]\,\R_\ModPB\,[g]$, then $\Get_\WriteCentered[g,a,S,w]\sqsubseteq\Get_\ModPB\,[g]$.
\end{enumerate}
\end{theorem}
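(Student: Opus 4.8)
The plan is to establish the four items in the order (1), (3)–(4), (2), since item (2) is the substantive claim and uses the others. Throughout I would exploit that $\Get_\ModPB$ is a (least) solution of $\C_\ModPB$ and that the abstract-to-abstract map $\gamma$ introduced above is monotone (with the must-component $P$ carried in its dual order, so that larger written-sets are lower); applying $\gamma$ to the constraints satisfied by $\Get_\ModPB$ then yields the inequalities needed for the $\C_\WriteCentered$ post-solution check. In contrast to the soundness proofs for \emph{Lock-Centered} and \emph{Write-Centered Reading}, no fixpoint induction over concrete approximations is required here: items (2)–(4) relate two \emph{abstract} post-solutions, so it suffices to verify each $\C_\WriteCentered$ constraint locally.

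For item (1) I would argue by a simple invariant on $\Get_\ModPB$: the private copy $\sigma\,g$ in any reachable local state is bounded by $\Get_\ModPB\,[g]'$, because each write to $g$ is immediately wrapped by $\lock(m_g);\dots;\unlock(m_g)$, whose $\unlock(m_g)$ rule publishes the written value to $[g]'$, and $\sigma\,g$ is not altered between such publications. As every side-effect to $[g]$ merely transmits the current private copy $\sigma\,g$, it follows that $\Get_\ModPB\,[g]\sqsubseteq\Get_\ModPB\,[g]'$. Items (3) and (4) are then almost definitional: I would match the three clauses defining $\Get_\WriteCentered\,[g,a,S,w]$ against the defining conditions of $\R_\ModPB$. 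For $[g,a,S,w]\,\R_\ModPB\,[g]'$ the first clause applies verbatim (as $\MM[g]\subseteq w$ in the relation), giving equality; for $[g,a,S,w]\,\R_\ModPB\,[g]$ the defining conditions are mutually exclusive with that first clause, so the ``otherwise'' clause applies and equality again holds.

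For item (2) I would check the $\C_\WriteCentered$ constraint edge by edge. The \emph{init}, \emph{create}, \emph{lock}, guard/local-computation, and \emph{write} edges are routine: $\gamma$ preserves the $\sigma$-component, maps $P$ faithfully, and synthesizes $W\,g=\{\MM[g]\}$, so the only facts needed are that $\emptyset$ is least in $\UM$ and that $\MM[g]\subseteq S$ holds at every write to $g$ (whence both $\{S\}\sqsubseteq\{\MM[g]\}$ and $\{S\}\sqsubseteq\{\{a'\}\mid a'\in\MM[g]\}$). For \emph{unlock} edges I would separately treat the local-state contribution — where the update $P\,g\sqcup\{S\setminus\{a\}\}$ stays below the target because $g$ is dropped from the protecting set $P$ of $\C_\ModPB$ exactly when $(S\setminus\{a\})\cap\MM[g]=\emptyset$ (making the target the top of $\UM$), while otherwise $S\setminus\{a\}$ already contains a protecting mutex — and the side-effects, where the value $\sigma\,g$ sent by $\C_\WriteCentered$ to $[g,a,S\setminus\{a\},\MM[g]]$ must already reside in whichever of $\Get_\ModPB\,[g],\Get_\ModPB\,[g]'$ the construction assigns to that unknown. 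This I would verify by case analysis on $a$ (namely $a=m_g$; $a=m_{g'}$ for $g'\neq g$; or $a$ a non-atomicity mutex, each split on whether $a\in\MM[g]$), checking in each case that the matching $\unlock$-rule of $\C_\ModPB$ indeed emits the required side-effect.

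The main obstacle is the \emph{read} edge $x=g$, where I must show that the restricted join over the unknowns $[g,a,S',w]$ admitted by the \emph{Write-Centered} read rule is subsumed by what \emph{Protection-Based Reading} reads. Taking $P\,g$ and $\sigma$ from $\gamma(\Get_\ModPB\,[u,S])$, I would match the three regimes of the $\C_\ModPB$ read rule. If $g\in P$, so $P\,g=\{\{a'\}\mid a'\in\MM[g]\}$, then since every contributing unknown has $\MM[g]\subseteq w$, the side-condition $\exists S''\in P\,g:\,S''\cap w=\emptyset$ fails for all $S''$, the join is empty, and the read reduces to $\sigma\,g$, as in \emph{Protection-Based}. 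If only $m_g$ is held among $\MM[g]$, I invoke item (1) to bound every admitted unknown, whether of $[g]$- or $[g]'$-type, by $\Get_\ModPB\,[g]'$. The crux is the protected case, where some real protecting mutex $a_0\in(\MM[g]\setminus\{m_g\})\cap S$ is held: the disjointness side-condition $S\cap S'=\emptyset$ forces $a_0\notin S'$, hence $(\MM[g]\setminus\{m_g\})\not\subseteq S'$, so no admitted unknown can fall into the $[g]'$-clause of the construction; every read value is therefore bounded by $\Get_\ModPB\,[g]$, exactly as read by \emph{Protection-Based Reading}. This interplay between the disjointness condition and the held protecting lock is the decisive step.
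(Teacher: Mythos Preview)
Your proposal is correct and follows essentially the same route as the paper: a local, edge-by-edge verification that $\Get_\WriteCentered$ satisfies each $\C_\WriteCentered$ constraint, with the read case split into the same three regimes ($g\in P$; only $m_g$ held; a real protecting mutex held) and the unlock side-effects handled by the same case analysis on~$a$ against the relation~$\R_\ModPB$. Your write-up is in fact more explicit than the paper on the decisive point in the protected-read case---that $S\cap S'=\emptyset$ together with a held $a_0\in\MM[g]\setminus\{m_g\}$ forces $(\MM[g]\setminus\{m_g\})\not\subseteq S'$ and hence rules out the $[g]'$-clause---which the paper merely abbreviates as ``by construction of $\Get_\WriteCentered$''.
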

Given that, by \cref{t:write-centered}, each post-solution of constraint system $\C_\WriteCentered$ is sound w.r.t.\ the concrete trace
semantics, this proposition implies that also the unique least solution of $\C_\ModPB$ is sound, and by extension the unique least solution of the
constraint system $\C_\ProtectionBased$ for \emph{Protection-Based Reading} described in \cref{s:protection-based}.

\begin{proof}
The proof is by verifying for each edge $(u,A,v)$ of the control-flow graph, each possible lockset $S$,
and $\Get_\WriteCentered$ constructed above that $\sem{[u,S],A}_\WriteCentered^\sharp\Get_\WriteCentered \sqsubseteq(\Get_\WriteCentered,\Get_\WriteCentered\,[v,S'])$ holds.
We exemplify this for the $\unlock(a)$ and $x=g$ operations.

We distinguish $\unlock(m_g)$ where $m_g \in \{ m_{g'} \mid g' \in \G \}$ and
$\unlock(a)$, $a\in \M \setminus \{m_g \mid g \in \G \}$. For $\unlock(m_g)$
\[
\begin{array}{lll}
\sem{[u,S],\unlock(m_g)}^\sharp_\WriteCentered\,\eta_\WriteCentered
		&=&	\Let\;(W_\WriteCentered,P_\WriteCentered,\sigma_\WriteCentered) = \Get_\WriteCentered\,[u,S]\;\In	\\
		& & \Let\;P'_\WriteCentered = \{ g' \mapsto P_\WriteCentered\,g' \sqcup \{S \setminus \{m_g\}\} \mid g' \in \G \}\;\In\\
		& & \Let\;\rho_\WriteCentered = \{ [g,m_g,S \setminus \{m_g\},w] \mapsto \sigma_\WriteCentered\,g \mid w\in W_\WriteCentered\,g \}\;\In\\
		& &	(\rho_\WriteCentered, (W_\WriteCentered,P_\WriteCentered',\sigma_\WriteCentered))
\end{array}
\]
Let $\Get_\ModPB\,[u,S] = (P_\ModPB,\sigma_\ModPB)$ and for $S'=S\setminus\{m_g\},$ $\Get_\ModPB\,[v,S'] = (P_\ModPB',\sigma_\ModPB')$ the
value provided by $\Get_\ModPB$ for the end point of the given control-flow edge.
In particular, $\gamma(P_\ModPB',\sigma_\ModPB') = (W_\WriteCentered'',P_\WriteCentered'',\sigma_\WriteCentered'')$,
where by definition,
\[
\begin{array}{lll}
        W_\WriteCentered'' &=& \{g'\mapsto\{\MM[g']\}\mid g'\in\G\}		\\
        P_\WriteCentered'' &=& \{g'\mapsto \Iif\,g'\in P_\ModPB'\,\Then\,
                \{\{a\} \mid a \in \MM[g'] \}\,\Eelse\,\{\emptyset\}\mid g'\in\G\}	\\
	\sigma_\WriteCentered'' &=& \sigma_\ModPB'
\end{array}
\]

Let $G$ denote the set of all $g'\in\G$ so that $S\cap\MM[g']=\{m_g\}$, i.e., no further protecting mutex of $g$ is held after the unlock operation.
In this case,
$P_\ModPB'\subseteq P_\ModPB\setminus G$ where in particular, for each $g'\in P_\ModPB'$, $S'\cap\MM[g']\neq\emptyset$.
By definition of $\Get_\WriteCentered$,
\[
	\begin{array}{lll}
        P_\WriteCentered &=& \{g'\mapsto \Iif\,g'\in P_\ModPB\,\Then\,
                \{\{a\} \mid a \in \MM[g'] \}\,\Eelse\,\{\emptyset\}\mid g'\in\G\}\quad\text{and therefore,}	\\
	P'_\WriteCentered &=& \{g'\mapsto (P_\WriteCentered\,g') \sqcup S'\mid g'\in\G\} \\
             &=& \{g'\mapsto \Iif\,g'\in P_\ModPB\,\Then\,
                \{\{a\} \mid a \in \MM[g'] \}\sqcup S'\,\Eelse\,\{\emptyset\}\sqcup S'\mid g'\in\G\}\\
             &=& \{g'\mapsto \Iif\,g'\in P_\ModPB\,\Then\,
                \{\{a\} \mid a \in \MM[g'] \}\sqcup S'\,\Eelse\,\{\emptyset\}\mid g'\in\G\}\quad\text{and therefore,}	\\
             &\sqsubseteq& \{g'\mapsto \Iif\,g'\in P_\ModPB'\,\Then\,
                \{\{a\} \mid a \in \MM[g'] \}\,\Eelse\,\{\emptyset\}\mid g'\in\G\}	\\
	     &=& P''_\WriteCentered
	\end{array}
\]
Moreover, $\sigma_\ModPB = \sigma_\WriteCentered \sqsubseteq \sigma_\WriteCentered'' = \sigma_\ModPB'$ and
$W_\WriteCentered = W_\WriteCentered''$. We conclude that the return value of
$\sem{[u,S],\unlock(m_g)}^\sharp_\WriteCentered\\,\eta_\WriteCentered$ is subsumed by the value $\Get_\WriteCentered[v,S']$.
It remains to check that also the side-effects produced by this constraint of $\C_\WriteCentered$ are
subsumed by the corresponding side-effects of the concretization applied to the side-effects of
$\C_\ModPB$ to corresponding unknowns (relative to $\R_\ModPB$).

The side-effects in $\C_\ModPB$ and $\C_\WriteCentered$ for $\unlock(m_g)$ and the assignments $\Get_\ModPB$ and $\Get_\WriteCentered$,
respectively, are given by
\[
\begin{array}{llll}
\rho_\ModPB &=&	\{[g']\mapsto\sigma_\ModPB\,g'\mid g' \in G, (\MM[g'] \setminus \{ m_{g'} \}) \not\subseteq S'\}  \cup \{ [g']' \mapsto \sigma_\ModPB\,g' \mid g' \in \G \} &\\[1ex]

\rho_\WriteCentered	&=& 	\{ [g',m_g,S',w]\mapsto \sigma_\WriteCentered\,g' \mid g' \in \G, w\in W_\WriteCentered\,g' \}
\end{array}
\] where $W_\WriteCentered\,g' = \{ \MM[g'] \}$ by definition.
For the side-effects of $\C_\WriteCentered$ we distinguish for each $g' \in G$ two separate cases:
\begin{itemize}
        \item $m_g \in \{m_{g'}\} \cup (\M \setminus \MM[g'])$ and $(\MM[g'] \setminus \{m_{g'}\}) \not\subseteq S'$: In this case,
        $[g',m_g,S',\MM[g']]\,\R_\ModPB\,[g']$ holds and the side-effect is accounted for by the corresponding side-effect to $\relax[g']$
        of $\C_\ModPB$.
        \item $m_g\in \{m_{g'}\} \cup (\M \setminus \MM[g'])$ and $(\MM[g'] \setminus \{ m_{g'} \}) \subseteq S'$: In this case
        $[g',m_g,S',\MM[g']]\,\R_\ModPB\,[g']'$ holds and the side-effect is also accounted for by the corresponding side-effect
        to $\relax[g']'$ of $\C_\ModPB$.
\end{itemize}
With the observations that $m_g \in \MM[g'] \setminus \{m_{g'}\}$ is a contradiction and $\sigma_\ModPB\,g =\sigma_\WriteCentered\,g$
all side-effects are accounted for, and the claim holds.

Now consider an $\unlock$ operation for some $a\in\M\setminus\{m_g\mid g\in\G\}$.
\[
\begin{array}{lll}
\sem{[u,S],\unlock(a)}^\sharp_\WriteCentered\,\eta_\WriteCentered
		&=&	\Let\;(W_\WriteCentered,P_\WriteCentered,\sigma) = \Get_\WriteCentered\,[u,S]\;\In	\\
		& & \Let\;P'_\WriteCentered = \{ g \mapsto P_\WriteCentered\,g \sqcup \{S \setminus \{a\}\} \mid g \in \G \}\;\In\\
		& & \Let\;\rho_\WriteCentered = \{ [g,a,S \setminus \{a\},w] \mapsto \sigma\,g \mid g \in \G, w\in W\,g \}\;\In\\
		& &	(\rho_\WriteCentered, (W_\WriteCentered,P_\WriteCentered',\sigma))
\end{array}
\]
Let $\Get_\ModPB\,[u,S] = (P_\ModPB,\sigma)$ and for $S'=S\setminus\{a\},$ $\Get_\ModPB\,[v,S'] = (P_\ModPB',\sigma')$ the
value provided by $\Get_\ModPB$ for the end point of the given control-flow edge.
In particular, $\gamma(P_\ModPB',\sigma') = (W_\WriteCentered'',P_\WriteCentered'',\sigma'')$,
where by definition,
\[
\begin{array}{lll}
        W_\WriteCentered'' &=& \{g\mapsto\{\MM[g]\}\mid g\in\G\}		\\
        P_\WriteCentered'' &=& \{g\mapsto \Iif\,g\in P_\ModPB'\,\Then\,
                \{\{a'\} \mid a' \in \MM[g] \}\,\Eelse\,\{\emptyset\}\mid g\in\G\}	\\
	\sigma'' &=& \sigma'
\end{array}
\]
The return value of
$\sem{[u,S],\unlock(a)}^\sharp_\WriteCentered\,\eta_\WriteCentered$ is subsumed by the value $\Get_\WriteCentered[v,S']$ by the same argument as for
$\unlock(m_g)$.
It remains to check that also the side-effects produced by this constraint of $\C_\WriteCentered$ are
subsumed by the corresponding side-effects of the concretization applied to the side-effects of
$\C_\ModPB$ to corresponding unknowns (relative to $\R_\ModPB$).

The side-effects in $\C_\ModPB$ and $\C_\WriteCentered$ for $\unlock(a)$ and the assignments $\Get_\ModPB$ and $\Get_\WriteCentered$,
respectively, are given by:
\[
\begin{array}{lllll}
\rho_\ModPB	&=& \{[g]\mapsto\sigma_\ModPB\,g\mid g \in G, (\MM[g] \setminus \{ m_g \}) \not\subseteq S' \}&\\
&&  \cup \{ [g]' \mapsto \sigma_\ModPB\,g \mid a\notin \MM[g] \} &\;\text{for}\quad \Get_\ModPB\,[u,S]=(P_\ModPB,\sigma_\ModPB)\\
\rho_\WriteCentered	&=& 	\{ [g,a,S',w]\mapsto \sigma_\WriteCentered\,g \mid g \in \G, w\in W_\WriteCentered\,g \}
\end{array}
\] where $W_\WriteCentered\,g = \{ \MM[g] \}$ by definition.
For the side-effects of $\C_\WriteCentered$ we distinguish for each $g \in G$ three separate case:
\begin{itemize}
        \item $a\in\MM[g]$: In this case, $[g,a,S',\MM[g]]\,\R_\ModPB\,[g]$ holds and the side-effect is accounted for by the corresponding
        side-effect to $\relax[g]$ of $\C_\ModPB$.
        \item $a \in (\M \setminus \MM[g])$ and $(\MM[g] \setminus \{m_g\}) \not\subseteq S'$: In this case
        $[g,a,S',\MM[g]]\,\R_\ModPB\,[g]$ holds and the side-effect is accounted for by the corresponding side-effect to $\relax[g]$
        of $\C_\ModPB$.
        \item $a\in (\M \setminus \MM[g])$ and $(\MM[g] \setminus \{m_g\}) \subseteq S'$: In this case
        $[g,a,S',\MM[g]]\,\R_\ModPB\,[g]'$ holds and the side-effect is also accounted for by the corresponding side-effect
        to $\relax[g]'$ of $\C_\ModPB$.
\end{itemize}
Additionally, $\sigma_\ModPB\,g =\sigma_\WriteCentered\,g$. Accordingly, the claim holds.

\medskip
\noindent Now consider a read from a global $x=g$.
\[
\begin{array}{lll}
        \sem{[u,S],x=g}^\sharp_\WriteCentered\Get_\WriteCentered	&=&	\Let\;(W_\WriteCentered,P_\WriteCentered,\sigma_\WriteCentered) = \Get_\WriteCentered\,[u,S]\;\In	\\
			& &\Let\;d = \sigma_\WriteCentered\, g \sqcup\bigsqcup\{\eta[g,a,S',w] \mid  a\in S, S\cap S' =\emptyset, \\
			& & \qquad \exists S'' \in P\,g: S'' \cap w = \emptyset, \\
			& & \qquad \exists S''' \in P\,g: a \notin S''' \}\;\In\\
                        & &\Let\; \sigma_\WriteCentered' = \sigma_\WriteCentered\oplus\{x\mapsto d\} \;\In\\
			& &	(\emptyset,(W_\WriteCentered,P_\WriteCentered,\sigma_\WriteCentered'))
\end{array}
\]

Let $\Get_\ModPB [u,S] = (P_\ModPB,\sigma_\ModPB)$ and $\Get_\ModPB [v,S] = (P_\ModPB',\sigma_\ModPB')$ the value provided by $\Get_\ModPB$ for the end point of the given control flow edge.
In particular, $\gamma(P_\ModPB',\sigma_\ModPB') = (W_\WriteCentered'',P_\WriteCentered'',\sigma_\WriteCentered'')$,
where by definition,
\[
\begin{array}{lll}
        W_\WriteCentered'' &=& \{g'\mapsto\{\MM[g']\}\mid g'\in\G\}		\\
        P_\WriteCentered'' &=& \{g'\mapsto \Iif\,g'\in P_\ModPB'\,\Then\,
                \{\{a\} \mid a \in \MM[g'] \}\,\Eelse\,\{\emptyset\}\mid g'\in\G\}	\\
	\sigma_\WriteCentered'' &=& \sigma_\ModPB' \\
\end{array}
\]
Since neither the constraints in $\C_\ModPB$ nor in $\C_\WriteCentered$ modify $P$,
$P_\ModPB \sqsubseteq P_\ModPB'$ and hence $P_\WriteCentered \sqsubseteq P_\WriteCentered''$. Also, $W_\WriteCentered = W_\WriteCentered'$.

\emph{Case 1.} $g \in P_\ModPB$, hence $\sigma_\ModPB \oplus \{ x \mapsto \sigma_\ModPB\,g \} \sqsubseteq \sigma_\ModPB'= \sigma_\WriteCentered''$,
$P_\WriteCentered\,g = \{\{a'\} \mid a' \in \MM[g] \}$,
\[
\begin{array}{lll}
\sigma_\WriteCentered' &=& \sigma_\WriteCentered\oplus\{x\mapsto \sigma_\WriteCentered\, g \sqcup\bigsqcup\{\Get_\WriteCentered[g,a,S',w] \mid  a\in S, S\cap S' =\emptyset,\\
        & & \qquad \exists S'' \in P_\WriteCentered\,g: S'' \cap w = \emptyset, \\
        & & \qquad \exists S''' \in P_\WriteCentered\,g: a \notin S''' \}\\
        &=&  \sigma_\WriteCentered\oplus\{x\mapsto \sigma_\WriteCentered\, g \sqcup\bigsqcup\{\Get_\WriteCentered[g,a,S',w] \mid  a\in S, S\cap S' =\emptyset,\\
        & & \qquad \exists S''  \in \{\{a'\} \mid a' \in \MM[g] \}: S'' \cap w = \emptyset, \\
        & & \qquad \exists S''' \in \{\{a'\} \mid a' \in \MM[g] \}: a \notin S''' \}  \quad \text{by construction of } \Get_\WriteCentered\\
        &=& \sigma_\WriteCentered\oplus\{x\mapsto \sigma_\WriteCentered\, g \sqcup \bot \}\\
        &\sqsubseteq& \sigma_\WriteCentered''
\end{array}
\]

\emph{Case 2.} $S \cap \MM[g] = \{m_g\} \land g \not\in P_\ModPB$, hence $\sigma_\ModPB \oplus \{ x \mapsto \sigma_\ModPB\,g \sqcup  \Get_\ModPB[g]' \}
\sqsubseteq \sigma_\ModPB'= \sigma_\WriteCentered''$, $P_\WriteCentered\,g = \{\emptyset\}$,
\[
\begin{array}{lll}
\sigma_\WriteCentered' &=& \sigma_\WriteCentered\oplus\{x\mapsto \sigma_\WriteCentered\, g \sqcup\bigsqcup\{\Get_\WriteCentered[g,a,S',w] \mid  a\in S, S\cap S' =\emptyset,\\
        & & \qquad \exists S'' \in P_\WriteCentered\,g: S'' \cap w = \emptyset, \\
        & & \qquad \exists S''' \in P_\WriteCentered\,g: a \notin S''' \}\\
        &=&  \sigma_\WriteCentered\oplus\{x\mapsto \sigma_\WriteCentered\, g \sqcup\bigsqcup\{\Get_\WriteCentered[g,a,S',w] \mid  a\in S, S\cap S' =\emptyset \}  \quad \text{by construction of } \Get_\WriteCentered\\
        &=& \sigma_\WriteCentered\oplus\{x \mapsto \sigma_\WriteCentered\, g \sqcup (\Get_\ModPB[g]' \sqcup \Get_\ModPB[g])\} \quad \text{by } \Get_\ModPB\,[g]\sqsubseteq\Get_\ModPB\,[g]'\\
        &=&  \sigma_\WriteCentered\oplus\{x \mapsto \sigma_\WriteCentered\, g \sqcup \Get_\ModPB[g]'\}\\
        &\sqsubseteq& \sigma_\WriteCentered''
\end{array}
\]

\emph{Case 3.} $S \cap \MM[g] \neq \{m_g\} \land g \not\in P_\ModPB$, hence $\sigma_\ModPB \oplus \{ x \mapsto \sigma_\ModPB\,g \sqcup  \Get_\ModPB[g] \}
\sqsubseteq \sigma_\ModPB'= \sigma_\WriteCentered''$, $P_\WriteCentered\,g = \{\emptyset\}$,
\[
\begin{array}{lll}
\sigma_\WriteCentered' &=& \sigma_\WriteCentered\oplus\{x\mapsto \sigma_\WriteCentered\, g \sqcup\bigsqcup\{\Get_\WriteCentered[g,a,S',w] \mid  a\in S, S\cap S' =\emptyset,\\
        & & \qquad \exists S'' \in P_\WriteCentered\,g: S'' \cap w = \emptyset, \\
        & & \qquad \exists S''' \in P_\WriteCentered\,g: a \notin S''' \}\\
        &=&  \sigma_\WriteCentered\oplus\{x\mapsto \sigma_\WriteCentered\, g \sqcup\bigsqcup\{\Get_\WriteCentered[g,a,S',w] \mid  a\in S, S\cap S' =\emptyset \}
          \quad \text{by construction of } \Get_\WriteCentered\\
        &=& \sigma_\WriteCentered\oplus\{x \mapsto \sigma_\WriteCentered\, g \sqcup \Get_\ModPB[g] \}\\
        &\sqsubseteq& \sigma_\WriteCentered''
\end{array}
\]
We conclude that the return value of
$\sem{[u,S],x=g}^\sharp_\WriteCentered\,\eta_\WriteCentered$ is subsumed by the value $\Get_\WriteCentered[v,S]$ and since the constraint causes no side-effects, the claim holds.
\end{proof}

\section{Experimental Evaluation}\label{s:experimental}
We have implemented the analyses described in the previous sections
as well as the side-effecting formulation of Miné's analysis (see~\cref{s:mine})
within the static analyzer framework \textsc{Goblint}, which analyzes C programs.
For \emph{Protection-Based Reading}, we implemented the variant that does not require prior information on the
locksets $\MM[g]$ protecting globals $g$, but instead discovers this information during the analysis.
The solvers in \textsc{Goblint} can handle the non-monotonicity in the side-effects this entails.

For experimental evaluation,
we use six multi-threaded \textsc{Posix} programs from
the \textsc{Goblint} benchmark suite\footnote{\url{https://github.com/goblint/bench}} and seven large \textsc{SV-Comp} benchmarks
in \texttt{c/ldv-linux-3.14-races/} from the \textsc{ConcurrencySafety-Main} category\footnote{\url{https://github.com/sosy-lab/sv-benchmarks}}.
The programs range from 1280 to 12778 physical LoC,
with logical LoC\footnote{Only lines with executable code, excluding struct and extern function declarations.} being between 600 and 3102.
The analyses are performed context-sensitively
with a standard points-to analysis for addresses and
inclusion/exclusion sets as the domain for integer values.
The evaluation was carried out on Ubuntu 20.04.1
and \textsc{OCaml}~4.11.1, running on a standard \textsc{Amd Epyc} processor.

We analyzed each of the programs with each of the analyses where the required analysis times
are presented in \cref{fig:benchmark-time}.
On smaller programs, \emph{Protection-Based Reading} is almost twice as fast as the others,
which have very similar running times.
On larger programs, the differences are much larger: \emph{Protection-Based Reading}
there is up to an order of magnitude faster, while the running times of the remaining
analyses grow with their sophistication.

Since the analyses use different local and global domains, their precision cannot be compared
directly via the constraint system solutions.
Instead, we record and compare the observable behavior in the form of abstract values of global variables read at program locations.
Our comparison reveals that, for 11 out of 13 programs,
all analyses are equally precise.
For the remaining two programs, \texttt{pfscan} and \texttt{ypbind},
all but Min\'e's analysis are equally precise, while Min\'e's was less precise at 6\% and 16\% of global reads, respectively.

Thus our experiments indicate that \emph{Protection-Based Reading} offers sufficient precision at a
significantly shorter analysis time, while the more involved \emph{Lock-} and
\emph{Write-Centered Reading} do not offer additional precision.
Moreover, the incomparability identified in the introduction can in fact be observed on
at least some real-world programs.
%
Still, more experimentation is required as the selection of
benchmarks may be biased towards programs using quite simple protection patterns.
Also, only one particular value domain for globals was considered.
\begin{figure}[tb]
	\centering
	\pgfplotslegendfromname{benchmark-time-legend} 
	\\[2ex]
	\begin{tikzpicture}
		\pgfplotstableset{
			create on use/benchmark-loc/.style={
				create col/assign/.code={
					\getthisrow{benchmark}\benchmark
					\getthisrow{loc}\loc
					\edef\entry{\benchmark\space(\loc)}
					\pgfkeyslet{/pgfplots/table/create col/next content}\entry
				}
			}
		}
		\pgfplotstableread{
			benchmark	loc	protection	mine-W	lock	write	write+lock
			ctrace	665	1.3	2.11	2.12	2.17	2.22
			pfscan	600	1.15	2.51	2.47	2.22	2.47
			knot	987	1.29	2.02	2.07	2.04	2.09
			aget	603	1.99	3.43	3.41	3.46	3.61
		}\timetablesmall
		\pgfplotstableread{
			benchmark	loc	protection	mine-W	lock	write	write+lock
			ypbind	1035	11.6	22.54	29.61	36.38	40.72
			smtprc	3102	14.81	30.44	31.34	31.75	32.83
			iowarrior	1358	2.72	20.37	22.14	16.7	23.59
			w83977af	1515	3.61	6.57	8.06	10.06	9.94
			adutux	1509	3.35	41.91	24.27	20.91	27.49
			tegra20	1560	3.29	8.73	7.9	11.07	11.24
			marvell1	2476	7.4	32.82	37.28	36.97	46.59
			marvell2	2476	7.83	34.11	43.06	41.48	52.54
			nsc	2394	6.78	16.02	14.96	16.81	17.2
		}\timetablebig

		\begin{groupplot}[
			group style={
				group size=2 by 1,
				horizontal sep=0.9cm,
			},
			height=5cm,
			ybar=0pt,
			enlarge y limits=upper,
			ymin=0,
			xtick=data,
			x tick label style={
				rotate=45,
				anchor=east,
			},
			xtick pos=bottom,
			ymajorgrids=true,
		]
			\nextgroupplot[
				width=0.395\textwidth,
				enlarge x limits=0.23,
				ybar legend,
				bar width=3pt,
				legend to name=benchmark-time-legend,
				legend columns=-1,
				symbolic x coords={pfscan (600),aget (603),ctrace (665),knot (987)},
				ytick distance=1,
				ylabel={Analysis time [s]},
				ylabel near ticks,
			]
				\addplot table [x=benchmark-loc,y=protection] {\timetablesmall};
				\addlegendentry{\hyperref[s:protection-based]{Protection-Based}};
				\addplot table [x=benchmark-loc,y=mine-W] {\timetablesmall};
				\addlegendentry{\hyperref[s:mine]{Min\'e}};
				\addplot table [x=benchmark-loc,y=lock] {\timetablesmall};
				\addlegendentry{\hyperref[s:lock-centered]{Lock-Centered}};
				\addplot table [x=benchmark-loc,y=write] {\timetablesmall};
				\addlegendentry{\hyperref[s:write-centered]{Write-Centered}};
				\addplot table [x=benchmark-loc,y=write+lock] {\timetablesmall};
				\addlegendentry{\hyperref[s:combined]{Combined}};
			\nextgroupplot[
				width=0.7\textwidth,
				enlarge x limits=0.09,
				bar width=3pt,
				symbolic x coords={ypbind (1035),iowarrior (1358),adutux (1509),w83977af (1515),tegra20 (1560),nsc (2394),marvell1 (2476),marvell2 (2476),smtprc (3102)},
				ytick distance=15,
			]
				\addplot table [x=benchmark-loc,y=protection] {\timetablebig};
				\addplot table [x=benchmark-loc,y=mine-W] {\timetablebig};
				\addplot table [x=benchmark-loc,y=lock] {\timetablebig};
				\addplot table [x=benchmark-loc,y=write] {\timetablebig};
				\addplot table [x=benchmark-loc,y=write+lock] {\timetablebig};
		\end{groupplot}
	\end{tikzpicture}
	\caption{Analysis times per benchmark program (logical LoC in parentheses).}
	\label{fig:benchmark-time}
\end{figure}
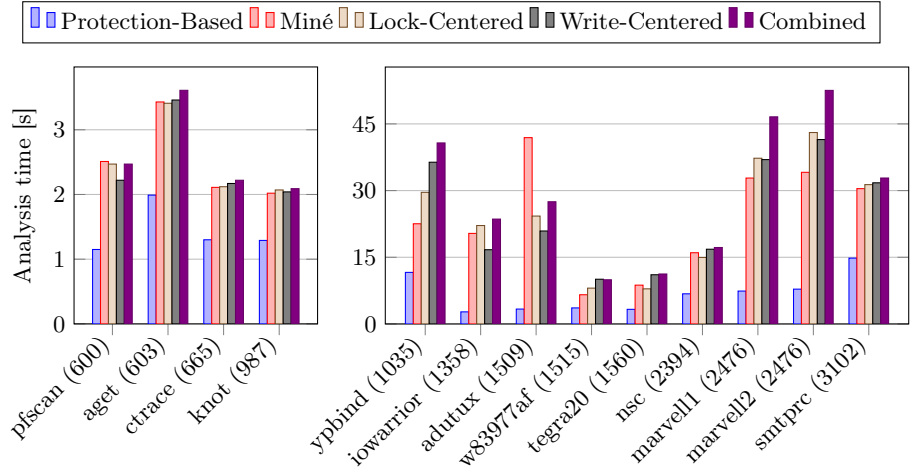

\section{Conclusion}\label{s:conclusion}

We have provided enhanced versions of the analyses by Min\'e \cite{Mine2012} as well as by
Vojdani \cite{Vojdani2010,Vojdani2016}. To Min\'e's original analysis, we added
lazy reading of globals and restricting local copies of globals to the
values written by the ego thread.
Vojdani's approach was purged of the assumption of common protecting mutexes,
while additionally, background locksets are taken into account to exclude certain written values
from being read.
For a better comparison, we relied on side-effecting constraint systems as a convenient
framework within which all analyses could be formalized.
That framework also enabled us to specify a concrete semantics of \emph{local traces} w.r.t.\
which all analyses could be proven correct.
We also provided an implementation of
all these analyses and practically compared them for precision and efficiency.
Interestingly, the simplest of all analyses still provided decent precision
while out-performing the others.

The given comparison and enhancements refer just to the first and most fundamental analysis
introduced by Min\'e. We would therefore like to address possible extensions
to \emph{relational} analyses in future work.
Also, we would like to explore how the framework can be extended
so that \emph{weak memory} effects can conveniently be taken into account.

\paragraph{Acknowledgements.} This work was supported by Deutsche Forschungsgemeinschaft (DFG)
– 378803395/2428 \textsc{ConVeY} and the Estonian Research Council grant PSG61.


\renewcommand{\doi}[1]{\textsc{doi}: \href{http://dx.doi.org/#1}{\nolinkurl{#1}}}
\bibliographystyle{splncs04nat}
\bibliography{lit}
\newpage
\appendix
\section{Side-Effecting Formulation of the Analysis by Miné}\label{s:mine}
We further detail the side-effecting formulation of Miné's original analysis from \cite{Mine2012}
adapted to our setting (non-relational, no unique thread \emph{ids}, no real-time features,
globals receive their initial values via an assignment).
On top of the mechanism already described in \cref{s:lock-centered} that handles synchronized
accesses to variables (\emph{synchronized interferences} in Miné's terminology), there also exist \emph{weak
interferences}, i.e., accesses not synchronized via some common mutex in his original setting.
Adapted to our setting such weak influences do not exist because the atomicity assumption introduced mutexes $m_g$
immediately surrounding each access to a global $g$. The weak interferences for a global $g$ thereby are stored at unknowns
$[g,m_g,S]$.
To be faithful to the analysis as proposed by Miné where such weak interferences
are only consulted at the read and not eagerly copied into the local state, locking and unlocking
some $m_g$ for $g\in\G$ does not affect the local state and the values stored at unknowns $[g,m_g,S]$
are instead taken into account when reading from or writing to a global.
We also track a set $W$ of written variables by which we restrict synchronized interferences, as is done with the help
of the weak interferences of a thread with a given thread \emph{id} in Miné's original setting.
The right-hand-side functions thus are defined as follows:
\[
\begin{array}{lll}
\init^\sharp\,\_
	&=& \Let\; \sigma = \{ x \mapsto \top \mid x \in \X \} \cup \{ g \mapsto \bot \mid g \in \G \} \;\In \\
	& & (\emptyset,(\emptyset,\sigma)) \\[1ex]
\sem{[u,S],x = \create(u_1)}^\sharp\Get
	&=&	\Let\;(W,\sigma) = \Get\,[u,S]\;\In	\\
	& & \Let\; i = \nu^\sharp\,u\,(W,\sigma)\,u_1\;\In\\
	& & \Let\; \sigma' = \sigma \oplus (\{ \self \mapsto i\} \cup \{ g \mapsto \bot \mid g \in \G\})\;\In\\
	& & \Let\; \rho = \{ [u_1,\emptyset] \mapsto (\emptyset,\sigma') \}\;\In\\
	& &	(\rho,(W,\sigma \oplus \{ x \mapsto i \}))	\\[1ex]
\sem{[u, S], g=x}^\sharp \Get
	&=&	\Let\;(W, \sigma) = \Get\,[u, S]\;\In \\
	& &	\Let\;\sigma' = \sigma \oplus \{ g \mapsto \sigma\,x \}\;\In \\
	& &	(\{ [g,m_g, S \setminus \{m_g\}] \mapsto \sigma'\,g \}, (W \cup \{g\}, \sigma')) \\[1ex]
\sem{[u, S], x=g}^\sharp \Get
	&=&	\Let\;(W, \sigma) = \Get\,[u, S]\;\In \\
	& &	\Let\;g' = \bigsqcup \{ \Get\,[g,m_g, S'] \mid S' \subseteq \M, S' \cap S = \emptyset \}\;\In \\ 
	& &	(\emptyset, (W, \sigma \oplus \{ x \mapsto \sigma\,g \sqcup g' \})) \\[1ex]
\sem{[u,S],\lock(a)}^\sharp\Get
	&=&	\Let\;(W,\sigma) = \Get\,[u,S]\;\In	\\
	& &	\Let\;\sigma' = \{ g \mapsto \bigsqcup \{ \Get[g,a, S'] \mid
	  S' \subseteq \M, S' \cap S = \emptyset \}\\
	&& \qquad \qquad \qquad \mid g \in \G\} \;\In	\\
	& &	(\emptyset, (W, \sigma \sqcup \sigma'))	\\[1ex]
\sem{[u,S],\unlock(a)}^\sharp\Get
	&=&	\Let\;(W,\sigma) = \Get\,[u,S]\;\In	\\
	& &	(\{ [g,a, S \setminus \{m\}] \mapsto \sigma\,g \mid g \in W \}, (W, \sigma))\\[1ex]
\sem{[u,S],\lock(m_g)}^\sharp\Get &=& \Get\,[u,S]\\[1ex]
\sem{[u,S],\unlock(m_g)}^\sharp\Get &=& \Get\,[u,S]
\end{array}
\] for $a \not\in \{ m_g \mid g \in \G \}$.
This is a complicated analysis; however, side-effecting constraint systems elegantly capture
the core idea in just a few lines. The weak interferences are associated with pseudo-lock $m_g$,
but a weak interference is only propagated from a write with lockset $S$ to a read with lockset $S'$
if these sets have an empty intersection. Similarly, synchronized interferences are only propagated from
an unlock to a lock if the ambient locksets permit it.

\end{document}